\newtheorem{theorem}{Theorem}[section]
\newtheorem{corollary}{Corollary}[theorem]
\newtheorem{lemma}[theorem]{Lemma}
\newtheorem{proposition}[theorem]{Proposition}
\newtheorem{example}[theorem]{Example}
\DeclareMathOperator{\Tr}{Tr}
\def\>{\rangle}
\def\<{\langle}
\def\E{ {\mathcal E} }
\def\H{ {\mathcal H} }
\def\M{ {\mathcal M} }
\def\O{ {\mathcal O} }
\def\D{ {\mathcal D} }
\def\I{ \mathbbm{1} }
\def\tr{ \mbox{tr} }
\def\p{\boldsymbol{p}}
\def\q{\boldsymbol{q}}
\def\x{\boldsymbol{x}}
\begin{document}
\pagenumbering{arabic}

\title{Decomposable coherence and quantum fluctuation relations}
\author{Erick Hinds Mingo}
\affiliation{Controlled Quantum Dynamics Theory Group, Imperial College London, Prince Consort Road, London SW7 2BW, UK}
\author{David Jennings}
\affiliation{Controlled Quantum Dynamics Theory Group, Imperial College London, Prince Consort Road, London SW7 2BW, UK}
\affiliation{Department of Physics, University of Oxford, Oxford, OX1 3PU, UK}
\affiliation{School of Physics and Astronomy, University of Leeds, Leeds, LS2 9JT, UK.}
\date{\today}

\begin{abstract}

In Newtonian mechanics, any closed-system dynamics of a composite system in a microstate will leave all its individual subsystems in distinct microstates, however this fails dramatically in quantum mechanics due to the existence of quantum entanglement. Here we introduce the notion of a `coherent work process', and show that it is the direct extension of a work process in classical mechanics into quantum theory. This leads to the notion of `decomposable' and `non-decomposable' quantum coherence and gives a new perspective on recent results in the theory of asymmetry as well as early analysis in the theory of classical random variables. Within the context of recent fluctuation relations, originally framed in terms of quantum channels, we show that coherent work processes play the same role as their classical counterparts, and so provide a simple physical primitive for quantum coherence in such systems. We also introduce a pure state effective potential as a tool with which to analyze the coherent component of these fluctuation relations, and which leads to a notion of temperature-dependent mean coherence, provides connections with multi-partite entanglement, and gives a hierarchy of quantum corrections to the classical Crooks relation in powers of inverse temperature.
\end{abstract}

\maketitle

\hspace{1cm}
\section{Introduction}

The superposition principle is at the core of what makes quantum mechanics so special \cite{Dirac,Peres,TheoryExperiments,SchrodingerCat}. Classically a particle might be located in a  microstate at any one of a number of spatial sites with sharp momentum, however quantum mechanics allows a fundamentally new kind of state -- the particle being a superposition of multiple different locations $x_1, x_2, \dots , x_n$. For this, the particle is in a state $|\psi\> = \sum_k e^{i\theta_k} \sqrt{p_k} |x_k\>$ where $\theta_k$ are phase angles, and $p_k$ is the probability of a measurement of position returning the classical outcome $x_k$. While the measurement outcomes are random, the state $|\psi\>$ is a perfectly sharp state (a pure state, or state of maximal knowledge) and so should be viewed exactly on a par with a classical state of well-defined location $|x_i\>$, say. More precisely, the structure of the state space in any physical theory determines many of the distinct characteristics of the particular theory. State spaces are always convex sets determined entirely by the extremal points of the set -- the pure states of the physical theory. The admissible measurements that the theory allows are defined in relation to this state space (see \cite{ChiribellaSpekkens} for more details) and so are secondary theoretical ingredients. Therefore a comparison of classical and quantum mechanics at a fundamental level should equate sharp microstates with pure quantum states $|\psi\>$, as opposed to say comparing measurement statistics.

In classical mechanics a system with initial position and conjugate momentum $(x_0,p_0)$ evolves in time along a well-defined trajectory $(x(t), p(t))$ in phase space determined by the Hamiltonian of the system. This evolution extends into quantum mechanics where a quantum state evolves under a unitary transformation. The unitary dynamics of such a system $S$ can also be understood in terms of paths, however it is now described in terms of a path integral \cite{Feynman} where the transition amplitudes are given by integrating the functional $\exp[\frac{i}{\hbar}\int dt (p(t)\dot{x}(t)- H_S[x(t),p(t)])]$ over all paths $(x(t), p(t))$ in phase space consistent with the boundary conditions, and where $H_S [x(t),p(t)]$ is the classical Hamiltonian for the system $S$. At the operator level this dynamics is described a unitary transformation on states $|\psi\> \rightarrow U(t) |\psi\>$, for some unitary operator $U(t)$. Thus, in both classical and quantum mechanics the Hamiltonian plays a key role in the time evolution of a system. Moreover in quantum mechanics, if the system is energetically closed then we have $[U,H_S]=0$ for the unitary evolution of the system. In both classical and quantum mechanics an initial sharp (pure) state is evolved to a final sharp state.

The path integral perspective describes this unitary evolution as a sum over all consistent paths, and so receives contributions from trajectories that respect neither energy conservation nor the classical equations of motion. However, in the ``limit of $\hbar \rightarrow 0$'' a stationary phase argument tells us that the dominant contributions to the evolution come from those trajectories around the classical phase space trajectory $(x_{\rm class}(t), p_{\rm class}(t))$, namely the one that obeys the Hamiltonian equations of motion \cite{Goldstein} given by $\dot{p}= -\partial_x H_S, \dot{x} = \partial_p H_S$. Thus we recover classical mechanics in the limit.

In the case of two quantum systems $S$ and $A$, exactly the same formalism applies, and coherent dynamics of the composite quantum system can be analysed without any further fuss -- governed by a total Hamiltonian $H_{SA}$ for the joint system $SA$. However, we can now ask the following question: 
\begin{center}
\textit{What unitary transformations of a composite quantum system $SA$ in a state $|\psi\>_{SA}$ are possible that (a) obey energy conservation over the system $SA$, but (b) give rise to a transformation of the system $S$ from an initial pure state and into a final pure state?}
\end{center}
It is clear that in the classical limit the answer to this trivial with sharp microstates -- all classical dynamics that conserve the energy $H_{SA}$ give rise to deterministic transformations of $S$. However for coherent quantum systems, which can become entangled with each other, the answer is less obvious. Indeed, as we shall see, the above is directly linked to central results from the resource theory of asymmetry \cite{Harmonic,ReferenceFrame,GourSpekkens,Vaccaro,Cristina} and foundational results from the 1930s on classical random variables \cite{Lukacs,CharacteristicFunctions,PoissonDistribution,ArithmeticProbability}.

The physical motivation for this is given by considering the elementary notion of a mechanical work process on a classical system along some trajectory. For this, a mechanical system $S$ initially in some definite phase space state $(\x_0,\p_0)$ is evolved deterministically to some final state $(\x_1,\p_1)$ and the mechanical work $w$ for the process is computed from $w=\int_{\rm path} \boldsymbol{F}\cdot \boldsymbol{dx}$, where $\boldsymbol{F}(t)$ is the force exerted on the system along the particular path \cite{Goldstein}. However, the system in this case is not energetically isolated, and so the quantity of energy $w$ is only meaningful because it corresponds to an equal and opposite change in energy of some external classical degree of freedom $A$ that is used to induce the process. For example, we could imagine some external weight of mass $m$ that is initialised at some definite height $h_0=0$ and finishes deterministically in some final height $h_1$. In order to associate the height value $h_1$ for $A$ to $w$ the work done in the process on $S$ (via $w=mgh_1$) it is necessary that energy is exactly conserved between the two systems $S$ and $A$. However, if we now wish to consider \emph{superpositions} of work processes on $S$ then this account becomes non-trivial, and so provides the starting point for our analysis.

In this paper we shall address the above question and show that the classical notion of deterministic mechanical work can be extended into quantum mechanics in a way that exactly parallels the classical case, and gives rise to connections with topics in contemporary quantum physics. Firstly, it leads to a notion of decomposable and non-decomposable coherence and has an immediate description in terms of recent frameworks for quantum coherence \cite{Harmonic,ReferenceFrame,GourSpekkens,Vaccaro,WinterCoherence,BenCoherenceFluctuation}. We then extend these considerations in noisy quantum environments and show that this notion of a coherent work process connects naturally with fluctuation relations \cite{Crooks,Evans,Jarzynski2,ColloquiumFluctuation,Aberg,Alhambra,Zoe,HyukJoon}. The coherent fluctuation theorems we develop go beyond traditional relations, such as the Crooks relation, and allow the analysis of quantum coherence in a noisy thermal environment. Because quantum coherence is handled explicitly this also leads to connections to many-body entanglement and our results provide a clean explanatory framework for recent experimental proposals in trapped ion systems \cite{Zoe}.

\subsection{Summary of the core results}

This paper's aim is to extend classical notions of work to quantum systems and provide novel tools for the analysis of quantum thermodynamics and the role that coherence plays. To a large extent, quantum thermodynamics has taken inspiration from statistical mechanics when studying notions of `work' \cite{NoGo,Allahverdyan,AspectsWork,WorkObservable}. In the present analysis, we instead revisit the \emph{deterministic} Newtonian concept of work and propose a natural coherent extension into quantum mechanics. We then show how this coherent notion of work automatically arises in the \emph{non-deterministic} context of recent fully quantum fluctuation theorems \cite{Aberg} and provides a fully quantum-mechanical account of thermodynamic processes.

In Section \ref{sec: Coherent work} we begin by defining a coherent work process
\begin{equation}
|\psi_0\>_S \stackrel{\omega}{\longrightarrow} |\psi_1\>_S,
\end{equation}
which is a deterministic primitive to describe coherent energy exchanges. The evolution is constrained to conserve energy microscopically and preserve the statistical independence of the systems. In particular, the exclusion of entanglement generating processes in this definition is a choice required in order to be the quantum-mechanical equivalent of Newtonian work involving a deterministic transition of a `weight' system. Coherent work processes are found to separate into coherently trivial and coherently non-trivial types, with the distinction appearing as a consequence of the notion of `decomposability of random variables' in probability theory \cite{CharacteristicFunctions,ArithmeticProbability}. This connection enables us to show that coherent states form a closed subset under coherent work processes. As a result, we find the classical limit of this fragment of quantum theory is equivalent to Newtonian work processes under a conservative force. After this, we explicitly provide a complete characterization of coherent work processes for quantum systems with equal level spacings, and also prove that measurement statistics on such systems always become less noisy under coherent work processes.

In Section \ref{sec: fluctuation theorem} we move to a thermodynamic regime in which we now have a thermal mixed state described by an inverse temperature $\beta = (kT)^{-1}$. We derive a quantum fluctuation theorem using an \emph{inclusive} picture, and frame the result on the athermal system.
Under the assumptions of  time-reversal invariant dynamics and  microscopic energy conservation, we derive a fluctuation theorem in terms of cumulant generating functions that has the classical Crooks result as its high-temperature limit. However we find that the quantum fluctuation theorem admits an infinite series of corrections to the Crooks form in powers of $\beta$. The higher order terms account for the coherent properties of the initial pure states of the system and gives an operational perspective on related works \cite{Aberg,HyukJoon}. There exists another natural decomposition of the fluctuation theorem in terms of `average change in energy' and `average change in coherence' with the caveat that the average coherence must be carefully interpreted. We derive a closed expression for the `average coherence' using readily interpretable quantities. 

A key result of the work is in Section \ref{sec: coherent fluctuation} where we show that the quantum fluctuation theorem becomes a \emph{function of the coherent work state} in a precise sense, and so further justifies the choice of coherent work processes as a coherent primitive.

\subsection{Relation to no-go results}

In the literature, one finds a range of different notions of work. For example, given some closed system in a state $\rho$, described by a time-dependent Hamiltonian $H(t)$ and evolving under a unitary $U$, the average work is frequently expressed as $W = \tr(H(t_f) U \rho U^\dagger) - \tr(H(t_i) \rho)$. Two-point measurement schemes (TPM) are common ways of defining the work done in a process. In such a scheme, one might define
\begin{equation}\label{TPM scheme}
W = \sum_w  \sum_{\substack{i,j':\\ w= E_{j'} - E_i  }}  w \, p_{j'| i} \, p_i
\end{equation}
where $p_i$ is the probability of measuring $|i \>\< i|$ when the system is in the state $\rho$, and $p_{j'|i}$ is the transition probability of going from $|i \> \to |j' \>$ given $|j' \>$ is an eigenstate of $H(t_f)$. Within a TPM definition of work, one would wish to recover the traditional results from classical statistical mechanics when $\rho$ is diagonal (fully incoherent) state. However, a no-go result was obtained in \cite{NoGo} that shows that these two requirements are not compatible for general coherent states. In particular, given any tuple $(H(t_i),H(t_f),U)$, it is impossible to always assign POVM operators for the TPM scheme
\begin{align}
 M^{(w)}_{TPM} &= \sum_{\substack{i,j': \\ w = E_{j'} - E_i}} p_{j'|i} p_i |i \> \< i|
\end{align}
that provide an average work from the expectation value of $X = \sum_w w M^{(w)}_{TPM}$ that is equal for the expectation value of the operator $X = H(t_i) - U^\dagger H(t_f) U$.

Our results include the TPM scheme as a special case, but circumvent no-go results by distinguishing between a random variable obtained by a POVM measurement on a quantum state with coherence, and the quantum state itself as a complete description of a physical system\footnote{In the sense that quantum states in general do not admit satisfactory hidden variable theories.}. For us, `coherent work' is a fully quantum mechanical property and not a classical statistical feature. This stance is further supported by the fact that our coherent work processes play precisely the same role in the coherent fluctuation theorem, as do Newtonian work processes within the classical Crooks relation.

\section{Coherent superpositions of classical processes and decomposable quantum coherence.}\label{sec: Coherent work}

We now go into more detail on what is required in order to have superpositions of processes that overall conserve energy and give rise to deterministic pure state transformations on a subsystem. The criteria are the same as in the case of mechanical work discussed in the introduction. We assume a process in which the following hold:
\begin{enumerate}
\item (Sharp initial state) The quantum system $S$ begins in a definite initial pure state $|\psi_0\>$.
\item (Deterministic dynamics) The system undergoes a deterministic quantum evolution, given by some unitary $U(t)$.
\item (Conservation) Energy conservation holds microscopically and is accounted for with any auxiliary `weight' system $A$ initialised in a default energy eigenstate $|0\>$.
\item (Sharp final state) The system $S$ finishes in some final pure state $|\psi_1\>$.
\end{enumerate} 
Since the dynamics is unitary, and $S$ finishes in a pure state, this implies the system $A$ must also finish in some pure state $|\omega\>$. In the classical case this change in state of the `weight' exactly encodes the work done on the system $S$, and so by direct extension we refer to $|\omega\>$ as the \emph{coherent work output} of the process on $S$. For simplicity in what follows, we shall use the notation
\begin{equation}
|\psi_0\>_S \stackrel{\omega}{\longrightarrow} |\psi_1\>_S,
\end{equation}
to denote the coherent work process on a system $S$ in which $|\psi_0\>_S\otimes |0\>_{A} \rightarrow U(t) [|\psi_0\>_S\otimes |0\>_{A}]= |\psi_1\>_S \otimes |\omega\>_{A}$ under a energy conserving interaction. If separate coherent work processes $|\psi_0\>_S \stackrel{\omega}{\longrightarrow} |\psi_1\>_S$ and $|\psi_1\>_S \stackrel{\omega'}{\longrightarrow} |\psi_0\>_S$ are possible, the transformation is termed \emph{reversible} and denoted by $|\psi_0 \> \stackrel{\omega}\longleftrightarrow |\psi_1 \>$. We are able to label the two processes by $\omega$ instead of $(\omega,\omega')$ because the states $|\omega\>$ and $|\omega'\>$ are related in a very simple way, as we prove in Appendix \ref{thm:uniqueness}.

The coherent work output is in general a quantum state with coherences between energy eigenspaces, however its form is essentially unique for a given initial state $|\psi_0\>$ and given final state $|\psi_1\>$ for the system $S$. This is discussed more in Appendix \ref{thm:uniqueness}. Finally, we note the assumption that the output system is the same as the input system $S$ can be easily dropped, where we require that $SA \rightarrow S'A'$ under an isometry, with energy conservation defined over the composite input/output systems involved.

We can now provide some concrete examples of coherent work processes.

\begin{center}
\begin{figure}[t]
\includegraphics[width=8cm]{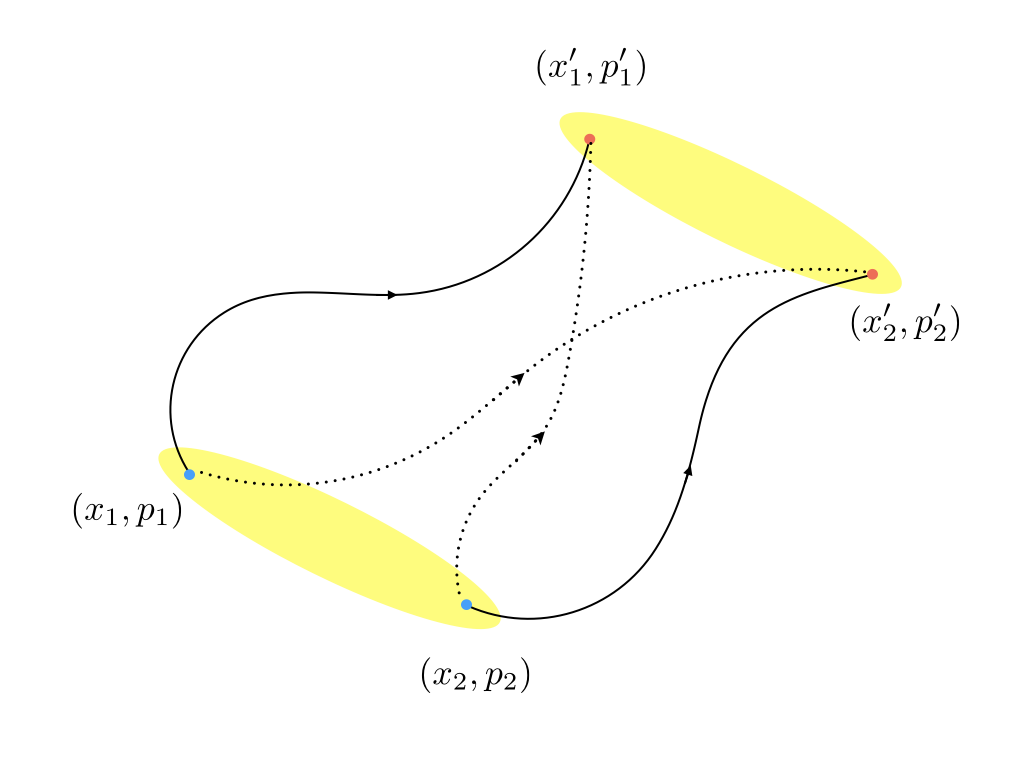}
\caption{\textbf{Coherent superpositions of mechanical processes.} What unitary transformations are possible on a joint system that conserves total energy, but gives rise to a deterministic transformation of a pure quantum state $|\psi_0\>$ into some final pure quantum state $|\psi_1\>$? Conservation laws place non-trivial obstacles on quantum systems, and not all such processes can be superposed. This corresponds to the existence of quantum states that have non-decomposable coherence. States in the set $\mathcal{C}$ of semi-classical states, defined in the main text, are infinitely divisible and in the $\hbar \rightarrow 0$ limit recover the classical regime. 
}
\end{figure}
\end{center}

\begin{example}
Consider some finite quantum system $S$ with Hamiltonian $H_S$. We assume for simplicity that the energy spectrum of $H_S$ is non-degenerate and given by $\{0,1,2, \dots \}$. We write $|k\>$ for the energy eigenstate of $H_S$ with energy $k$. 

Consider the following two incoherent transitions of a system $S$ between energy states.  The first is $|0\> \rightarrow |5\>$ and the second is $|1\> \rightarrow |6\>$. In both cases the process can be done by supplying $5$ units of energy to the system from an external source $A$, and thus $|\omega\> = |-5\>$. Moreover there is a unitary $V$ that can do both processes coherently in superposition:
\begin{equation}
V[( a |0\> +b |1\>) \otimes |0\>] = (a |5\> +b |6\>) \otimes |-5\>,\nonumber
\end{equation}
where $a,b$ are arbitrary complex amplitudes for the quantum state. Note also, that for the case of $|a|=|b|=1/\sqrt{2}$ this coherent transition could equally arise as a superposition of $|0\> \rightarrow |6\>$ and $|1\> \rightarrow |5\>$. For this realisation, a different energy cost occurs for each individual transition, however the net effect also gives rise to $|-5\>$ on the external source system. This process is reversible.
\end{example}

Beyond this simple example subtleties emerge and one can obtain non-trivial interference effects in the work processes, as the following example demonstrates.

\begin{example}
Consider again, a quantum system $S$ as in the previous example. One can coherently and deterministically do a work process where
\begin{equation}
\frac{1}{2} (|0\> + |1\> + |2\> + |3\>) \stackrel{\omega}{\longrightarrow} \frac{1}{\sqrt{2}} (|5\> + |6\>),
\end{equation}
which can be viewed as a merging of classical work trajectories. The coherent work output for this process is $ |\omega\>= \frac{1}{\sqrt{2}} (|-3 \> + | -5 \> )$. This process is irreversible.
\end{example}
The proof that this transition is possible deterministically and irreversibly follows directly from Theorem \ref{thm: Convolution} below.

However quantum mechanics also has prohibitions that give rise to highly non-trivial constraints that rule out certain processes, as the following illustrates.
\begin{example}\label{impossible}
Let $S$ be the same quantum system as in the above examples. It is \emph{impossible} to superpose the two transitions $|0\> \rightarrow |5\>$ and $|1\> \rightarrow |7\>$ in a coherent work process, even if we allow $A$ to finish in some superposed state $|\omega\>$ with amplitudes over different energies. 
\end{example}
This result again follows from Theorem \ref{thm: Convolution} below, but one way to see that this is impossible is that the swapping of the initial states $|0\>$ and $|1\>$ and the swapping of the two final states $|5\>$ and $|7\>$ do not correspond to the same change in energy, and so the ``which way'' information for the process cannot be erased in a way that does not leave an external energy signature (as occurs in the first example for $|a| = |b| = 1/\sqrt{2}$).

\subsection{``Which-way'' information for processes and energy conservation constraints}

We can make more precise what we mean by the energy conservation condition providing a restriction what classical work processes can be superposed when the quantum state has some permutation symmetry. For simplicity we can consider the same quantum system $S$ as in the above examples. Suppose it is in some initial state $|\psi\>_S$ that is invariant under swapping of energy eigenstates $|a\>$ and $|b\>$. We can let $F_{a,b} :=( |a\>\<b| + |b\>\<a| + \sum_{k \ne a,b} |k\>\<k|)$ be the operator that swaps these levels, and thus $F_{a,b}\otimes \I_A |\psi\>\otimes |0\> = |\psi\> \otimes |0\>$.  This must map to a corresponding operation on the output state $|\phi\>\otimes |\omega\>$, which we denote $\tilde{F}$, and which must obey $ V F_{a,b}\otimes \I_A = \tilde{F} V$, where $V$ is the interaction unitary that conserves energy and performs the transformation. Therefore we have that $\tilde{F} = V (F_{a,b}\otimes \I_A) V^\dagger$. However because of energy conservation we must also have that
\begin{align}
V |a\> \otimes |0\> &= |x \>\otimes |a-x\> \\
V |b\> \otimes |0\> &= |y\> \otimes |b-y\>,
\end{align}
for some integers $x,y$, and thus,
\begin{equation}
\begin{aligned}
\tilde{F} = |x,a-x\>\<y,b-y| &+ |y,b-y\>\<x,a-x|  \\  &+ \mbox{ other terms}.\nonumber
\end{aligned}
\end{equation}
However in order for such a which-way symmetry to relate solely to the system $S$, we must have that $\tilde{F}$ factors into a product operator of the form $X\otimes \I_A$ so that the invariance on the initial superposition is associated with a corresponding transformation on the output state. Restricting just to the space spanned by $|x,a-x\>$ and $|y,b-y\>$ we see that this implies $a-x = b-y$, and so $y-x= b-a$ and the invariance is under the swapping operation $X = F_{x,y}$. Thus, the invariance under permutations in a superposition together with energy conservation implies that a transformation such as the one in example \ref{impossible} is forbidden.

\subsection{Superposition of classical processes in a semi-classical regime}\label{C-states}
Having provided some initial examples for coherent work processes, we next link with the concept of work in classical mechanics through the following result. For this we consider a mechanical coordinate $x$ together with its conjugate momentum $p$, which are quantised in the usual way with commutation relation $[x,p]=i\hbar \I$. We define a coherent state for the system via $|\alpha\> = D(\alpha) |0\>$ for any $\alpha \in \mathbb{C}$ and where $a|0\> = 0$, with $a := (x + ip)/\sqrt{2}$ and with the displacement operator $D(\alpha) := \exp [\alpha a^\dagger - \alpha^* a]$. This defines the set of coherent states $\{ |\alpha\> = D(\alpha)|0\> : \alpha \in \mathbb{C}\}$ for the system. 

We now note that we always have the freedom to rigidly translate a coherent state in energy $|\psi\> \rightarrow \Delta^k |\psi\>$, where $\Delta := \sum_{n \ge 0} |n+1\>\<n| $ for some non-negative integer $k$ ($\Delta$ is occasionally referred to as the phase operator \cite{PhaseOperator}). We will also allow arbitrary phase shifts in energy $|n\> \rightarrow e^{i\theta_n}|n\>$. We now define the states
\begin{equation}
|\alpha , k \> := \Delta^k |\alpha \>,
\end{equation} 
where $|\alpha \>$ is a coherent state \cite{DeformedAnnihilation,PACSOriginal,PhotonSubtraction,PhotonSubtraction1,PhotonSubtraction2}.
We denote by $\mathcal{C}$ the set of quantum states obtained from any coherent state via rigid translations by a finite amount together with arbitrary phase shifts in energy. Note that arbitrary energy eigenstates are included in the set $\mathcal{C}$ as a rigid-translation of the vacuum $|\alpha = 0 \>$ state. For oscillator systems the energy eigenstates are not usually considered as classical states, however their inclusion in the set $\mathcal{C}$ is required for consistency in the $\hbar \rightarrow 0$ limit.

Explicitly, the states take the form
\begin{equation}
\mathcal{C} := \left\{ |\psi \> = e^{-i L t} |\alpha , k \>  :  k \in \mathbb{N}_0,  \alpha \in \mathbb{C}, t\in \mathbb{R}, [L, H ] =0  \right\} \nonumber
\end{equation}
where we allow the use of an arbitrary Hermitian observable $L$ that commutes with $H$ to generate arbitrary relative phase shifts in energy. 

We now consider the harmonic oscillator, but it is expected that a similar statement for the classical $\hbar \rightarrow 0$ limit can be made for more general coherent states.

\begin{theorem}[Semi-classical regime]\label{thm:Semi Classical} Let $S$ be a harmonic oscillator system, with Hamiltonian $H_S = h \nu a^\dagger a$. Let $\mathcal{C}$ be the set of quantum states for $S$ as defined above. Then:
\begin{enumerate}
\item The set of quantum states $\mathcal{C}$ is closed under all possible coherent work processes from $S$ to $S$ with fixed Hamiltonian $H_S$ for both the input and output.
\item Given any quantum state $|\psi\>\in \mathcal{C}$ there is a unique, canonical state $|\alpha, k\> \in \mathcal{C}$ such that we have a reversible transformation
\begin{equation}
|\psi\> \stackrel{\omega_c}{\longleftrightarrow} |\alpha,k\>,
\end{equation}
with $|\omega_c\> = |0\>$ and $\alpha = |\alpha|$.
Moreover, the only coherent work processes possible between canonical states are 
\begin{equation}
|\alpha, k\> \stackrel{\omega}{\longrightarrow} |\alpha',k\>,
\end{equation}
 such that $|\alpha'| \le |\alpha|$. Modulo phases, the coherent work output can be taken to be of the canonical form $|\omega\>_A=  |\lambda, n\>_A$, where $\lambda = \sqrt{|\alpha|^2 - |\alpha'|^2}$ and $n,k'$ are any integers that obey $n+k' = k$.
\item In the classical limit of large displacements $|\alpha| \gg 1$, from coherent processes on $\mathcal{C}$ we recover all classical work processes on the system $S$ under a conservative force.
\end{enumerate}
\end{theorem}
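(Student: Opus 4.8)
The plan is to prove the three claims in order, building each on the structure of coherent work processes established above. The central tool will be the energy-conservation constraint, which forces each "incoherent" transition $|n\rangle_S \otimes |0\rangle_A \to |m\rangle_S \otimes |m-n\rangle_A$ (labelling energy eigenstates by integers in units of $h\nu$), together with the requirement that the output of $S$ be a pure product state with $A$. For part 1, I would expand an arbitrary $|\psi\rangle = e^{-iLt}|\alpha,k\rangle \in \mathcal{C}$ in the energy eigenbasis and apply the definition of a coherent work process. The key observation is that rigid translation $\Delta^k$ and phase shifts $e^{-iLt}$ generated by $L$ commuting with $H_S$ do not change the \emph{amplitude profile} $\sqrt{p_n}$ of a coherent state, which is a Poissonian in $n$. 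I expect to reduce the problem to asking which amplitude profiles can be produced on $S$ while factoring off a pure state on $A$, and then show the answer stays inside $\mathcal{C}$.

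First I would establish the reversible reduction to canonical form in part 2. Since an overall phase operator $e^{-iLt}$ and a rigid shift only relabel energies without entangling $S$ and $A$, I can undo them by an energy-conserving unitary that deposits the corresponding shift onto $A$ with $|\omega_c\rangle = |0\rangle$; absorbing the phase of $\alpha$ into $L$ then fixes $\alpha = |\alpha|$. This is essentially the statement that the orbit of $|\alpha,k\rangle$ under these trivial operations contains a unique representative with real nonnegative displacement and specified $k$. The substantive content of part 2 is the characterization $|\alpha,k\rangle \stackrel{\omega}{\longrightarrow} |\alpha',k\rangle$ with $|\alpha'| \le |\alpha|$. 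Here I would invoke Theorem \ref{thm: Convolution} (the decomposability/convolution result referenced above): the probability distribution of $S$'s initial state must factor as the convolution of the output distribution on $S$ with that of $|\omega\rangle$ on $A$, since statistical independence is imposed. Because a Poissonian of parameter $|\alpha|^2$ decomposes as a convolution of Poissonians with parameters $|\alpha'|^2$ and $\lambda^2 = |\alpha|^2 - |\alpha'|^2$ (reproductive property of the Poisson family), the output coherent work state is forced to be $|\lambda,n\rangle$, and the constraint $|\alpha'| \le |\alpha|$ follows because $\lambda^2 \ge 0$. The integer freedom $n + k' = k$ reflects how the rigid-translation quanta can be distributed between $S$ and $A$.

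For part 3, I would take the $|\alpha| \gg 1$ limit and show that a coherent state becomes sharply peaked in both $x$ and $p$ relative to its mean, so that $|\alpha,k\rangle$ approximates a classical phase-space point $(x_{\rm class}, p_{\rm class})$ with vanishing relative fluctuations. A coherent work process $|\alpha,k\rangle \to |\alpha',k\rangle$ then corresponds to a deterministic change of the oscillator's amplitude, i.e.\ a definite change of mechanical energy $h\nu(|\alpha|^2 - |\alpha'|^2)$ delivered to $A$, which is precisely the work done by a conservative force. I would argue that every energy-changing classical work process on the oscillator is recovered this way by choosing $\alpha, \alpha'$ appropriately, invoking the $\hbar \to 0$ correspondence already sketched in the introduction via the stationary-phase argument.

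\textbf{The hard part will be} part 2, specifically showing that \emph{no} output outside $\mathcal{C}$ is reachable and that the Poissonian convolution is the \emph{only} admissible decomposition. The forward direction (that these transitions are possible) is a construction, but the converse requires ruling out exotic entangled or non-product intermediate behaviour; this is exactly where I expect to lean hardest on Theorem \ref{thm: Convolution} and on the uniqueness of the coherent work output from Appendix \ref{thm:uniqueness}. The delicate point is translating the operator-level factorization condition $\tilde{F} = X \otimes \I_A$ (the which-way argument) into the statement that the characteristic function of $S$'s distribution must factor into a product of characteristic functions, one of which corresponds to a genuine coherent-state (Poissonian) profile — the reproductive property of the Poisson distribution is what closes $\mathcal{C}$ under the process, and establishing that this is both necessary and sufficient is the crux.
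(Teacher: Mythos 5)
Your overall architecture matches the paper's: reduce the quantum question to the energy-measurement statistics, use the convolution/decomposability structure to turn it into a classical question about decomposing a Poisson distribution, handle phases and rigid shifts with energy-conserving phase unitaries (this is exactly how the paper gets the reversible reduction to canonical form $|\alpha,k\rangle$ with $\alpha = |\alpha|$), and recover classical mechanics in part 3 from the vanishing of the relative fluctuations of position, momentum and energy at $|\alpha|\gg 1$ together with path-independence. Parts 2 (canonical-form reduction) and 3 of your plan are essentially the paper's proof.

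However, there is a genuine gap at exactly the point you flag as the crux, and the tool you propose cannot close it. You invoke the ``reproductive property of the Poisson family'' --- that the convolution of Poisson distributions with parameters $|\alpha'|^2$ and $\lambda^2 = |\alpha|^2 - |\alpha'|^2$ is Poisson with parameter $|\alpha|^2$. That is the easy direction: it shows the canonical transitions are \emph{possible}. What parts 1 and 2 actually require is the converse: if a Poisson-distributed random variable $\hat{S}$ is written as a sum $\hat{S}' + \hat{A}$ of independent, non-constant random variables, then $\hat{S}'$ and $\hat{A}$ are themselves necessarily (shifted) Poisson --- i.e.\ the Poisson law has no non-Poisson factors. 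This is Raikov's theorem (1938), a nontrivial result in classical probability theory analogous to Cram\'{e}r's decomposition theorem for Gaussians, and it is the entire content of the closure claim in part 1 and of the claim in part 2 that the output is \emph{forced} to be $|\lambda,n\rangle$ with $|\alpha'|\le|\alpha|$. Theorem \ref{thm: Convolution} (or Theorem \ref{thm:Decomposable}) only delivers the convolution identity $p_n = \sum_j r_j q_{n-j}$; neither it, nor the which-way factorization argument, nor the uniqueness of the coherent work output from Appendix \ref{thm:uniqueness} constrains \emph{which} distributions $(q_m)$, $(r_j)$ can appear when $(p_n)$ is Poissonian --- that is a purely classical statement about the arithmetic of the Poisson law and must be imported as such. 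The paper does precisely this: it proves the equivalence of coherent work processes with decomposability of the associated random variable, applies Raikov's theorem to conclude both factors are shifted Poissonians, and then reconstructs the output states as elements of $\mathcal{C}$, with the parameter bookkeeping $\mu + \kappa = \lambda$ and $\alpha_1 + \alpha_2 = m h\nu$ yielding $|\alpha'| \le |\alpha|$ and the integer splitting $n + k' = k$. Without Raikov's theorem (or an equivalent characterization of the factors of the Poisson distribution), your argument establishes sufficiency of the canonical transitions but not that they exhaust all coherent work processes on $\mathcal{C}$.
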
\label{classical}
Note that we can weaken the assumption that the output system is $S$ and the Hamiltonian $H_S$ is the same at the start and at the end. It is readily seen that any admissible output system $S'$ must have within its energy spectrum infinitely many discrete energies with separations being multiples of $h \nu$, and the output state distribution is always Poissonian on a subset of these discrete eigenstates. A similar degree of freedom exists for the auxiliary system $A$, which also must have a Poissonian distribution on a subset of discrete energy levels. The output parameters $\alpha, \alpha', \lambda$ must obey the same conditions as in Theorem \ref{thm:Semi Classical}. 

The proof of the above theorem is given in Appendix \ref{Proof Semi Classical}, and provides a simple correspondence between the coherent work processes and work processes in classical Newtonian mechanics. The main aspect of this result that is non-trivial is to show that $\mathcal{C}$ is closed under work processes, but follows from a non-trivial result in probability theory.

Care must be taken when defining the classical limit as we do in Theorem \ref{thm:Semi Classical}. Conservative forces are characterised by their `path-independence', and so the energy change is dependent only upon the initial and final phase space co-ordinates $(\x_0,\p_0)$ and $(\x_1,\p_1)$ respectively. But the distinction between quantum and classical is that in classical physics we can assign a well-defined pair of co-ordinates to our system, labelling the position and momentum. In addition to this, the energy is also as sharp as instruments allow. We therefore use the fact that for $|\alpha|$ large, the standard deviation of the position, momentum and energy of our system grow increasingly negligible compared to their expected values. Furthermore, the energy change of the two phase space points entirely captures the work done in the process.

The coherent state result shows that in a semi-classical limit we recover the familiar work behaviour. However, the earlier examples show that once we deviate from the semi-classical regime the structure of coherence in such processes becomes highly non-trivial and certain transformations are impossible. This is captured by the notion of \emph{decomposable coherence} and \emph{non-decomposable coherence}, which we now describe.

\subsection{Decomposable and non-decomposable quantum coherence}

The topic of coherent work is directly related to whether a random variable in classical probability theory is `decomposable' or not. In classical theory a random variable $\hat{X}$ is said to be \emph{decomposable} if it can be written as $\hat{X}=\hat{Y}+\hat{Z}$ for two independent, non-constant random variables $\hat{Y}$ and $\hat{Z}$ \cite{Lukacs,ArithmeticProbability}. This turns out to provide a different characterization of coherent work processes. For compactness, given a system $X$ with Hamiltonian $H_X$, denote by $\hat{X}$ the random variable obtained from the measurement of $H_X$ in some state $|\psi\>$. The probability distribution is given by $p (E_k) = \<\psi|\Pi_k |\psi\>$, where $\Pi_k$ is the projector onto the energy eigenspace of $H_X$ with energy $E_k$. 

Within our quantum-mechanical setting we now say that a state $|\psi\>$ has \emph{non-decomposable coherence} if given a coherent work process $|\psi\> \stackrel{\omega} \longrightarrow |\phi\>$ this implies that either $|\phi \>$ or $|\omega\>$ are energy eigenstates of their respective Hamiltonians, and otherwise the state is said to have \emph{decomposable coherence}. We also refer to such non-decomposable transformations as ``trivial'' coherent processes. With this notion in place, we obtain the following result.

\begin{theorem}
\label{thm:Decomposable}Given a quantum system $X$ with Hamiltonian $H_X$ with a discrete spectrum, a state $|\psi\>$ admits a non-trivial coherent work process if and only if the associated classical random variable $\hat{X}$ is a decomposable random variable. Furthermore, for a coherent work process 
\begin{equation}
|\psi \>_X \stackrel{\omega}{\longrightarrow} |\phi \>_Y,
\end{equation}
the associated classical random variables are given by $\hat{X} =  \hat{Y} + \hat{W}$, where $\hat{Y}$ and $\hat{W}$ correspond to  the measurements of $H_Y$ and $H_{W}$ in $|\phi \>_Y$ and $|\omega \>_W$ respectively.
\end{theorem}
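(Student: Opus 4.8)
The plan is to prove both implications by exploiting a single structural fact: microscopic energy conservation forces the \emph{total}-energy probability distribution to be invariant under the interaction unitary, while the no-entanglement requirement forces the output random variables to be statistically independent. Writing the process as $U[|\psi\>_X \otimes |0\>_W] = |\phi\>_Y \otimes |\omega\>_W$ with $[U, H_X \otimes \I + \I \otimes H_W] = 0$ and $|0\>_W$ the default eigenstate at energy $0$, I would fix the notation $|\psi\> = \sum_k \sqrt{p_X(E_k)}\, e^{i\theta_k}|E_k\>$ and let $\Pi_E$ denote the projector onto the total-energy-$E$ eigenspace.

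First I would establish the ``furthermore'' clause, from which the forward direction is immediate. Since $U$ commutes with the total Hamiltonian it preserves each total-energy eigenspace, so the distribution of total energy is identical before and after. Before the interaction the auxiliary sits at energy $0$, so this distribution is exactly that of $\hat X$. After the interaction the state is the product $|\phi\>_Y \otimes |\omega\>_W$, so a joint measurement of $H_Y$ and $H_W$ yields outcomes with product probabilities $|\<j|\phi\>|^2 |\<l|\omega\>|^2$; hence $\hat Y$ and $\hat W$ are independent and the total energy is distributed as $\hat Y + \hat W$. Equating the two distributions gives $\hat X = \hat Y + \hat W$ with $\hat Y \perp \hat W$. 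If the process is non-trivial then neither $|\phi\>$ nor $|\omega\>$ is an energy eigenstate, i.e. both $\hat Y$ and $\hat W$ are non-constant, so $\hat X$ is decomposable by definition.

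For the converse I would run this construction backwards. Given a decomposition $\hat X = \hat Y + \hat Z$ into independent non-constant variables, I would choose output systems $Y$ and $W$ whose Hamiltonians have spectra matching the supports of $\hat Y$ and $\hat Z$ (using the freedom to shift a constant between the two summands and to enlarge the Hilbert spaces), and define target states $|\phi\>_Y$ and $|\omega\>_W$ with the prescribed marginals $q_j = |\<j|\phi\>|^2$ and $r_l = |\<l|\omega\>|^2$ and arbitrary phases. The decomposition guarantees the convolution identity $p_X(E) = \sum_{E_j^Y + E_l^W = E} q_j r_l$, which says precisely that the total-energy-$E$ component of the product state $|\phi\>_Y \otimes |\omega\>_W$ has the same norm $\sqrt{p_X(E)}$ as the corresponding component $\Pi_E(|\psi\>_X \otimes |0\>_W)$ of the input. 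I would therefore define, sector by sector, an isometry sending the normalized input component to the normalized output component, absorbing the input phases $\theta_k$ into its definition; the direct sum over $E$ is an energy-conserving isometry carrying $|\psi\>_X \otimes |0\>_W$ to $|\phi\>_Y \otimes |\omega\>_W$, which extends to a unitary. Because $\hat Y$ and $\hat Z$ are non-constant, $|\phi\>$ and $|\omega\>$ are genuine superpositions and the process is non-trivial.

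The main obstacle is the converse construction rather than any single computation: one must check that the purely distribution-level statement ``$\hat X$ is decomposable'' can always be realized by a \emph{bona fide} energy-conserving unitary on concrete Hilbert spaces. The delicate points are verifying that the per-sector maps patch together into a global isometry (which is exactly where the convolution identity, and hence Theorem~\ref{thm: Convolution}, does the work) and the bookkeeping needed to handle degenerate energy levels and to extend the isometry to a unitary by enlarging the output spaces where necessary. Once norm-matching per energy sector is in hand, the freedom in the phases simply reflects the essential uniqueness of $|\omega\>$ noted earlier, and no further obstruction arises.
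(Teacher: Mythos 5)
Your proposal is correct and follows essentially the same route as the paper's proof: the forward direction uses invariance of the total-energy distribution under the energy-conserving unitary together with independence forced by the product output, and the converse builds the energy-conserving isometry sector by sector, with the convolution identity from the decomposition $\hat{X}=\hat{Y}+\hat{Z}$ supplying exactly the norm-matching needed to patch the sectors together and extend to the full space. One small correction: that convolution identity follows directly from the independence of the two summands (it is not supplied by Theorem~\ref{thm: Convolution}, which concerns the special case of ladder systems and is proved independently), but this misattribution plays no logical role in your argument.
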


We see that the states in $\mathcal{C}$ have \emph{infinitely divisible} coherence, and thus admit an infinite sequence of non-trivial coherent work processes.  A random variable $\hat{Y}$ is infinitely divisible if for any positive integer $n$, we can find $n$ independent and identically distributed random variables $\hat{X}_n$ that sum to $\hat{Y}$ \cite{CharacteristicFunctions}. The proof of this is provided in Appendix \ref{Proof Decomposable}.

\begin{center}
\begin{figure}[t]
\includegraphics[width=8cm]{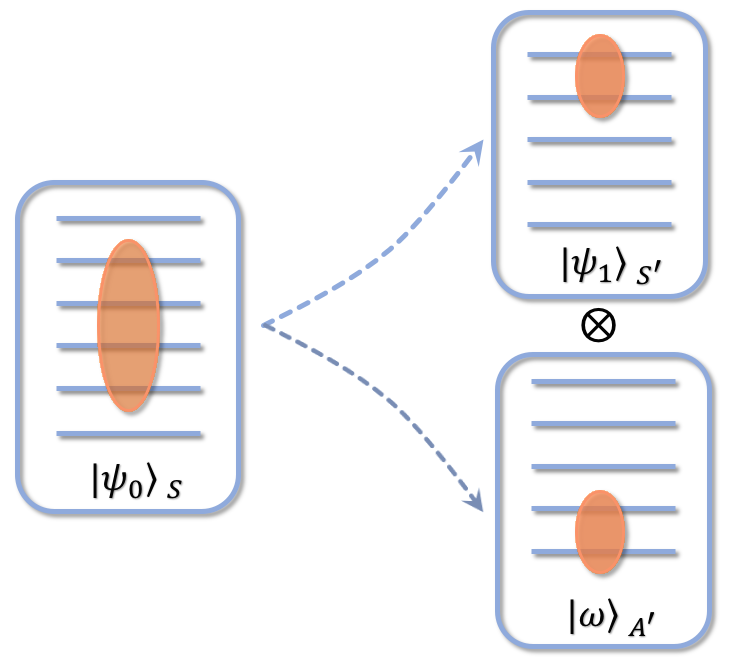}
\caption{\textbf{Decomposability of coherence.} Shown is a schematic of decomposable coherence. A state $|\psi_0\>$ for a quantum system $S$ can be split into pure state coherence $|\psi_1\>$ on $S'$ and a coherent work output $|\omega\>$ on $A$ under an energetically closed process. The probability distributions for the energy measurements on both $|\psi_1\>$ and $|\omega\>$ always majorize the distribution for $|\psi_0\>$ and so the energy measurement disorder on $S$ is always non-increasing.}  
\end{figure}
\end{center}

\subsection{General coherence decomposition and disorder in measurement statistics}
We now specify exactly what coherent work processes are possible for a quantum system $S$ with discrete, equally-spaced energy levels. To avoid boundary effects, we assume for mathematical convenience that both $S$ and $A$ are doubly infinite ladders and make use of core results from asymmetry theory \cite{GourSpekkens}. 

Note that for a finite $d$-dimensional system $S$ with constant energy gaps, essentially the same analysis applies since one can always embed the finite system into $d$ levels of such a ladder, apply the below theorem and ensure that the support of the output state does not stray outside the $d$ levels of the embedding. For systems with general energy spacings, again a similar analysis can be applied, since the required dynamics will not couple different Fourier modes in a quantum state and so will only transform coherences between subsets of levels with equal level spacings.

For the core model of a ladder system with equally spaced levels we have the following statement for coherent work processes, with the proof provided in Appendix \ref{Proof of ladder theorem}.
\begin{theorem}\label{thm: Convolution}
Let $S$ be a quantum system,  with energy eigenbasis $\{ |n\>: n \in \mathbb{Z}\}$ with fixed Hamiltonian $H_S = \sum_{n \in \mathbb{Z}} n |n\>\<n|$. Given an initial quantum state $|\psi\>_S = \sum_i \sqrt{p_i} |i\>_S$, a coherent work process
\begin{equation}
|\psi\>_S \stackrel{\omega}{\longrightarrow} |\phi\>_S ,
\end{equation}
is possible with
\begin{align}
|\phi \>_S &= \sum_j e^{i\theta_j} \sqrt{q_j} |j\>_S\\
|\omega \>_A &= \sum_k e^{i\varphi_k}\sqrt{r_k} |k\>_A
\end{align}
for arbitrary phases $\{\theta_j\}$ and $\{\varphi_k\}$, with output Hamiltonian $H_S$ as above and $H_A = \sum_{n \in \mathbb{Z}} n |n\>_A\<n|$,
if and only the distribution $(p_n)$ over $\mathbb{Z}$ can be written as
\begin{align}\label{trans-major}
p_n &= \sum_{j \in \mathbb{Z}} r_{j} (\Delta^{j}\mathbf{q})_n. \nonumber \\
&=  \sum_{j \in \mathbb{Z}} r_j q_{n-j}
\end{align}
for distributions $(q_m)$ and $(r_j)$ over $\mathbb{Z}$. 
\end{theorem}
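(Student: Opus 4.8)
The plan is to translate the two defining constraints of a coherent work process---deterministic unitarity and microscopic energy conservation---into a statement purely about energy distributions, after which the claimed convolution identity becomes transparent. The key structural fact is that energy conservation, $[U, H_S \otimes \I_A + \I_S \otimes H_A] = 0$, forces $U$ to act block-diagonally within each eigenspace of the total Hamiltonian $H_{SA} := H_S\otimes \I_A + \I_S \otimes H_A$. Writing $\Pi_n^{SA}$ for the projector onto the total-energy-$n$ sector, which (since both ladders are doubly infinite) is spanned by $\{|i\>_S\otimes|n-i\>_A : i \in \mathbb{Z}\}$, we get $[U,\Pi_n^{SA}]=0$ for every $n$, so the total-energy distribution is identical before and after the process.

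For the forward direction I would evaluate this conserved distribution on both sides. Because $A$ is initialised in the eigenstate $|0\>_A$, the input $|\psi\>_S\otimes|0\>_A = \sum_i \sqrt{p_i}\,|i\>_S\otimes|0\>_A$ has exactly one basis vector, $\sqrt{p_n}\,|n\>_S\otimes|0\>_A$, lying in the sector $\Pi_n^{SA}$, so the total-energy distribution is simply $(p_n)$. On the output side, $|\phi\>_S\otimes|\omega\>_A = \sum_{j,k} e^{i(\theta_j+\varphi_k)}\sqrt{q_j r_k}\,|j\>_S\otimes|k\>_A$ places weight $\sum_{j+k=n} q_j r_k = \sum_j r_j q_{n-j}$ in sector $n$. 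Equating the two conserved distributions yields $p_n = \sum_j r_j q_{n-j}$ at once, with the phases dropping out of the probabilities; this is precisely the convolution of $(q_m)$ with $(r_j)$, i.e.\ the distribution of a sum of independent variables, matching the decomposability picture of Theorem \ref{thm:Decomposable}.

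For the converse I would run the argument backwards and build an explicit energy-conserving $U$. The crucial point is that the convolution identity is exactly a sector-by-sector norm-matching condition: the input component in sector $n$ is the single vector $\sqrt{p_n}\,|n\>_S\otimes|0\>_A$ of squared norm $p_n$, while the output component in sector $n$ has squared norm $\sum_{j+k=n} q_j r_k = p_n$. Since the normalised input and output vectors in each sector are unit vectors of equal norm, there is always a unitary $U_n$ on that sector sending one to the other; setting $U = \bigoplus_n U_n$ produces a unitary commuting with $H_{SA}$ that realises the process for any prescribed phases $\{\theta_j\}$ and $\{\varphi_k\}$.

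The main obstacle---really the only point requiring care---is the rigorous construction of the per-sector unitaries. Each sector $\Pi_n^{SA}$ is infinite-dimensional, so I must extend the partial isometry sending the normalised input vector to the normalised output vector to a genuine unitary on the whole separable sector; this is standard, as the orthogonal complements of two unit vectors in such a space are isomorphic. I would also handle sectors with $p_n = 0$ separately: there, non-negativity of $q_j r_k$ together with $\sum_{j+k=n} q_j r_k = 0$ forces the entire output component to vanish, so $U_n$ may be chosen arbitrarily (e.g.\ the identity) and consistency is automatic.
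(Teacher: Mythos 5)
Your proposal is correct and follows essentially the same route as the paper's proof: the forward direction uses that an energy-conserving unitary preserves the total-energy distribution of $SA$ (the paper phrases this via conditional distributions $m_{j|n}$ and the projectors $\Pi_n$, but the content is identical), and the converse builds the unitary sector-by-sector from the norm-matching condition $p_n = \sum_j r_j q_{n-j}$, exactly as the paper's $V = \oplus_n V_n$ construction. The only cosmetic difference is that you absorb the arbitrary phases directly into the per-sector target vectors, whereas the paper factors them out as a separate energy-diagonal phase unitary $U^\dagger$; both are valid.
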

Again, note that we may weaken the assumption that the Hamiltonian of $S$ is the same at the start and at the end with essentially the same conclusion. For example, one may introduce unoccupied energy levels in the output system $S$. Moreover if we drop the assumption of the precise form of the Hamiltonian $H_A$, it is readily checked that the output Hamiltonians of $S$ and $A$ could be shifted by equal and opposite constant amounts while respecting the conservation of energy condition. In each of these variants, however, the core structure of the output distributions is essentially the same.

Now the Birkhoff-von Neumann theorem states that every bistochastic matrix is a convex combination of permutations \cite{InfiniteBirkhoff,Convex}, and thus since equation (\ref{trans-major}) is a convex combinations of translations (which are themselves permutations) this means the two distributions are related by a bistochastic mapping $\q \rightarrow \p = A \q$, where $A$ is bistochastic. However this implies that $\q$ majorizes $\p$, written $\p \prec \q$, and which establishes a very useful relation between the input distribution on $S$ and output distributions on $S$ and $A$. Specifically it implies that in any such coherent work process we must have both $\p \prec \q$ and $\p \prec \mathbf{r}$. Now if $f$ is any concave real-valued function on probability distributions, then a standard theorem \cite{Convex} tells us that $f(\p) \ge f(\q)$ whenever $\p \prec \q$. This leads us to the following result.
\begin{corollary}[\textbf{Disorder in coherent processes never increases}] \label{corollary: disorder}
Let $f_{\rm \tiny dis}(\psi)$ be some real-valued concave function of the distribution $(p_k)$ over energy in $|\psi\>= \sum_k e^{i \theta_k }\sqrt{p_k} |k\>$ (such as the Shannon entropy function), then in any coherent work process $|\psi\> \stackrel{\omega}{\longrightarrow} |\phi\>$ we have that
\begin{align}
\max \{f_{\rm \tiny dis}(\phi), f_{\rm \tiny dis}(\omega) \}&\le f_{\rm \tiny dis}(\psi).
\end{align}
\end{corollary}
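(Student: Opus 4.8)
The plan is to reduce the statement entirely to the two majorization relations $\p \prec \q$ and $\p \prec \mathbf{r}$ that were derived immediately above the corollary, and then invoke the standard Schur-concavity inequality. First I would recall from Theorem \ref{thm: Convolution} that a coherent work process $|\psi\> \stackrel{\omega}{\longrightarrow} |\phi\>$ forces the input energy distribution to be the convolution $p_n = \sum_{j} r_j\, q_{n-j}$ of the output distributions on $S$ and $A$. Writing this as $\p = A\q$, the matrix $A$ has entries $A_{nm} = r_{n-m}$ and is thus a convex combination of the translation permutations $\Delta^j$ weighted by the probabilities $r_j$; since each $\Delta^j$ is a permutation and $(r_j)$ is a probability distribution, $A$ is bistochastic. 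By the Birkhoff--von Neumann characterization \cite{InfiniteBirkhoff,Convex} this yields $\p \prec \q$.

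Second, because convolution is commutative I may equally write $p_n = \sum_j q_j\, r_{n-j}$, so that $\p = B\mathbf{r}$ with $B$ bistochastic in exactly the same way, giving $\p \prec \mathbf{r}$. I would then apply the standard theorem that a concave function is Schur-concave \cite{Convex}: whenever $\p \prec \q$ one has $f_{\rm \tiny dis}(\p) \ge f_{\rm \tiny dis}(\q)$. Noting that $f_{\rm \tiny dis}$ is a function of the energy distribution alone and is insensitive to the arbitrary phases $\{\theta_j\}$ and $\{\varphi_k\}$ appearing in $|\phi\>$ and $|\omega\>$, the two majorization relations give $f_{\rm \tiny dis}(\phi) \le f_{\rm \tiny dis}(\psi)$ and $f_{\rm \tiny dis}(\omega) \le f_{\rm \tiny dis}(\psi)$ simultaneously, and taking the maximum of the two left-hand sides produces the claim.

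The convolution step is immediate, so the one point I would treat with care is that the distributions live over the doubly infinite ladder $\mathbb{Z}$: the Birkhoff--von Neumann theorem and the Schur-concavity statement must be used in their infinite-dimensional forms, and one should confirm that the majorization ordering and the concavity inequality remain valid for the countably supported distributions at hand. The hard part is therefore not any new idea but discharging this technicality cleanly — for instance by restricting to the finite supports of $\q$ and $\mathbf{r}$ when these are finitely supported, or otherwise by a truncation-and-approximation argument that controls $f_{\rm \tiny dis}$ in the limit.
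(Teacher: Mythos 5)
Your proposal is correct and follows essentially the same route as the paper: the convolution structure from Theorem \ref{thm: Convolution} is read as a bistochastic (convex-combination-of-translations) map giving both $\p \prec \q$ and $\p \prec \mathbf{r}$, and Schur-concavity of concave functions then yields the bound. Your added care about the infinite-dimensional forms of Birkhoff--von Neumann and majorization is a reasonable refinement of a technicality the paper handles simply by citing the infinite-dimensional version of Birkhoff's theorem.
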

The significance of this is that every measure of ``disorder'' $f_{\rm \tiny dis}$ for the energy statistics in a pure quantum state $|\phi\>$ will be a convex function of this form. The corollary thus shows that the energy statistics of both the coherent work output and the final state of $S$ are always \emph{less disordered} than that of the initial state of $S$ with respect to all measures. This also provides an intuitive perspective on the decomposability of coherence in a such a process.

Under repeated coherent work processes on the final state of $S$ and the work output state, the disorder will become diluted but for a general scenario will stop when one reaches non-decomposable components. However for the class of states in $\mathcal{C}$ there is no such obstacle and the disorder can be separated indefinitely. Note also that if in addition the function $f_{\rm \tiny dis}$ is additive over quantum systems, namely $f_{\rm \tiny dis}(\phi \otimes \psi) = f_{\rm \tiny dis}(\phi) + f_{\rm \tiny dis}(\psi)$, then the total disorder over all systems as measured by this function will remain constant throughout.

\subsection{Multiple processes and coherently connected quantum states}
The framework we have laid out involves an auxiliary system initialised in an energy eigenstate. It is for this reason that the disorder dilutes among the bipartite system, rather than being able to increase in one subsystem. This has the advantage of isolating the coherent manipulations on the primary system. Due to this, in a sequence of coherent work processes $|\psi_1 \>_S \stackrel{\omega_1} \longrightarrow |\psi_2\>_S \cdots \stackrel{\omega_n}\longrightarrow |\psi_{n+1} \>_S$, there might only be a finite number of non-trivial processes possible before the system ends in a state with non-decomposable coherence. However, a notable counter-example is the semi-classical result for which this never happens.

The method we have chosen is the simplest conceptually, but it is not the most general procedure possible. One could imagine an auxiliary system prepared in an arbitrary state $|\phi \>_{A}$. In such a case, the approach involving random variables must be slightly modified but the same tools can be applied. The most general coherent work process possible, excluding entanglement generation, is a transformation of the form $\bigotimes_{i = 1}^n |\psi_i \> \longrightarrow \bigotimes_{i=1}^m |\psi_i'\>$, with the only two constraints that the dynamics conserve energy globally, and the subsystems are left in a pure product state. In this modified framework, results such as Corollary \ref{corollary: disorder} would no longer hold, since coherence can be concentrated from many subsystems to fewer subsystems. We leave this as an open question for later study.

A special case that is relevant here is if we have two quantum states $|\psi\>_S$ and $|\phi\>_S$ and the auxiliary system $A$ begins in a state $|\omega\>_A$ with coherence and terminates in some energy eigenstate $|0\>_A$. This can be denoted as
\begin{equation}\label{forward-process}
|\phi\>_S \stackrel{\omega}{\longrightarrow} |\psi\>_S,
\end{equation}
however, if we are concerned with transitioning from $|\psi\>_S$ to $|\phi\>_S$ we can use the fact that if $V$ is an energy conserving unitary realising the process (\ref{forward-process}) then its inverse $V^\dagger$ exists and is also energy conserving. This implies that we can write the inverse transformation as
\begin{equation}
 |\psi\>_S \stackrel{-\omega}{\longrightarrow} |\phi\>_S,
\end{equation}
which is realised through the unitary transformation
\begin{equation}
V^\dagger [ | \psi\>_S \otimes |\omega\>_A ] = |\phi\>_S \otimes |0\>_A.
\end{equation}
This extension of notation allows us to define the negative coherent work output $-\omega$, which physically means the coherent quantum state $|\omega\>_A$ \emph{required} to realise the transformation -- in other words $\omega$ is the coherent work \emph{input} to the process. If a coherent work process exists in either direction between $|\psi\>_S$ and $|\phi\>_S$ then we say the states $|\psi\>_S$ and $|\phi\>_S$ are \emph{coherently connected}. From this definition we see that the whole set $\mathcal{C}$ is a coherently connected set in the sense that any two states in $\mathcal{C}$ are coherently connected. Note however that coherent connectedness is not a transitive relation: if $(|\psi_1\>, |\psi_2\>)$ and $(|\psi_2\> ,|\psi_3\>)$ are each coherently connected pairs then it does not imply that $(|\psi_1\>, |\psi_3\>)$ are coherently connected.

\subsection{But shouldn't `work' be a number?}

We have used the term `coherent work output', however the issue of what `work' is in general, is subtle. Here we do not wish to use the term without justification, and so now expand on why this is not an unreasonable use of language. In particular, we argue that the approach is fully consistent with operational physics and is a natural generalisation of what happens in both classical mechanics and statistical mechanics. For ease of analysis we list the main components of the argument:
\begin{itemize}
\item Work is done on a system $S$ in a process when an auxiliary physical system $A$ is required to transform, under energy conservation from a default state, so as to realise the process on $S$.
\item Work on a system $S$ in Newtonian mechanics is a number $w\in \mathbb{R}$ that operationally can be read off from the transformed state of $S$ or equivalently from the transformed auxiliary system $A$'s state.
\item Work on a system $S$ in statistical mechanics \emph{is not a number}, but is described mathematically by a random variable $W: (d\mu(x),x)$ on the real line. Operationally, individual instances are read off from measurement outcomes on $S$, and by energy conservation are in a one-to-one correspondence with outcomes on $A$.
\item Quantum mechanics is a non-commutative extension of statistical mechanics, in which classical distributions $d \mu(x)$ are replaced with density operators $\rho$ on some Hilbert space $\H$.
\item Complementarity in quantum theory implies that general POVM measurements on a state $\rho$ are incompatible, and moreover ``unperformed measurements have no result'' \cite{Peres}. No underlying values are assigned to a pure state $|\omega\>$ in a superposition of different classical states.
\end{itemize}
The last two points are vital. One can of course consider random variables obtained from measurements on quantum systems, but this is not the same thing and cannot describe quantum physics in any sensible way. Trying to use classical random variables to fully describe quantum physics is essentially a ``hidden variable'' approach, however we know from many results that it is impossible to define a hidden variable theory for quantum mechanics unless one is willing to make highly problematic assumptions, such as violation of local causality \cite{Peres}. 

\begin{center}
\begin{figure}[t]
\includegraphics[width=8cm]{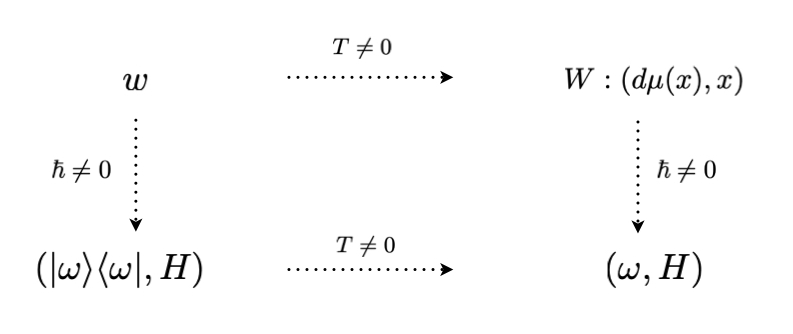}
\caption{\textbf{When does ``work'' make sense?} The two left-hand cases concern deterministic processes, whereas the two right-hand cases involve mixed states with unavoidable probabilistic elements. In Newtonian mechanics, work is a deterministic number $w$ and a largely unambiguous concept. Extensions into statistical mechanics mean that work is no longer described by a number, but instead is a classical random variable $W: (d \mu(x), x)$. Classical mechanical work occurs when this distribution is sharp, $\mu(x) = \delta (x-w)$. However one may extend the deterministic concept of work from Newtonian mechanics in a natural way into quantum mechanics. This leads to a coherent form of work, that coincides with a superposition of classical work processes and in the $\hbar \rightarrow 0$ limit recovers the classical notion. It also it arises in fluctuation relations in the same manner as its classical counterpart. }
\end{figure}
\end{center}
\vspace{-1cm}

In our approach, the coherent work process does \emph{not} associate a single, unique number to the coherent work output, instead we describe the output of the process in terms of a pair $(|\omega\>\<\omega|, H)$ that generalizes the classical stochastic pair $W:(d\mu(x) ,x)$ into a fully coherent setting. 

Obviously one could measure the state $|\omega\>$ in the energy eigenbasis to get measurement statistics $(p(E_1), p(E_1), p(E_2)\dots, )$, and associate energy scales to these observed outcomes, however this is a fundamentally different scenario -- the state $|\omega\>$ has no randomness in itself, instead it has complementarity in observables which is physically different. The measurement on $|\omega\>$, in contrast, introduces classical randomness and so steps out of the deterministic mechanical domain. Indeed, one can make the following comparison: \textit{the coherent work output $|\omega\>$ is to the measurement statistics $\p=(p(E_1), p(E_1), p(E_2)\dots, )$ what a laser is to incoherent thermal light}. We can obtain interference effects, entanglement and other non-classical phenomena from a quantum state $|\omega\>$, but not from a probability distribution $\p$. 

Note also, that essentially the same approach is taken for quantum entanglement. There is no measurement one can do corresponding to the ``amount of entanglement'' in a state, despite it being a crucial physical property of a state with dramatic effects. Instead, entanglement is quantified in terms of pure state `units', for example the Bell state $|\psi^-\>$. In both the coherent work case and the entanglement case, one ultimately concerned with empirical statistics and not abstract quantum states. The reconciliation of this is that quantification in terms of the abstract quantum states will determine the empirical statistics in a way that depends on the particular context involved.

In the context of the next section, by adopting this coherent description, we can evade recent no-go results \cite{NoGo} on fluctuation relations for thermodynamic systems and provide insight into recent developments in this field \cite{Aberg,Allahverdyan,Anders,Zoe,Autonomous} while also connecting naturally with well-established tools for the study of quantum coherence \cite{ReferenceFrame,Harmonic,MarvianThesis,QuantCoherence}. 

%

\section{Coherent Fluctuation Theorems}\label{sec: fluctuation theorem}

In this section we show that the concept of coherent work processes discussed in the previous section naturally occurs within a recent fluctuation theorem \cite{Aberg,Zoe,Alhambra} that explicitly handles all quantum coherences and which is experimentally accessible in trapped ion systems \cite{Zoe}. We first introduce the core physical assumptions involved and then explain how these naturally relate to the concept of coherent work processes. But first, we briefly re-cap on the classical Crooks fluctuation relation, where phase space trajectories are center stage.

\subsection{Classical, stochastic Tasaki-Crooks relation}

In the most basic setting one has a bath subsystem  $B$, initially in some thermal equilibrium state $\gamma_0 = \frac{e^{-\beta H_0}}{Z}$ with respect to an initial Hamiltonian $H_0$, which is then subject to some varying potential that changes its Hamiltonian $H_0 \rightarrow H(t) \rightarrow H_1$ and finishes in some final Hamiltonian $H_1$. Note that through an appropriate choice of units we can always arrange that a protocol starts at $t=0$ and finishes at time $t=1$. 

By doing sharp, projective measurements of the energy at the start and end of the protocol one obtains stochastic changes in energy on  $B$ for the protocol that lead to a stochastic definition of work (see the review \cite{ColloquiumFluctuation} and the recent paper \cite{AspectsWork} by Talkner and H\"{a}nggi for excellent discussions on the concept of work in statistical mechanics). In this setting one can readily derive a Tasaki-Crooks fluctuation relation that compares the work done during a forward protocol $\mathcal{P}: H_0 \rightarrow H(t) \rightarrow H_1$ with the work done in the time-reversed protocol $\mathcal{P}^*: H_1 \rightarrow H(t)_{\rm rev} \rightarrow H_0$, obtained by reversing the time-dependent variation of the system's Hamiltonian under $\mathcal{P}$ and beginning in equilibrium with respect to $H_1$. The Tasaki-Crooks relation compares the probability of doing work $W=w$ in the forward protocol with the probability of doing work $W=-w$ in the reverse protocol, and is given explicitly by
\begin{equation}
\frac{P[w | \gamma_0, \mathcal{P}]}{P[-w|\gamma_1, \mathcal{P}^*]} = e^{- \beta (\Delta F - w)}
\end{equation}
where $\Delta F := F_1 - F_0$ , and $F_k$ is the free energy of the equilibrium state $\gamma_k = \frac{e^{-\beta H_k}}{Z}$ for $k=0,1$. Note that there is no assumption that $B$ finishes in the equilibrium state $\gamma_1$, and there is no assumption that the protocol is in any way quasi-static. From this one can derive the Jarynski relation and the Clausius relation $\<W\> \ge \Delta F$ from traditional thermodynamics.

\begin{center}
\begin{figure}
\includegraphics[width=8cm]{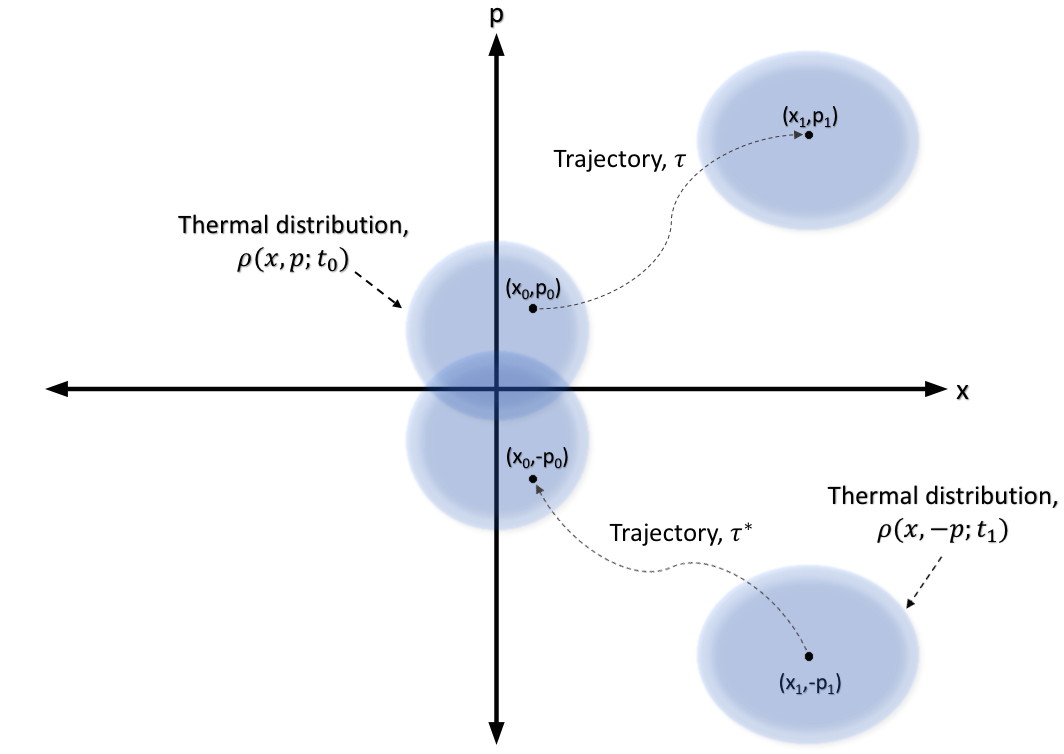}
\caption{\textbf{Classical fluctuation setting.} In  the  classical set-up,  any  particular  phase  space  point  ($\x,\p$)  is  taken  along  a unique  trajectory  ($\x(t),\p(t)$),  and  work  is  accumulated  in  the process.  Work  originates  as  a  change  in  Hamiltonian  from  a time $t_0$ to  a  time $t_1$.  This  change  is  induced  via  an  interaction with  an  implicit  energy  source,  in  which  the  total  system  can be  described  by  a  time-independent  Hamiltonian. Note that if one follows a classical Crooks derivation but imposes a restriction to coarse-grained resolutions of phase space (such as the shaded regions shown) then it is expected that one will obtained similar results to the coherent fluctuation relations for quantum theory.}
\end{figure}
\end{center}

\subsection{From stochastic relations to genuinely quantum fluctuation relations}
While the formalism of the two-point measurement scheme is applicable to both classical and quantum systems alike, it must be emphasized that any genuine coherent features are destroyed and the operational physics is entirely equivalent to classical stochastic dynamics on a classical system (see \cite{JenningsLeifer} for a discussion of why this stochasticity is essentially classical). Several works \cite{Allahverdyan,NoGo,HarryQuasiProb,Anders,NicoleJarzynski,Alhambra,AlbashFluctuation}  have proposed variants of the two-point scheme with the aim of providing fluctuation relations that capture genuinely quantum-mechanical features. Of note are the no-go results of \cite{Contextuality,NoGo} that prove that certain desirable features for a measurement scheme are incompatible. One model that attempts to circumvent these obstacles was proposed in \cite{Allahverdyan} where projective measurements were replaced by weak measurements, and in which quasi-probability distributions arise. A related analysis in terms of quasi-probabilities was provided in \cite{NicoleJarzynski}, while in \cite{Contextuality} it was shown that these ``negative probabilities'' are in fact a witness of contextuality -- a fundamental feature of quantum mechanics that includes quantum entanglement as a special case \cite{KochenSpecker,ContextualityThesis} and is conjectured to be the key ingredient providing speed-ups in quantum computers \cite{ContextualityMBQC,FrembsContextuality}.

However weak measurements are a very particular type of quantum measurement and so one can ask if an over-arching framework exists that includes the classical stochastic setting, weak measurements and fully general POVM measurements that lead to coherent fluctuation relations. In \citep{Aberg} a general framework was developed by \r{A}berg. The framework developed can be viewed as a fully coherent version of the ``inclusive'' description taken by Deffner and Jarzynski in \cite{DeffnerJarzynski} for the thermodynamics of classical systems, and leads to an extremely general fluctuation relation that handles coherence, and includes the classical fluctuation relations as special cases. However the technical level of the analysis is formidable and quite different from existing approaches. The physical assumptions of the model are natural, however it was not clear how one should interpret the broad, abstract results in more familiar terms.

\subsection{The coherent fluctuation framework}
Before proceeding, we make a brief comment on our notation. We shall use capital Roman letters ($A,B,C$) at the start of the alphabet to denote quantum systems that are initialized in quantum states with no coherence between energy eigenspaces. For example, for an auxiliary `weight' systems initialized in $|0\>$ we use $A$. In this analysis we are primarily concerned with coherent features and so we use the notation $S$ for a quantum system that initially has coherence between energy eigenspaces.

For fluctuation relation contexts it is typically the case that we have a large thermal bath that we do not control, together with another quantum system that we do control and which is initially in thermal equilibrium with the bath. Following the above convention we label these $B_1$ (e.g.\!\! for the uncontrolled system) and $B_2$ (e.g.\!\! for the controlled thermal system), and reserve $S$ for additional coherent degrees of freedom that we can also control and wish to study.

The approach taken in \cite{Aberg} can be viewed as a generalisation of \cite{DeffnerJarzynski} to a fully quantum setting in which all energies and all coherences in energy are accounted for explicitly. Moreover, it is exactly in the same spirit as the coherent work processes as introduced earlier -- quantum coherences in energy are always relational degrees of freedom and so an inclusive approach is natural. Specifically, in addition to the primary system $S$ one introduces an additional bath subsystem $B$ and the composite system $SB$ is \emph{energetically closed} in terms of energy flows and coherence flows, but not necessarily \emph{dynamically closed}\footnote{This does not mean that the system is ``autonomous'' -- an experimenter can still implement general time-dependent protocols by macroscopically varying internal parameters of the system.}. Therefore we make use of an inclusive ``microscopic'' description of the energetic degrees of freedom with the aim of arriving at a fluctuation relation that handles arbitrary coherence.

A key thing to highlight is that every fluctuation relation (classical or quantum) involves such an additional system $S$, however this system is generally left implicit as the external physical degrees of freedom that provide energy and coherence so as to change the Hamiltonian of $B$ in some time-dependent manner. Such time-dependent variations of a Hamiltonian interact non-trivially with coherent structures and so this is a primary reason why such an inclusive microscopic description is needed.

A crucial point is the following: since the system $SB$ is closed in terms of energy and coherence flows this means that all energy/coherence changes in $S$ correspond in a one-to-one fashion to the energy/coherence changes in $B$. 

The assumptions of the framework are as follows:
\begin{enumerate}
\item A microscopic, inclusive description is taken for a thermal bath and quantum system.
\item The microscopic description is energetically closed, but not dynamically closed.
\item Time-reversal symmetry holds for the microscopic dynamics of the composite system.
\item The thermal system $B$ is initially in some Gibbs state with respect to an initial Hamiltonian and $S$ is in some arbitrary quantum state.
\end{enumerate}

  The model therefore involves the initialisation of $SB$ in some joint product state $\rho \otimes \gamma_0$ where $B$ is in thermal equilibrium, as above, while $S$ is allowed to be in an arbitrary state $\rho$. The primary system is then subject to open system dynamics that transforms the state as $\rho \rightarrow \E(\rho)$ where
\begin{equation}
\E(\rho)= \tr_B V(\rho \otimes \gamma_0)V^\dagger,
\end{equation}
in terms of a microscopic unitary $V$ on $SB$ that may be partially controlled by macroscopic parameters by an experimenter. Note that any protocol that an experimenter implements will in general be through macroscopic parameters, however any such transformation will admit such a description via a Stinespring dilation \cite{StinespringDilation}. We refer the reader to \citep{Aberg} for a discussion of this point. 

In the incoherent regime, the fluctuation theorem compares transition probabilities of $\rho \to \sigma$ with transition probabilities of $\Theta \sigma \Theta^\dagger \to \Theta \rho \Theta^\dagger$ where $\Theta$ is a time-reversal operator. We shall find that when coherence is present, this must be generalised so that the core Crooks construction can be implemented.

\subsection{Time-dependent Hamiltonians within a microscopic, inclusive constraint--description}
Since we must first account for microscopic degrees of freedom, any protocol for the change in Hamiltonian $H_0 \rightarrow H(t) \rightarrow H_1$ should be handled with care. The physical reason for this is that a general time-dependent Hamiltonian will interact non-trivially with the quantum coherence between energies and so to properly describe the latter one must be careful with the former. At the microscopic level Hamiltonians are generators of time-translation, whereas any time-dependent Hamiltonian is always an effective description that arises through the interaction with some external system. 

Consider any time-dependent Hamiltonian $H(t)$ scenario that evolves from some initial $H_0 := H(0)$ to some final $H_1 := H(1)$. Since the `$t$' in $H(t)$ corresponds to a physically discernible value this implies that it can always be made explicit within a Hilbert space description via a degree of freedom of the composite system (otherwise the parameter is physically meaningless!).

Therefore, there always exists a Hilbert space description of a time-dependent Hamiltonian in terms of a composite $SB$ system with a Hilbert space that takes the form
\begin{equation}\label{split}
\H_{SB} =  \H_S \otimes (\H^0_B \oplus \H_B^1 \oplus \H_B^{\rm other} ),
\end{equation}
where $\H^0_B$ is the span of the eigenstates $\{|E_k^0\>\}$ of the initial Hamiltonian $H_0$, and similarly for $\H^1_B$ and $H_1$. The subspace $\H_B^{\rm other}$ corresponds to any other physical degrees of freedom that may be accessed at intermediate times $0 < t <1$ of the protocol\footnote{The description of a continuous infinity of `t' values in this vein has technical subtleties, namely the mathematical awkwardness of non-separable Hilbert spaces. Experimentally however, physics always involves a finite resolution scale and so this subtlety has no physical content here.}. Therefore any protocol $H_0 \rightarrow H(t) \rightarrow H_1$ on a quantum system can be understood at a microscopic level as the evolution of the bath subsystem $B$ from being constrained solely to the subspace $\H_B^0$, at the start of the protocol (t=0), to it being constrained to the subspace $\H_B^1$ at the end of the protocol (t=1). The underlying Hamiltonian for the composite system is therefore $H_{SB} = H_S \otimes \I_B + \I_S \otimes H_B$, with
\begin{equation}
H_B = H_0 \oplus H_1 \oplus H_{\rm other},
\end{equation}
where we simply combine all possible intermediate Hamiltonians $H(t)$ into the term $H_{\rm other}$ simply for compactness. This underlying description in terms of constraints, while appearing strange from the traditional $H(t)$ formulation, is entirely consistent with the usual story. More importantly, it turns out to provide a powerful perspective in the context of fluctuation relations with coherence.

\subsection{Coherent thermalisation of a system with respect to constraints}
In our analysis, the constraint--description for the time-dependent protocol turns out to connect with the traditional notion of thermodynamic constraints in phenomenological settings (see for example Callen \cite{Callen}). Specifically, we can talk of a ``coherent thermalisation'' of $B$ with respect to a constraint $\mathcal{C}$ at inverse temperature $\beta$, as we now describe through examples. 

Let $\mathcal{C}_0$ be the constraint ``$B$ is constrained to $\H_B^0$''. Mathematically this constraint is described by the projector $\Pi_0 = \sum_k |E_k^0\>\<E_k^0|$ onto the subspace $\H_B^0$. Experimentally this corresponds to a POVM measurement on $SB$ given by $\{\I_S \otimes \Pi_0, \I_S \otimes \Pi_1, \I_S \otimes \Pi_{\rm other} \}$ that asks: does  $B$ have a Hamiltonian $H_0$, the Hamiltonian $H_1$, or some intermediate Hamiltonian $H(t)$? In actual experiments it simply amounts to the experimenter looking at what the classical dials of the apparatus are set to, or if the time-varying protocol is fixed, looking at a clock to determine the time $t$. In the event of the first outcome, the system  $B$ is updated via the projector $\Pi_0$ that ensures it is entirely constrained to $\H_B^0$ and so $\mathcal{C}_0$ simply amounts to the statement: \textit{the system $B$ has Hamiltonian $H_0$}. 

Thermalisation with respect to the constraint $\mathcal{C}_0$ represented by a projector $\Pi_0$ is now defined \cite{Aberg} by the transformation
\begin{align}
\Pi_0 \rightarrow \Gamma(\Pi_0) &= \frac{e^{-\beta H_B/2} \Pi_0 e^{-\beta H_B/2}}{\tr (e^{-\beta H_B} \Pi_0)}\\
&= \frac{e^{-\beta H_0}}{Z} = \gamma_0,
\end{align}
and thus $\Gamma(\Pi_0)$ represents the statement that  $B$ has Hamiltonian $H_0$ and is thermalised with respect to it, at inverse temperature $\beta$. The mapping $\Gamma$ should be viewed as transforming a Hamiltonian constraint into a thermodynamic constraint. In exactly the same way, we also have that $\Gamma(\Pi_1) = \gamma_1$.

The particular form of the coherent thermalisation transformation, which we shall simply call Gibbs rescaling, is required for various reasons, however certain important cases should first be highlighted. If the constraint $\mathcal{C}$ is ``The system $B$ is in the energy eigenstate $|E_k\>$ of $H_0$'' then this is a much stronger constraint. Again, this is represented via a projector $|E^0_k\>\<E^0_k|$, however now we find that 
\begin{equation}
\Gamma(|E^0_k\>\<E^0_k|) = \frac{e^{-\beta H_B/2} |E^0_k\>\<E^0_k|e^{-\beta H_B/2}}{\tr (e^{-\beta H_B}|E^0_k\>\<E^0_k|)} = |E_k^0\>\<E_k^0|.
\end{equation}
Thus if $B$ is constrained to be exactly in the sharp eigenstate $|E_k^0\>$ then there are no remaining degrees of freedom to thermalise, and the state remains the same.

A more interesting case is if the only thing we know is that $B$ is in pure state, with a uniform superposition over all the eigenstates of $H_0$. This condition is described by the projector $|\I_0\>\<\I_0|$ where
\begin{equation}
|\I_0\> := \frac{1}{\sqrt{d_0}} \sum_k |E_k^0\>, 
\end{equation}
where $d_0$ is the dimension of $H_0$. It is readily checked that for this case
\begin{equation}
\Gamma(|\I_0\>\<\I_0|) = |\gamma_0\>\<\gamma_0|,
\end{equation}
where $|\gamma_0\> = \frac{1}{\sqrt{Z_0}} \sum_k e^{-\beta E_k^0/2} |E_k^0\> $ is the coherent Gibbs state with respect to the initial Hamiltonian $H_0$. Or, in terms of entanglement we can consider the maximally entangled state $|\phi^+\>_{AB} =\frac{1}{\sqrt{d}} \sum_{k=0}^{d-1} |E_k\>_A\otimes |E_k\>_B $ on two quantum systems of dimension $d$. In this case, coherent thermalization of the maximally entangled state with respect to $A$ leads to
\begin{align}
\Gamma(|\phi^+\>\<\phi^+| )& = |\tilde{\phi}\>\<\tilde{\phi}| \\
|\tilde{\phi}\>&= \frac{1}{\sqrt{Z}} \sum_k \sqrt{e^{-\beta E_k}} |E_k\>_A \otimes |E_k\>_B,
\end{align}
which is the thermofield double state \cite{ThermoFieldDouble} in high energy physics and condensed matter.

These examples justify $X\mapsto \Gamma (X)$ as describing a formal coherent thermalisation: if no coherences are present $\Gamma$ provides the thermal Gibbs state; pure states are always sent to pure states; and coherent superpositions over energy are re-weighted with Gibbs factors at inverse temperature $\beta$. 
However, it is important to emphasise that $\Gamma$ is not a physical transformation but an invertible mapping on states that pairs $(\rho,\Gamma(\rho))$, as being distinguished --  in the same way as time-reversal pairs $(\rho,\Theta \rho \Theta^\dagger)$.

The time-reversal operator pairs a forward temporal-direction with a backward temporal-direction, while $\Gamma$ pairs a Hamiltonian constraint with a thermodynamic constraint. It might seem strange that we pair a mechanical constraint with one that has an explicit thermodynamic feature (namely a temperature), but the need for this arises from not necessarily having sharp microstate properties in quantum superpositions and it is this indeterminacy that requires the Gibbs-rescaling. Such a scenario would also arise in a classical example where instead of doing a Crooks relation on microstates $(\x,\p)$ one follows the same construction on distinguished distributions $\{q_k(\x,\p)\}$, e.g. if we took coarse-grained resolutions of finite sized support on phase space. The $q_k(x,p)$ would also have to be Gibbs-rescaled in order to obtain a coarse-grained Crooks relation (this is the classical analog of the above example of projector $\Pi$ that is not of rank-1). 

For the case where the constraint is the above $\Pi_0$, the pairing $(\Pi_0, \Gamma(\Pi_0))$ could potentially be viewed as a statement of the equivalence of the microcanonical and canonical ensembles in a similar vein to recent classical results \cite{MicrocanonicalMap}. This perspective can be formulated in terms of dualities between constrained and unconstrained optimization problems, and we conjecture that the present Gibbs-rescaling could be naturally described in such terms. We leave this to future analysis.

Beyond its role with time-reversal in identifying the correct trajectories to pair, we note that the transformation $\Gamma$ is not a linear physical map but instead amounts to an change in our \emph{description} of the system. In particular, one that is covariant under energy conserving unitaries $V$, with $V\Gamma(\rho)V^\dagger = \Gamma(V\rho V^\dagger)$. This structure arises naturally in fluctuation settings and scenarios in which one reverses a general quantum operation. In \cite{CrooksTimeReversal} Crooks considered the reversal of a Markov process, and found the resulting transformation generates factors of $e^{-\beta E_k/2}/\sqrt{Z}$ that Gibbs re-scale all probability distributions. The above mapping should be viewed as the extension of this to a fully coherent setting. The transformation $\Gamma(X)$ also arises in the context of the Petz recovery map \cite{Petz1986} for quantum channels, and of course in the paper \cite{Aberg} by \AA berg, where it is discussed in more detail. Its form also arises in quantum-mechanical settings for the transition between microscopic and macroscopic descriptions of thermodynamic systems \cite{FaistThesis}, which is perhaps the most appropriate perspective in light of the above constraint discussion. It would be of interest to obtain an independent operational analysis of this transformation (perhaps as a coherent form of a maximum entropy principle). We do not expand any more on this here, but instead refer the reader to \cite{HyukJoon} and \citep{Aberg} for more discussion.

\subsection{A key relation between forwards and reverse protocols for quantum systems}

Given the preceding discussion we can derive the coherent fluctuation relation, however this follows from a core structure that is describable on a single quantum system. This was discussed in \cite{Aberg}, but here we provide a slightly modified approach that will prove useful for what follows. Let $S_{\mbox{\tiny tot}}$ be any quantum system, with Hamiltonian $H_{\mbox{\tiny tot}}$. We consider the following sequence that abstracts the classical notion of a `trajectory' to a coherent form for $S_{\mbox{\tiny tot}}$:
\begin{enumerate}
\item A measurement is done on $S_{\mbox{\tiny tot}}$  with outcome given by a (not necessarily rank-1) projector $\Pi_0$.
\item Coherent thermalisation of $\Pi_0$ with respect to $H_{\mbox{\tiny tot}}$ and at inverse temperature $\beta$ updates the system to the state $\Gamma (\Pi_0)$.
\item The system $S_{\mbox{\tiny tot}}$  evolves unitarily under a unitary $V$ that commutes with $H_{\mbox{\tiny tot}}$.
\item A final measurement is done on $S_{\mbox{\tiny tot}}$  with projective outcome $\Pi_1$ (again not necessarily rank-1).
\end{enumerate}
We denote the probability of this sequence as $P[\Pi_1 | \Pi_0]$, and is given explicitly as
\begin{equation}\label{basic-trajectory}
P[\Pi_1 | \tilde{\Pi}_0] = \tr [\Pi_1 V (\Gamma(\Pi_0)) V^\dagger].
\end{equation}
The expression can be viewed as the probability of a particular `trajectory' under the above protocol. 

The fluctuation setting also requires a notion of a time-reversal of a trajectory. In quantum mechanics, time-reversal is an anti-unitary transformation at the level of the Hilbert space. A state transforms as $|\psi\> \rightarrow \Theta |\psi\>$ where $\Theta$ is both anti-linear (e.g. $\Theta (\alpha |\psi\>) = \alpha^* \Theta|\psi\>$ for any $\alpha \in \mathbb{C}$) and $\Theta^\dagger \Theta = \Theta \Theta^\dagger = \I$. However any anti-unitary $\Theta $ can be written as $KU$ where $U$ is some unitary and $K$ is complex conjugation in a preferred basis. At the level of Hermitian operators (e.g. projectors, observables, quantum states...) for which we have that $X = X^\dagger$, this complex conjugation $X \rightarrow K^\dagger X K = X^* $ is equivalent to simply taking a transpose of the operator $X \rightarrow X^T$. For the case of time-reversal, we can identify the time-reversal of a state $\rho$ as $\rho \mapsto \rho^* = \rho^T$. An inspection of the off-diagonal components of $\rho$ in the energy eigenbasis (which is the relevant basis for the harmonic oscillator system below) makes this clearer: a typical component $\rho_{ij}$ will oscillate in time as $e^{-i \omega_{ij}t}$ and thus by taking the complex conjugation one obtains an oscillation of $e^{+i \omega_{ij}t}$ for the matrix component -- in other words one swaps the positive and negative modes in the quantum state.

Since $\tr[ X] = \tr [X^T]$ for any Hilbert space operator $X$, and $(XY)^T = Y^T X^T$, we see that Equation (\ref{basic-trajectory}) can be re-written as
\begin{align}
P[\Pi_1 | \tilde{\Pi}_0)] &=\tr [(\Pi_1 V \Gamma(\Pi_0) V^\dagger)^T] \\
&=\tr [( V^\dagger)^T\Gamma(\Pi_0)^T V^T \Pi_1^T].
\end{align}
where we have introduced the short-hand notation $\tilde{X} := \Gamma(X)$ for any operator $X$.
We now make the final assumption that $V$, in addition to $[V,H_{\mbox{\tiny tot}}] =0$, is also invariant under the above time reversal transformation -- namely $V=V^T$. Informally, this condition can be interpreted as assuming that the unitary $V$ does not inject in any microscopic time-asymmetry into the system, and thus any differences in probabilities between forward and reverse trajectories are purely due to the thermodynamic structure.

Using that $V=V^T$, and writing $X^* = X^T$ for any Hermitian $X$ we see that
\begin{align}
P[\Pi_1 | \tilde{\Pi}_0 ] &= \tr [( V^\dagger)\Gamma(\Pi_0)^T V \Pi_1^*] \nonumber \\
&= \frac{1}{\tr (e^{-\beta H_{\rm tot}} \Pi_0)}\tr [e^{-\beta H_{\rm tot}/2}\Pi_0^* e^{-\beta H_{\rm tot}/2}V \Pi_1^* V^\dagger]\nonumber \\
&= \frac{\tr (e^{-\beta H_{\rm tot}} \Pi_1^*)}{\tr (e^{-\beta H_{\rm tot}} \Pi_0)}\tr [\Pi_0^*V\Gamma(\Pi_1^* )V^\dagger]\nonumber \\
&= \frac{\tr (e^{-\beta H_{\rm tot}} \Pi_1)}{\tr (e^{-\beta H_{\rm tot}} \Pi_0)}P[\Pi_0^* | \tilde{\Pi}_1^*], 
\end{align}
where we used the fact that $H_{\mbox{\tiny tot}}^T = H_{\mbox{\tiny tot}}$ to write $\tr (e^{-\beta H_{\mbox{\tiny tot}}} \Pi_1^*)=\tr (e^{-\beta H_{\mbox{\tiny tot}}} \Pi_1)$. We can therefore express the ratio of an abstract `forward trajectory' to its `reversed trajectory' as
\begin{equation}\label{key}
\frac{P[\Pi_1 | \tilde{\Pi}_0]}{P[\Pi_0^* | \tilde{\Pi}_1^*]} = \frac{\tr (e^{-\beta H_{\mbox{\tiny tot}}} \Pi_1)}{\tr (e^{-\beta H_{\mbox{\tiny tot}}} \Pi_0)}.
\end{equation}
This is the key relation that we will use to obtain the fluctuation relations.

\subsection{Coherent Tasaki-Crooks relation.}

As already mentioned, we adopt an inclusive description in which the total system $S_{\mbox{\tiny tot}}$ is actually composed of two subsystems $S$ and $B$, and the total system $SB$ is energetically closed, but not dynamically closed. As discussed earlier, one should view the initially incoherent $B$ system as sub-divided into two components $B=B_1B_2$ where $B_1$ is a large, uncontrolled thermal bath while $B_2$ corresponds to a quantum system for which we can control its degrees of freedom -- for example we can vary its Hamiltonian in some manner. The total Hamiltonian for $SB$ is as before and has eigenspaces given by the decomposition in Equation (\ref{split}).

We now consider an initial constraint $\mathcal{C}_0$ for $t=0$ and an final constraint $\mathcal{C}_1$ for $t=1$, given by
\begin{align}
\mathcal{C}_0 &: \mbox{   $B$ has Hamiltonian $H_0$ and  $S$ is in a state $|\psi_0\>$.} \nonumber \\
\mathcal{C}_1 &: \mbox{  $B$ has Hamiltonian $H_1$ and $S$ is in a state $|\psi_1\>$.} \nonumber 
\end{align}
Mathematically, these are represented by $\Pi_{SB,0}= |\psi_0\>\<\psi_0| \otimes \Pi_0 $ and $\Pi_{SB,1}= |\psi_1\>\<\psi_1| \otimes \Pi_1 $ respectively. where $\Pi_k$ is now the projector onto $\H_B^k$ for the system $B$.

The coherent thermalisation for the initial set-up gives $\Gamma(\Pi_{SB,0}) =  |\tilde{\psi}_0\>\<\tilde{\psi}_0| \otimes \gamma_0 $ where $|\tilde{\psi}_0\>\<\tilde{\psi}_0| := \Gamma_S (|\psi_0\>\<\psi_0|)$ where $\Gamma_S$ is Gibbs-rescaling purely with respect to the Hamiltonian $H_S$. If we now substitute these components into Equation (\ref{key}) we obtain

\begin{center}
\begin{figure}[t]
\includegraphics[width=8cm]{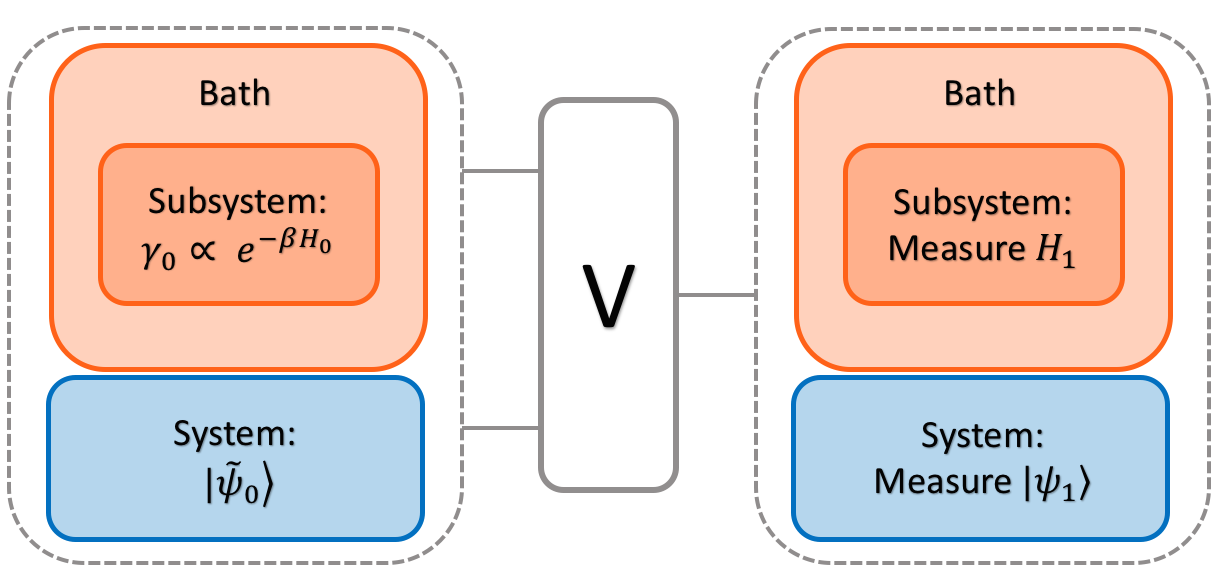}
\caption{\textbf{Global system schematic.} A system $S$, initially uncorrelated with a thermal system $B$, evolve under a microscopically energy conserving unitary $V$, to produce open-system dynamics on $S$. }  
\label{fig:FTSchematic}
\end{figure}
\end{center}
\vspace{-10pt}

\begin{align}
\frac{P[\Pi_{SB,1} | \tilde{\Pi}_{SB,0}]}{P[\Pi_{SB,0}^* | \tilde{\Pi}_{SB,1}^*]} &= \frac{\tr (e^{-\beta H_{SB}} |\psi_1\>\<\psi_1| \otimes \Pi_1 )}{\tr (e^{-\beta H_{SB}} |\psi_0\>\<\psi_0| \otimes \Pi_0)} \nonumber \\
&= \frac{\tr (e^{-\beta H_1})}{\tr(e^{-\beta H_0})} \frac{\tr (e^{-\beta H_S} |\psi_1\>\<\psi_1|)}{\tr (e^{-\beta H_S} |\psi_0\>\<\psi_0|)}  \nonumber\\
&=e^{-\beta \Delta F - \Delta \Lambda},
\end{align}
where $\beta F_k = -\log \tr e^{-\beta H_k}$ is the free energy of $B$ at time $t=k$, $\Delta F = F_1 - F_0$, and $\Delta \Lambda := \Lambda(\beta, \psi_1) - \Lambda(\beta, \psi_0)$, where we define $\Lambda(\beta, \rho)$ the \emph{effective potential} of a state $\rho$ at inverse temperature $\beta$ as
\begin{equation}
\Lambda(\beta, \rho) := - \log \tr (e^{-\beta H_S} \rho).
\end{equation}

With the understanding that the Hamiltonian of $B$ changes from $H_0$ to $H_1$ we can simply write
\begin{equation}
\frac{P[\psi_1 | \tilde{\psi}_0]}{P[\psi_0^* | \tilde{\psi}_1^*]} =  ^{-\beta \Delta F - \Delta \Lambda},
\end{equation}
as the final coherent Crooks relation for the ratio of the probabilities of a given forward coherent trajectory to the probability of its reverse trajectory.

It is important to note that if one ranges over all possible POVMs on $S$ then the above relation is equivalent to the abstract channel relation that was first derived by \r{A}berg in \cite{Aberg}. The difference here is that we focus on projective measurements that admit a simple constraint interpretation and introduce the effective potential. Both of these turn out to be key in our physical analysis of the relation. We now state the core quantum fluctuation theorem in terms of the effective potential of the quantum states. 
\begin{theorem}[\textbf{Effective potential form}]\label{thm:Fluctuation}
Let $H = H_S \otimes \I_B + \I_S \otimes H_B$ be a microscopic, time-reversal invariant Hamiltonian and assume $B$ begins in a thermal state at inverse temperature $\beta = 1/(kT)$. Additionally, assume the dynamics admit a microscopic, time-reversal invariant unitary $V$ such that $[V,H]=0$. Then with $P[\psi_1|\tilde{\psi}_0]$ and $P[\psi_0^*|\tilde{\psi}_1^*]$ defined as above, the  transition probabilities satisfy:
\begin{equation}
\frac{P[\psi_1 | \tilde{\psi}_0]}{P[\psi_0^* | \tilde{\psi}_1^*]} = e^{-\beta\Delta F - (\Lambda (\beta, \psi_1)- \Lambda (\beta, \psi_0))},
\end{equation}
where  $\Delta F = (F_1 - F_0)$ is the difference in free energies with respect to the final and initial Hamiltonians of $B$, and $\Lambda(\beta,\rho)$ is the effective potential for any state $\rho$ on $S$ at inverse temperature $\beta$.
\end{theorem}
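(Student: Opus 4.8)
The plan is to obtain the theorem as a direct specialization of the key relation (\ref{key}), which I may assume holds for \emph{arbitrary} (not necessarily rank-$1$) projectors $\Pi_0,\Pi_1$ on any total system under the stated hypotheses $[V,H_{\mbox{\tiny tot}}]=0$ and $V=V^T$. The entire mathematical substance already lives in that relation; what remains is to feed it the inclusive constraint projectors and exploit the \emph{additive} structure of $H$ to separate the bath contribution from the system contribution.

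First I would set $S_{\mbox{\tiny tot}}=SB$ with $H_{\mbox{\tiny tot}}=H=H_S\otimes\I_B+\I_S\otimes H_B$, using the decomposition (\ref{split}), and take the two constraints to be the product projectors $\Pi_{SB,0}=|\psi_0\>\<\psi_0|\otimes\Pi_0$ and $\Pi_{SB,1}=|\psi_1\>\<\psi_1|\otimes\Pi_1$, where $\Pi_k$ projects onto $\H_B^k$. The hypotheses of (\ref{key}) hold verbatim by assumption, and the generality of (\ref{key}) to non-rank-$1$ projectors is exactly what lets the (generically high-dimensional) bath factor $\Pi_k$ be admissible. I would also record that coherent thermalisation factorizes, $\Gamma(\Pi_{SB,0})=|\tilde\psi_0\>\<\tilde\psi_0|\otimes\gamma_0$, precisely because $H$ is additive and hence $e^{-\beta H/2}$ factors across the tensor product; this is what turns the Hamiltonian constraint on $B$ into the thermal state $\gamma_0$ and identifies the two sides of the ratio with the stated probabilities $P[\psi_1|\tilde\psi_0]$ and $P[\psi_0^*|\tilde\psi_1^*]$. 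Since the bath projectors are real, $\Pi_k^*=\Pi_k$, the complex conjugation carried by time-reversal acts only on the system states, matching the starred notation.

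The central step is then the factorization of the right-hand side of (\ref{key}). Because $H_S\otimes\I_B$ and $\I_S\otimes H_B$ commute, $e^{-\beta H}=e^{-\beta H_S}\otimes e^{-\beta H_B}$, so each trace splits as a product of a trace over $S$ and a trace over $B$. I would evaluate the bath traces using that $\Pi_k$ projects onto $\H_B^k$, giving $\tr_B(e^{-\beta H_B}\Pi_k)=\tr(e^{-\beta H_k})=e^{-\beta F_k}$, so their ratio is $e^{-\beta\Delta F}$. The system traces are, by the definition $\Lambda(\beta,\rho)=-\log\tr(e^{-\beta H_S}\rho)$, equal to $e^{-\Lambda(\beta,\psi_k)}$, so their ratio is $e^{-(\Lambda(\beta,\psi_1)-\Lambda(\beta,\psi_0))}$. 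Multiplying the two ratios yields $e^{-\beta\Delta F-(\Lambda(\beta,\psi_1)-\Lambda(\beta,\psi_0))}$, which is the claim.

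The only place requiring genuine care — and the reason the theorem is nontrivial despite its short derivation — is the validity of the underlying relation (\ref{key}) in the coherent setting: establishing it needs both $[V,H]=0$ and the time-reversal symmetry $V=V^T$, together with the transpose/cyclicity manipulation and the Gibbs-rescaling form of $\Gamma$. Granting that relation, the remaining obstacle is merely to confirm that $\Gamma$ and the time-reversal each respect the tensor-product split, so that the bath degrees of freedom contribute only through the partition functions $Z_k=e^{-\beta F_k}$ while all coherence-dependent information is isolated in the system factor $\Lambda(\beta,\psi_k)$. This clean separation is exactly what the additive Hamiltonian guarantees, and is the conceptual payoff of the result.
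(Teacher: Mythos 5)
Your proposal is correct and takes essentially the same route as the paper: the paper likewise obtains Theorem \ref{thm:Fluctuation} by substituting the product constraint projectors $\Pi_{SB,k}=|\psi_k\>\<\psi_k|\otimes\Pi_k$ into the key relation (\ref{key}), using the factorization $\Gamma(\Pi_{SB,0})=|\tilde{\psi}_0\>\<\tilde{\psi}_0|\otimes\gamma_0$, and splitting the traces via $e^{-\beta H}=e^{-\beta H_S}\otimes e^{-\beta H_B}$ so that the bath factors yield $e^{-\beta\Delta F}$ and the system factors yield the effective potentials. There is no gap; your emphasis that all the nontrivial content resides in establishing (\ref{key}) matches the paper's structure exactly.
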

The effective potential $\Lambda(\beta, \psi)$ can be viewed as a logarithm of the Laplace transform for the quantum state $|\psi\>$ of $S$ with respect to $H_S$, and as such it corresponds to the cumulant generating function \cite{Feller} for the measurement statistics of energy in the quantum state $|\psi\>$. Thus by expansion in terms of the cumulants, the fluctuation relation can be re-expressed as
\begin{equation}
\frac{P[\psi_1 | \tilde{\psi}_0]}{P[\psi_0^* | \tilde{\psi}_1^*]}  = e^{-\beta \Delta F - \sum_{n \ge 1}(-1)^n\frac{\beta^n }{n!} \Delta \kappa_n},
\end{equation}
where $\Delta \kappa_n = \kappa_n(\p_1) - \kappa_n(\p_0)$, and $\kappa_n(\p_k)$ is the $n^{\rm th}$ cumulant for the random variable obtained if one measured $H_S$ in the state $|\psi_k\>$, and which has probability distribution $\p_k$ over energy. Under the assumption that all cumulants are finite, one immediately sees that in the high temperature regime one recovers a classical form for the coherent Crooks relation in which the first order $\beta$ term dominates.
\begin{corollary}
In the high temperature limit,
\begin{equation}
 \frac{P[\psi_1|\tilde{\psi}_0]}{P[\psi_0^*|\tilde{\psi}_1^*]}  = e^{\beta (W_B - \Delta F) + \mathcal{O}(\beta^2)}
\end{equation}
where $ W_B := \<H_B\>_{t=1} - \<H_B\>_{t=0} = - [\< H_S \>_{\psi_1} - \< H_S \>_{\psi_0}] $.
\end{corollary}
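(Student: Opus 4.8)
The plan is to obtain the corollary as the leading-order (small-$\beta$, i.e. high-temperature) truncation of Theorem~\ref{thm:Fluctuation}. I would take the effective-potential form of the relation as the starting point,
\begin{equation}
\frac{P[\psi_1|\tilde{\psi}_0]}{P[\psi_0^*|\tilde{\psi}_1^*]} = e^{-\beta\Delta F - \Delta\Lambda}, \qquad \Delta\Lambda = \Lambda(\beta,\psi_1) - \Lambda(\beta,\psi_0), \nonumber
\end{equation}
so that the entire problem reduces to extracting the behaviour of $\Delta\Lambda$ to first order in $\beta$. Working directly with $\Lambda$ (rather than the full cumulant series) is the cleanest route, since only the first cumulant survives at order $\beta$.

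First I would expand the Laplace transform inside $\Lambda(\beta,\psi) = -\log\tr(e^{-\beta H_S}|\psi\rangle\langle\psi|)$. Writing $\tr(e^{-\beta H_S}|\psi\rangle\langle\psi|) = 1 - \beta\langle H_S\rangle_\psi + \tfrac{1}{2}\beta^2\langle H_S^2\rangle_\psi - \cdots$ and composing with $\log(1+x)=x-\tfrac{1}{2}x^2+\cdots$, the linear term is governed solely by the mean energy, giving
\begin{equation}
\Lambda(\beta,\psi) = \beta\langle H_S\rangle_\psi + \mathcal{O}(\beta^2). \nonumber
\end{equation}
This is simply the statement that $\Lambda$ is the cumulant generating function whose first cumulant is $\kappa_1 = \langle H_S\rangle_\psi$. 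Hence $\Delta\Lambda = \beta(\langle H_S\rangle_{\psi_1} - \langle H_S\rangle_{\psi_0}) + \mathcal{O}(\beta^2)$, and substituting into the exponent yields $-\beta\Delta F - \beta(\langle H_S\rangle_{\psi_1}-\langle H_S\rangle_{\psi_0}) + \mathcal{O}(\beta^2)$.

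The final step is to invoke microscopic energy conservation to re-express the mean-energy change on $S$ as work on the bath. Because $[V,H]=0$ with $H = H_S\otimes\I_B + \I_S\otimes H_B$, the change in mean energy of $S$ is exactly compensated by that of $B$, which is precisely the identity $W_B = -(\langle H_S\rangle_{\psi_1}-\langle H_S\rangle_{\psi_0})$ supplied in the statement. Thus $-\Delta\Lambda = \beta W_B + \mathcal{O}(\beta^2)$ and the exponent collapses to $\beta(W_B - \Delta F) + \mathcal{O}(\beta^2)$, as claimed.

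I expect the main obstacle to be analytic rather than algebraic: one must justify that the remainder is genuinely $\mathcal{O}(\beta^2)$ and not merely a formal series. This needs the cumulant generating function to be analytic at $\beta=0$, i.e. the moment generating function $\langle e^{-\beta H_S}\rangle_\psi$ must converge in a neighbourhood of the origin so that all cumulants are finite and the $n\ge 2$ contributions can be uniformly bounded. For bounded $H_S$ this is automatic; for unbounded spectra (e.g. the oscillator) I would invoke the finiteness-of-cumulants hypothesis flagged just before the corollary, and control the tail through a bound on $\langle H_S^2\rangle_\psi$ for the two \emph{fixed} states $|\psi_0\rangle$ and $|\psi_1\rangle$, which is all that is required since only their difference enters at order $\beta$.
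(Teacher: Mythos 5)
Your proposal is correct and follows essentially the same route as the paper: the paper likewise expands the exponent of Theorem~\ref{thm:Fluctuation} as a cumulant series in $\beta$ (of which your direct expansion of $\log\tr(e^{-\beta H_S}\psi)$ is just the explicit form), truncates at the first cumulant under the finite-cumulants assumption, and identifies $-\Delta\langle H_S\rangle$ with $W_B$ via the energetic closure of $SB$. Your closing remarks on analyticity at $\beta=0$ correspond exactly to the paper's standing hypothesis that all cumulants are finite, so there is no substantive gap or divergence in method.
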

Note that since the composite system $SB$ is energetically closed a change in energy in $B$, as measured by the first moment $\<H_B\>$, is identified with a corresponding change in energy in $S$.

A more interesting question is how to interpret the higher order corrections  $\frac{(-1)^n\beta^n \Delta \kappa_n}{n!}$ for $n>1$. These vanish if the state of $S$ is an energy eigenstate, and so arise from the non-trivial coherent structure of the input and output pure states on $S$. One might suspect that they are measures of quantum coherence, but this turns out to not be the case, for example $\kappa_3$ can both increase and decrease under incoherent quantum operations and therefore cannot be a genuine measure of coherence \cite{MarvianSpekkens,MarvianThesis}.

Despite the fact that a general cumulant $\kappa_n$ is not a measure of coherence it turns out that $\kappa_2$ is distinguished and admits such an interpretation. The second cumulant $\kappa_2$ is the variance of energy in a quantum state $|\psi\>$, and it has been shown that if one restricts to pure quantum states then in the asymptotic regime of many copies of a state $|\psi\>$ there is an essentially unique way to quantify coherence \cite{ReferenceFrame} between different eigenspaces, and is given by $\chi(\psi):=4 \pi \kappa_2(\psi)$. With this in mind, we next provide a decomposition of the effective potential $\Lambda(\beta, \psi)$ of a pure quantum state into energy and coherence contributions.

\subsection{Separation of the effective potential into energetic and coherent contributions}

We can show that the general fluctuation relation takes a particularly simple form by exploiting how $\Lambda(\beta, \psi)$ varies as a function of the inverse temperature $\beta$. We assume that $\Lambda(\beta, \rho)$ is differentiable to second-order with respect to $\beta$. Firstly, it is clear that $\Lambda(0,\rho) = 0$ for any quantum state $\rho$, and moreover that
\begin{equation}
\partial_\beta \Lambda(\beta, \rho)|_{\beta =0} = \<H\>_\rho.
\end{equation}
Looking at the second derivative $\partial_\beta^2 \Lambda(\beta, \rho)$ we find that
\begin{equation}
-\partial_\beta^2 \Lambda(\beta, \rho) = \frac{\tr [ \rho H^2 e^{-\beta H}]}{\tr ( e^{-\beta H}\rho)} - \left ( \frac{\tr [\rho H e^{-\beta H}]}{ \tr ( e^{-\beta H}\rho)}\right)^2.
\end{equation}
The right-hand side of this only depends on the distribution $\p = (p_k)$ over energy eigenstates for $\rho$. We therefore define $p_k = \tr [\Pi_k \rho]$ where $\Pi_k$ is the projector onto the energy eigenspace with energy $E_k$, and so
\begin{equation}
- \partial_\beta^2 \Lambda(\beta, \rho) = \frac{\sum_k[ p_k E_k^2 e^{-\beta E_k}]}{\sum_j( e^{-\beta E_j}p_j)} - \left ( \frac{\sum_k[p_k E_k e^{-\beta E_k}]}{ \sum_j ( e^{-\beta E_j}p_j)}\right)^2.
\end{equation}
Now define the distribution
\begin{equation}
\tilde{p}_k = \frac{e^{-\beta E_k}}{ \sum_j  e^{-\beta E_j}p_j } p_k \mbox{ for all } k.
\end{equation}
This is a Gibbs re-scaling of the distribution $p_k$ at inverse temperature $\beta$. In particular, we see that $\partial_\beta^2 \Lambda(\beta, \rho) = - \kappa_2(\tilde{\p})$, the variance of energy under the re-scaled distribution $\tilde{p}_k$.

We can now apply the second--order Mean Value Theorem to $\Lambda(\beta, \rho)$ to deduce that for some $\beta_m \in [0,\beta]$ we have
\begin{align}
\Lambda(\beta,\rho) &= \Lambda(0,\rho) + \beta \partial_\beta\Lambda(\beta, \rho)|_{\beta =0} \\ &{\ \ } + \frac{1}{2} \beta^2 \left [ \partial_\beta^2\Lambda(\beta, \rho)|_{\beta = \beta_m} \right] \nonumber \\
 &=\beta \<H\>_\rho + \frac{1}{2} \beta^2 \left [ \partial_\beta^2 \Lambda(\beta, \rho)|_{\beta = \beta_m} \right ],
\end{align}
and thus we have that in general
\begin{equation}
\Lambda(\beta,\rho) =\beta \<H\>_\rho - \frac{1}{8\pi} \beta^2\chi_m(\tilde{\rho}),
\end{equation}
where we have used that $\chi(\rho) = 4\pi \kappa_2(\p)$, with the short-hand notation $\tilde{\rho} = \Gamma(\rho)$.
Here $\chi_m (\tilde{\rho})$ is non-trivial and should be interpreted with care. It involves computing $\chi$ on a Gibbs re-scaling of the quantum state, and then evaluated at an effective inverse temperature $\beta_m \le \beta$ that is determined by the statistics of the original state. 

The utility of this is that for a pure state $|\psi\>$, the re-scaled state $|\tilde{\psi}\>$ is also pure, and thus $\chi$, the variance in energy, is always a genuine measure of coherence when restricted to pure states \cite{GourSpekkens}. Therefore, the higher order cumulants for $n>1$ arise from the coherent structure of $|\psi\>$ and while individual cumulants are not coherence measures, one can still deduce that the sum of all higher order terms can be reduced to a single coherence measure of a rescaled pure quantum state. We refer to $\chi_m(\tilde{\psi})$ as the \emph{mean coherence at inverse temperature $\beta$} of the pure quantum state $|\psi\>$, and it allows the effective potential to split in a remarkably simple way as
\begin{equation}
\Lambda(\beta, \psi) = \frac{\mbox{ (mean energy)}}{kT} - \frac{\mbox{(mean coherence)}}{8\pi(kT)^2},
\end{equation}
however the way in which $\chi_m(\tilde{\psi})$ corresponds to ``mean coherence'' is subtle. When $\beta_m = \beta$ then $\chi_m$ is precisely the amount of coherence in the physically prepared initial state $\Gamma(\psi) = \tilde{\psi}$ after Gibbs re-scaling. However more generally, $\chi_m$ is the amount of coherence in the pure state $\propto \exp[-(\beta_m - \beta)H_S] |\tilde{\psi}\>$ with $\beta_m$ determined by the energy statistics of the quantum state. We go into more detail on $\chi_m$ in the next section.

In terms of the mean coherence the core fluctuation relation can be re-stated as follows.
\begin{theorem}[\textbf{Mean coherence \& mean energy decomposition}]
Let the assumptions Theorem \ref{thm:Fluctuation} hold. Then with $\chi_m (\tilde{\rho})$ defined as before at an inverse temperature $\beta$, for any two pure states $|\psi_0 \>$ and $|\psi_1\>$, the fluctuation theorem takes the form
\begin{equation}
\frac{P[\psi_1 | \tilde{\psi}_0]}{P[\psi_0^* | \tilde{\psi}_1^*]} = e^{- \beta  \Delta F - \beta W_S + \frac{\beta^2}{8\pi} \Delta \chi },
\end{equation}
where $W_S = \< H_S \>_{\psi_1} - \< H_S \>_{\psi_0}$ and $\Delta \chi = \chi_m (\tilde{\psi}_1) - \chi_{m'} (\tilde{\psi_0})$. 
\end{theorem}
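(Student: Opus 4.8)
The plan is to obtain the stated decomposition by feeding the exact second-order expansion of the effective potential directly into the effective-potential form of the fluctuation theorem, Theorem \ref{thm:Fluctuation}. That theorem already expresses the ratio as $e^{-\beta \Delta F - \Delta \Lambda}$ with $\Delta \Lambda = \Lambda(\beta,\psi_1) - \Lambda(\beta,\psi_0)$, so the whole task reduces to re-writing each $\Lambda(\beta,\psi_i)$ as a mean-energy piece plus a mean-coherence piece.

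First I would assemble the three ingredients of the Taylor expansion of $\Lambda(\beta,\rho) = -\log \tr(e^{-\beta H_S}\rho)$ about $\beta = 0$: that $\Lambda(0,\rho)=0$, that $\partial_\beta \Lambda(\beta,\rho)|_{\beta=0} = \<H_S\>_\rho$, and the key identity $\partial_\beta^2 \Lambda(\beta,\rho) = -\kappa_2(\tilde{\p})$, where $\tilde{\p}$ denotes the Gibbs-rescaled energy distribution of $\rho$ at inverse temperature $\beta$. These follow by differentiating $\log \tr(e^{-\beta H_S}\rho)$ twice and recognising the result as the variance of energy under the rescaled distribution; the assumed second-order differentiability of $\Lambda$ (equivalently, finiteness of the low-order cumulants) is what licenses these steps.

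Next I would apply the second-order Mean Value Theorem (Taylor with Lagrange remainder) to each state \emph{separately}. For $|\psi_1\>$ this yields an intermediate point $\beta_m \in [0,\beta]$ and for $|\psi_0\>$ a generically distinct point $\beta_{m'} \in [0,\beta]$, so that
\begin{equation}
\Lambda(\beta,\psi_1) = \beta \<H_S\>_{\psi_1} - \frac{\beta^2}{8\pi}\chi_{m}(\tilde{\psi}_1), \qquad \Lambda(\beta,\psi_0) = \beta \<H_S\>_{\psi_0} - \frac{\beta^2}{8\pi}\chi_{m'}(\tilde{\psi}_0),
\end{equation}
where I have used $\chi(\rho)=4\pi\kappa_2(\p)$ to convert the Lagrange remainder $\tfrac12\beta^2\,\partial_\beta^2\Lambda|_{\beta_m} = -\tfrac12\beta^2\,\kappa_2(\tilde{\p})|_{\beta_m}$ into $-\tfrac{\beta^2}{8\pi}\chi_m(\tilde{\psi}_1)$, and likewise for $|\psi_0\>$. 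Subtracting gives $\Delta\Lambda = \beta W_S - \tfrac{\beta^2}{8\pi}\Delta\chi$ with $W_S = \<H_S\>_{\psi_1} - \<H_S\>_{\psi_0}$ and $\Delta\chi = \chi_m(\tilde{\psi}_1) - \chi_{m'}(\tilde{\psi}_0)$, and inserting this into $e^{-\beta\Delta F - \Delta\Lambda}$ reproduces the claimed expression.

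The only genuine subtlety — a point to flag carefully rather than an analytic difficulty — is that the two intermediate inverse temperatures $\beta_m$ and $\beta_{m'}$ need not coincide, since each is determined independently by the energy statistics of its own state. This is exactly why the two terms in $\Delta\chi$ are evaluated at (in general) different effective temperatures and cannot be collapsed into a single coherence functional evaluated at one common $\beta$. Beyond this bookkeeping, the theorem is an immediate consequence of the expansion established in the preceding subsection together with Theorem \ref{thm:Fluctuation}.
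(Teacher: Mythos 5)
Your proposal is correct and follows essentially the same route as the paper: the paper derives the identity $\Lambda(\beta,\rho) = \beta\<H_S\>_\rho - \tfrac{\beta^2}{8\pi}\chi_m(\tilde{\rho})$ via the second-order Mean Value Theorem (using $\Lambda(0,\rho)=0$, $\partial_\beta\Lambda|_{\beta=0}=\<H_S\>_\rho$, and $\partial^2_\beta\Lambda = -\kappa_2(\tilde{\p})$ together with $\chi = 4\pi\kappa_2$), then substitutes this for each of $|\psi_0\>$ and $|\psi_1\>$ into Theorem \ref{thm:Fluctuation}, exactly as you do. Your explicit flagging of the distinct intermediate temperatures $\beta_m$ and $\beta_{m'}$ matches the caveat the paper itself states immediately after the theorem.
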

This fully separates the change in the first moment $\<H_S\>$ from the higher order corrections. Note that $\chi_m (\tilde{\psi}_1)$ and $\chi_{m'} (\tilde{\psi}_0)$ are in general evaluated at two different temperatures $\beta_m$ and $\beta_{m'}$ respectively. Such a difference depends on $\beta$ and also the original statistics of each state $|\psi_0\>$ and $|\psi_1\>$.

\subsection{An explicit form for the mean coherence at inverse temperature $\beta$}
The above form of $\chi_m$, and its dependence on temperature is slightly opaque, however one can obtain a simple expression for computing it that makes clear the different contributions. To this end we can make use of the following result.

\begin{lemma}\label{free-energy-like} Given a quantum system $S$ with Hamiltonian $H_S$ and $\{\Pi_k\}$ the projectors onto the energy eigenspaces of $H_S$, let us denote the de-phased form of state in the energy basis  by $\D(\rho) = \sum_k \Pi_k \rho \Pi_k$. Then, for any quantum state $\rho$, the effective potential $\Lambda(\beta, \rho)$ is given by
\begin{equation}
\Lambda(\beta, \rho) = \min_\sigma \{   \beta \<H_S\>_\sigma + S(\sigma||\D(\rho)) \},
\end{equation}
where the minimization is taken over all quantum states $\sigma$ of $S$, and $S(\sigma||\rho) = \tr [ \sigma \log \sigma - \sigma \log \rho]$ is the relative entropy function. Moreover, the minimization is attained for the state $\sigma = \Gamma(\D(\rho))$. 
\end{lemma}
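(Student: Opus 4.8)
The plan is to recognize the claimed identity as an instance of the Gibbs variational principle, equivalently of the non-negativity of quantum relative entropy (Klein's inequality). The key preliminary observation is that the objective depends on $\rho$ only through its dephased form: writing $d := \D(\rho)$, one has $\tr(e^{-\beta H_S}\rho) = \sum_k e^{-\beta E_k}\<E_k|\rho|E_k\> = \tr(e^{-\beta H_S} d)$, since $e^{-\beta H_S}$ is diagonal in the energy eigenbasis and the off-diagonal entries of $\rho$ do not contribute. Hence $\Lambda(\beta,\rho) = -\log\tr(e^{-\beta H_S} d)$, and the relative-entropy term in the variational expression already carries $d$ explicitly. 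The structural fact that makes everything tractable is that $d$ is diagonal in the energy eigenbasis and so commutes with $H_S$.

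First I would rewrite the functional to be minimized. Expanding the relative entropy,
\[
\beta \<H_S\>_\sigma + S(\sigma\|d) = \tr[\sigma(\beta H_S - \log d)] + \tr(\sigma\log\sigma),
\]
so that, defining the Hermitian operator $K := \beta H_S - \log d$, the problem becomes minimizing $\tr(\sigma K) + \tr(\sigma\log\sigma)$ over density operators $\sigma$. When $d$ is not full rank one restricts to its support, on which $\log d$ is well-defined and outside of which $S(\sigma\|d) = +\infty$; this handles the rank-deficient case cleanly.

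Next I would invoke the standard variational bound: for any Hermitian $K$ and any density operator $\sigma$,
\[
\tr(\sigma K) + \tr(\sigma\log\sigma) = S\!\left(\sigma \,\Big\|\, \frac{e^{-K}}{\tr e^{-K}}\right) - \log\tr(e^{-K}) \ge -\log\tr(e^{-K}),
\]
with equality if and only if $\sigma = e^{-K}/\tr(e^{-K})$; this is a direct consequence of the non-negativity of relative entropy, and the strict convexity of the entropy term guarantees the minimizer is unique. Because $[H_S, d] = 0$ we have $e^{-K} = e^{-\beta H_S}\,d$, hence $\tr(e^{-K}) = \tr(e^{-\beta H_S} d) = \tr(e^{-\beta H_S}\rho)$, so the minimum value is exactly $-\log\tr(e^{-\beta H_S}\rho) = \Lambda(\beta,\rho)$, as required.

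Finally I would identify the minimizer with $\Gamma(\D(\rho))$. The optimal state is $\sigma^\star = e^{-K}/\tr(e^{-K}) = e^{-\beta H_S} d / \tr(e^{-\beta H_S} d)$, and since $d$ commutes with $H_S$ this equals $e^{-\beta H_S/2} d\, e^{-\beta H_S/2}/\tr(e^{-\beta H_S} d) = \Gamma(d) = \Gamma(\D(\rho))$, matching the stated form of the Gibbs-rescaling map. The only genuinely non-trivial input is Klein's inequality; the remaining work is the observation that dephasing renders $d$ and $H_S$ simultaneously diagonal, so that $e^{-\beta H_S - \log d}$ factors without Baker--Campbell--Hausdorff corrections. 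I expect the main point requiring care to be the rank-deficient case and the interpretation of $\log d$ on the kernel of $d$, which is dealt with by restricting the minimization to the support of $\D(\rho)$.
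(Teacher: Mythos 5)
Your proof is correct, and it reaches the result by a more self-contained route than the paper does. The paper assembles the lemma from three separately proved ingredients: the dephasing invariance $\Lambda(\beta,\rho)=\Lambda(\beta,\D(\rho))$, a variational formula for full-rank states commuting with $H_S$ that is quoted as a black box from Petz's Legendre-duality theorem (Theorem \ref{Theorem:Legendre}, giving only the \emph{value} of the minimum), and then a separate explicit entropy expansion (the identity $\Lambda(\beta,\sigma)=\beta\<H_S\>_{\tilde\sigma}+S(\tilde\sigma\|\sigma)$ for $\sigma=\D(\rho)$, derived in the mean-coherence proof) whose sole purpose is to identify the minimizer as $\Gamma(\D(\rho))$. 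You instead complete the square inside the relative entropy: writing the objective as $\tr(\sigma K)+\tr(\sigma\log\sigma)$ with $K=\beta H_S-\log\D(\rho)$ and using $[H_S,\D(\rho)]=0$ so that $e^{-K}=e^{-\beta H_S}\D(\rho)$, a single application of Klein's inequality yields the minimum value, the minimizer, and its uniqueness all at once. This buys you two things the paper's argument does not deliver: you avoid invoking Petz's theorem (your step is in effect its elementary proof in the commuting case), and your support-restriction convention for $S(\sigma\|\D(\rho))$ handles rank-deficient $\rho$, whereas the paper's appendix proof explicitly assumes full rank even though the main-text statement (the one you were asked to prove) claims the result for \emph{any} quantum state. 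In that sense your argument closes a small gap between the paper's statement and its proof.
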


The proof of this is provided in the Appendix \ref{Proof of Lemma III.3}. If the probability distribution over energy of $\rho$ is $\p = (p_k)$ then this is unaffected by the dephasing $\rho \mapsto \D(\rho)$. Denoting by $\tilde{\p}$ the Gibbs rescaling of $\p$, we therefore have that
\begin{equation}
  \Lambda(\beta, \rho) = \beta \<H_S\>_{\tilde{\p}} + S(\tilde{\p} || \p),
\end{equation}
where the relative entropy term is now the classical relative entropy for distributions and $\<H_S\>_{\tilde{\p}} := \sum_k E_k \tilde{p}_k$ is the expectation value of energy for the classical distribution $\tilde{\p}$. We thus have the following.
\begin{theorem}\label{thm:mean-coherence} Given a quantum system $S$ with Hamiltonian $H_S$, the mean coherence at inverse temperature $\beta$ of a quantum state $|\psi\>$ is given by
\begin{equation}\label{mean-coherence}
\frac{\beta^2}{8 \pi}\chi_m (\tilde{\psi}) =  \beta(\<H_S\>_{\p} - \<H_S\>_{\tilde{\p}}) - S(\tilde{\p} ||\p).
\end{equation}
where $\p=(p_k)$ is the distribution over energy of $\psi$ and $\tilde{\p}$ is the distribution over energy for $|\tilde{\psi}\>\<\tilde{\psi}| = \Gamma(|\psi\>\<\psi|)$.
\end{theorem}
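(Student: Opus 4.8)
The plan is to derive the stated closed form by combining two facts already in hand: the second-order expansion of the effective potential established just above, and the variational (free-energy-like) characterization of Lemma \ref{free-energy-like}. The guiding observation is that although $\chi_m(\tilde{\psi})$ is individually defined through a Mean Value Theorem involving an opaque intermediate inverse temperature $\beta_m \in [0,\beta]$, the specific combination $\frac{\beta^2}{8\pi}\chi_m(\tilde{\psi})$ is pinned down exactly once $\Lambda(\beta,\psi)$ is known, and Lemma \ref{free-energy-like} supplies $\Lambda$ in closed form.

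First I would rearrange the decomposition $\Lambda(\beta,\rho)=\beta\langle H\rangle_\rho-\frac{1}{8\pi}\beta^2\chi_m(\tilde{\rho})$ to isolate the mean-coherence contribution,
\[
\frac{\beta^2}{8\pi}\chi_m(\tilde{\rho}) = \beta\langle H\rangle_\rho - \Lambda(\beta,\rho).
\]
I would then note that the energy expectation sees only the diagonal of $\rho$ in the energy eigenbasis, so that $\langle H\rangle_\rho = \langle H_S\rangle_\p = \sum_k E_k p_k$, where $\p=(p_k)$ is the energy distribution of $\rho$ (equivalently of $\D(\rho)$, since dephasing leaves the diagonal untouched).

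Next I would substitute the classical form of Lemma \ref{free-energy-like}, namely $\Lambda(\beta,\rho)=\beta\langle H_S\rangle_{\tilde{\p}}+S(\tilde{\p}\,||\,\p)$, with $\tilde{\p}$ the Gibbs rescaling of $\p$ and the minimizer attained at $\sigma=\Gamma(\D(\rho))$. Collecting the two energy expectations then gives
\[
\frac{\beta^2}{8\pi}\chi_m(\tilde{\rho}) = \beta\big(\langle H_S\rangle_\p - \langle H_S\rangle_{\tilde{\p}}\big) - S(\tilde{\p}\,||\,\p),
\]
which is exactly the claimed identity upon specializing to the pure state $\rho=|\psi\rangle\langle\psi|$, where the restriction to pure states is precisely what makes $\chi$ a genuine coherence measure.

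No genuine obstacle remains once Lemma \ref{free-energy-like} is granted; the argument is a direct algebraic substitution. The one point worth flagging is interpretational rather than technical: the left-hand side still carries the hidden $\beta_m$-dependence buried in $\chi_m$, whereas the right-hand side is manifestly computable from the energy statistics $\p$ alone. The substance of the theorem is therefore that this $\beta_m$-dependence cancels in the relevant combination, yielding an explicit expression in terms of the mean energy, its Gibbs-rescaled counterpart, and a classical relative entropy. The real work is thus deferred entirely to Lemma \ref{free-energy-like}, whose Legendre-transform/variational structure is where I expect the substantive effort of the proof to reside.
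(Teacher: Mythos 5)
Your derivation is correct and ends with the same algebra as the paper's, but its logical dependency is inverted relative to the paper's own proof, and that is the one thing to watch. The paper's appendix proof of this theorem never invokes Lemma \ref{free-energy-like}: it establishes the key identity $\Lambda(\beta,\psi)=\beta\<H_S\>_{\tilde{\p}}+S(\tilde{\p}||\p)$ directly, by expanding the von Neumann entropy of $\tilde{\sigma}=\Gamma(\D(\rho))$ and using $[\D(\rho),H_S]=0$ (this is equation (\ref{effective-divergence})), and then combines it with the Mean Value Theorem decomposition exactly as in your first display. Lemma \ref{free-energy-like} is afterwards proved \emph{from} that identity together with the Petz variational formula (Theorem \ref{Theorem:Legendre}), which certifies that the value attained at $\Gamma(\D(\rho))$ is indeed the minimum. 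So in the paper the identity is the primitive and the Lemma is the corollary, whereas you take the Lemma as the primitive. That is logically fine, but only because the Lemma admits a proof independent of this theorem (direct evaluation of the variational objective at $\sigma=\Gamma(\D(\rho))$, which is the same commutation computation); if one instead pairs your argument with the paper's own proof of the Lemma, the reasoning becomes circular, since that proof cites the equation derived inside this theorem's proof. Two smaller corrections: your argument only ever uses the \emph{value} of the objective at the claimed minimizer, never the fact that it is a minimum, so contrary to your closing remark the Legendre/variational structure is not where the substance lies --- the substantive step is the elementary entropy expansion; and the appendix version of Lemma \ref{free-energy-like} is stated for full-rank states while $|\psi\>$ is pure, so one should note that all logarithms are taken on the common support of $\p$ and $\tilde{\p}$, which is how the paper's direct expansion implicitly handles it.
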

Everything on the right-hand side of Equation (\ref{mean-coherence}) is easily computable, and its form sheds light on what the mean coherence at inverse temperature $\beta$ actually depends on. The first term quantifies the degree to which the coherent thermalisation raises or lowers the expected energy of $|\psi\>$ in units of $\beta$, while the second term quantifies the degree to which the coherent thermalisation makes the state $|\tilde{\psi}\>$ distinguishable from the state $|\psi\>$ (in the sense of hypothesis testing optimized over all possible quantum measurements \cite{QuantumSteinsLemma}). Since the left-hand side is always non-negative, we also see that energy term is never smaller than the relative entropy term.

\begin{example}
Consider the system $S$ in the uniform superposition over $d$ energy eigenstates $|\psi \> = \frac{1}{\sqrt{d}} \sum_{n =0}^{d-1} |E_n\>$. Then, as discussed earlier, this transforms under $\Gamma$ to the coherent Gibbs state $|\tilde{\psi} \> = |\gamma\>$. Then by direct calculation with equation (\ref{mean-coherence}), we find that
\begin{equation}
\frac{\beta^2}{8 \pi} \chi_m (|\gamma\> \< \gamma|) = \beta [F \left(\frac{1}{d}\mathds{1}_d \right) - F \left(\gamma\right)],
\end{equation}
where $F(\rho) := \< H \>_\rho - k_B T S(\rho)$ is the free-energy of the state $\rho$  and $\gamma$ is the Gibbs state. Thus the mean coherence is proportional to the difference between the free energy of $S$ with a trivial Hamiltonian $H_S=0$ and the free energy of $S$ with non-trivial Hamiltonian $H_S \ne 0$ at inverse temperature $\beta$. 
\end{example}

Finally, note that while $\chi$ is a genuine measure of coherence for any pure quantum state, this is different from $\chi_m$ being a monotone under the general dynamics of the setting. While this is not immediately obvious from the discussion, we highlight that the reduced dynamics for any state $\rho$ of $S$ is of the form
\begin{equation}
\E(\rho) = \tr_B V(\rho \otimes \gamma_0 )V^\dagger,
\end{equation}
and it is readily seen that this quantum operation is time-translation covariant, and so the total coherence between energy eigenspaces of $S$ can never increase \cite{Harmonic,CoherentNature,QuantCoherence}. In addition, given any elements $M_0, M_1$ that occur in some POVM $\M = \{M_k\}$, one can always include them in a covariant measurement on the system, by simply considering the orbit of each under the adjoint action $X\rightarrow U(t) X U(t)^\dagger$ for all $t$ and if these are not in the measurement set, expanding the set to include them. Thus, a POVM $\M = \{M_k \}$ exists that contains both $M_0$ and $M_1$ and the total non-selective evolution is time-translation covariant. Therefore there is no coherence being exploited to perform the measurement \cite{WayTheorem} and so over the whole procedure on $S$ we have that the coherence in $S$ can never increase.

\section{Coherent work processes, quantum fluctuation relations and the semi-classical limit.}\label{sec: coherent fluctuation}

We can now show that the earlier notion of a coherent work process fits neatly into the inclusive quantum fluctuation setting. Recall that the classical Crooks relation takes the form
\begin{equation}\label{classical fluctuations}
\frac{P[w | \gamma_0, \mathcal{P}]}{P[-w|\gamma_1, \mathcal{P}^*]} = e^{-\beta (\Delta F - w)}
\end{equation}
that relates the probability of an amount of work $w$ for a forward trajectory with $-w$ for the reversed trajectory. As already discussed this quantity $w$ is always associated with some auxiliary system within an inclusive description. It is readily seen that the previous quantum fluctuation relations can be re-cast in the following form that provides a natural extension of the classical result.

\begin{theorem}[\textbf{Fluctuations and coherent work processes}]\label{thm:CoherentWorkFT}
Let $S$ be a quantum system with Hamiltonian $H_S$ and let $|\psi_0\>_S$ and $|\psi_1\>_S$ be two pure states of $S$ that have energy statistics with finite cumulants of all orders. Then the following are equivalent:
\begin{enumerate}
\item The pure states $|\psi_0\>_S$ and $|\psi_1\>_S$ are coherently connected with coherent work output/input $\omega$ on an auxiliary system $A$.
\item Within the fluctuation relation context with a thermal system $B$ and quantum system $S$ we have that
\begin{equation}
\frac{P[\psi_1 | \tilde{\psi}_0]}{P[\psi^*_0 | \tilde{\psi}^*_1]}   = e^{-\beta(\Delta F \pm kT\Lambda(\beta,\omega))},
\end{equation} 
for all inverse temperatures $\beta \ge 0$, for some state $|\omega\>_A$ with finite cumulants on an auxiliary system $A$ and some choice of sign before $\Lambda(\beta, \omega)$.
\end{enumerate}
\end{theorem}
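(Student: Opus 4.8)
The plan is to collapse both statements onto a single identity for the effective potential and then to recognise that identity as the convolution condition that Theorem~\ref{thm: Convolution} associates with a coherent work process. The key observation throughout is that, since $\Lambda(\beta,\psi) = -\log \tr(e^{-\beta H_S}|\psi\rangle\langle\psi|) = -\log \sum_k p_k e^{-\beta E_k}$ depends only on the energy distribution $\p=(p_k)$ of $|\psi\rangle$, it is precisely the cumulant generating function of the energy statistics evaluated at $-\beta$. This is exactly the object that behaves additively under the random-variable decomposition $\hat{X}=\hat{Y}+\hat{W}$ appearing in Theorem~\ref{thm:Decomposable}.

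First I would reduce statement (2). Because $\beta=(kT)^{-1}$ gives $\beta kT = 1$, the proposed right-hand side $e^{-\beta(\Delta F \pm kT\Lambda(\beta,\omega))}$ is simply $e^{-\beta\Delta F \mp \Lambda(\beta,\omega)}$. Comparing this with the effective-potential form of the fluctuation relation from Theorem~\ref{thm:Fluctuation}, namely $e^{-\beta\Delta F - (\Lambda(\beta,\psi_1)-\Lambda(\beta,\psi_0))}$, statement (2) holds for all $\beta\ge 0$ if and only if
\begin{equation}
\Lambda(\beta,\psi_1) - \Lambda(\beta,\psi_0) = \pm\,\Lambda(\beta,\omega) \qquad \text{for all } \beta\ge 0 .
\end{equation}
Thus the whole theorem reduces to showing that this additivity of effective potentials is equivalent to coherent connectedness.

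Next I would exponentiate and pass to generating functions. Writing $G_\psi(\beta):=e^{-\Lambda(\beta,\psi)}=\sum_k p_k e^{-\beta E_k}$, the additivity relation becomes the factorisation $G_{\psi_1}=G_{\psi_0}\,G_\omega$ for the upper sign and $G_{\psi_0}=G_{\psi_1}\,G_\omega$ for the lower sign. On an equally-spaced ladder, setting $z=e^{-\beta}$ turns each $G$ into a (Laurent) power series $\sum_k p_k z^k$, so by uniqueness of series coefficients a product of generating functions corresponds exactly to a convolution of the underlying distributions, e.g. $\p_1 = \p_0 * \mathbf{r}$ in the upper case and $\p_0=\p_1*\mathbf{r}$ in the lower case, where $\mathbf{r}$ is the energy distribution of $|\omega\rangle$. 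By Theorem~\ref{thm: Convolution} (equivalently Theorem~\ref{thm:Decomposable}, which records that a coherent work process realises precisely the sum $\hat{X}=\hat{Y}+\hat{W}$ of energy random variables) each such convolution is exactly the necessary-and-sufficient condition for a coherent work process to exist. The two signs then correspond to the two directions of the transformation: the lower sign to $|\psi_0\rangle \stackrel{\omega}{\longrightarrow}|\psi_1\rangle$ with $\omega$ a work output, and the upper sign to $|\psi_1\rangle \stackrel{\omega}{\longrightarrow}|\psi_0\rangle$ with $\omega$ a work input. Either statement therefore places $|\psi_0\rangle$ and $|\psi_1\rangle$ in the coherently connected relation of statement (1), and the chain of equivalences can be read in both directions. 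Note that only energy distributions enter, consistent with Theorem~\ref{thm: Convolution} permitting arbitrary output phases, so the unconstrained phases of $|\psi_1\rangle$ and $|\omega\rangle$ never obstruct the argument.

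The step I expect to be the main obstacle is the passage from the factorisation of the functions $G$ to a genuine convolution of probability distributions. The direction (1)$\Rightarrow$(2) is routine, since the convolution from Theorem~\ref{thm: Convolution} immediately yields additivity of $\Lambda$ and hence statement (2) after substitution. The converse requires injectivity of the transform $\p \mapsto G_\psi$: one must guarantee that $G_{\psi_0}=G_{\psi_1}G_\omega$ forces $\p_0=\p_1*\mathbf{r}$ rather than merely a formal power-series identity, and that the series involved converge and determine their coefficients. This is exactly where the hypothesis of finite cumulants of all orders is essential, as it makes each $G_\psi(\beta)$ finite and analytic in a neighbourhood of $\beta=0$ and renders the (two-sided) Laplace transform injective on the relevant distributions. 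For general, unequally-spaced spectra one would additionally appeal to the remark following Theorem~\ref{thm: Convolution} that the admissible dynamics never couples distinct Fourier modes, so the convolution argument applies within each arithmetic progression of levels separately.
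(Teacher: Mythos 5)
Your proposal is correct and follows essentially the same route as the paper's own proof: both reduce statement (2) to the identity $\Lambda(\beta,\psi_1)-\Lambda(\beta,\psi_0)=\pm\Lambda(\beta,\omega)$ for all $\beta\ge 0$, obtain (1)$\Rightarrow$(2) from additivity of the effective potential on product states under an energy-conserving unitary, and for (2)$\Rightarrow$(1) use the finiteness/convergence hypothesis to conclude that equal cumulant generating functions force the energy distribution of $|\psi_0\>$ (resp. $|\psi_1\>$) to be the convolution of the other two, whence Theorem~\ref{thm:Decomposable} supplies the required coherent work process, with the arbitrary-phase freedom handling the fact that only distributions are constrained. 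Your packaging of the uniqueness step as injectivity of the Laplace transform (uniqueness of Laurent-series coefficients, extended beyond equally spaced spectra) is just a rephrasing of the paper's appeal to the moment-uniqueness theorem, so the two arguments coincide in substance.
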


\begin{figure}[t]
\begin{center}
\includegraphics[width=6cm]{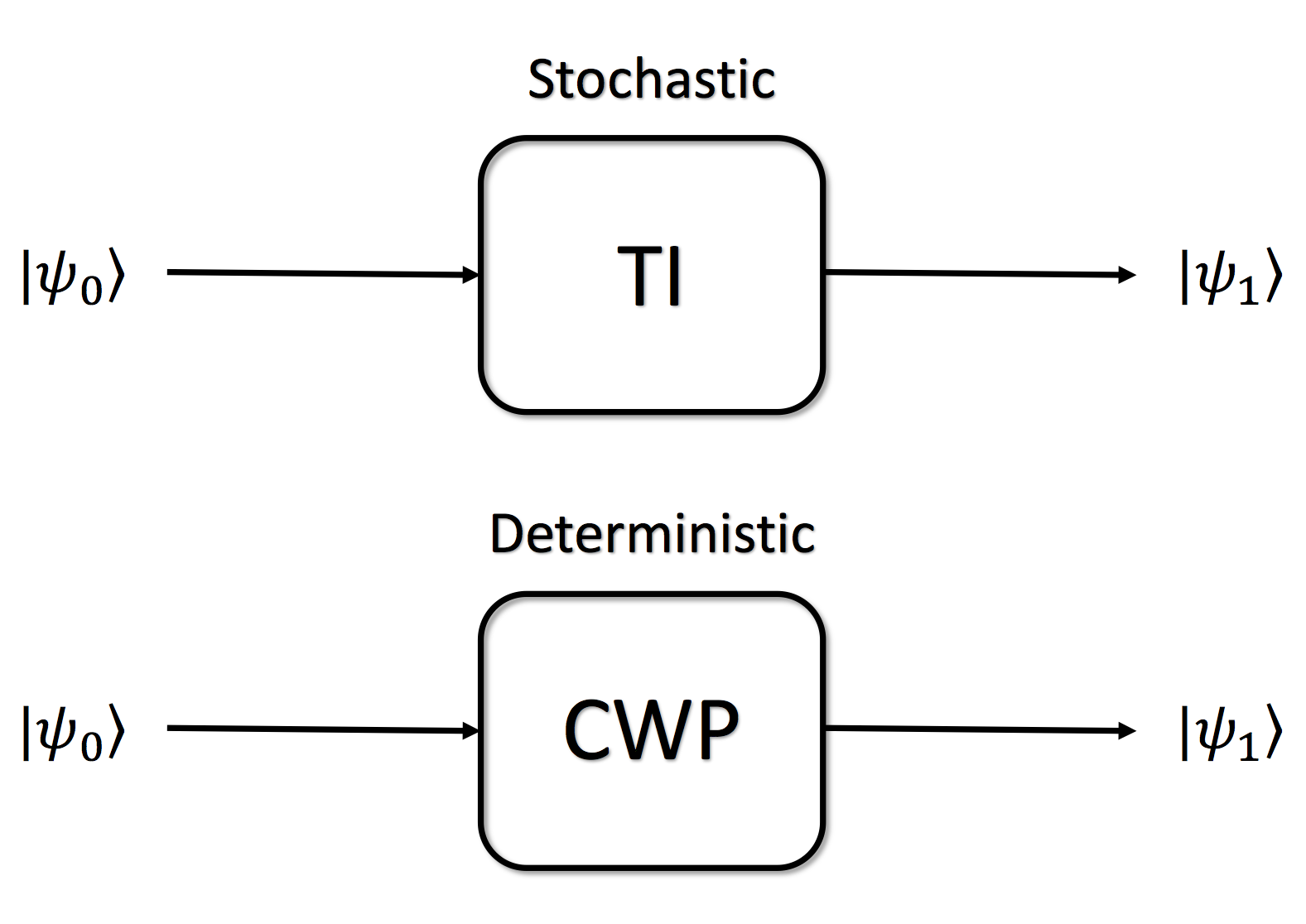}
\end{center}
\caption{\textbf{Coherent work processes (CWP) and thermal interactions (TI).} We introduced coherent work processes which are deterministic processes that transition a pure state to another pure state. Thermal interactions are stochastic, hence the output is a probabilistic mixture of pure states. In the context of the fluctuation relation, we post-select a particular pure state. The coherent work output/input $|\omega\>$ associated to the initial and final quantum states contributes $\Lambda(\beta,\omega)$ to the fluctuation relation and generalises the classical term $\beta w$ from the standard Crooks relation.  
}
\end{figure}

We note that for $|\psi_0\>_S$ and $|\psi_1\>_S$ being energy eigenstates we have that $|\omega \>_A$ is an energy eigenstate $|w \>_A $, for which $\Lambda(\beta, \omega) = \beta w$ for some sharp value $w \in \mathbb{R}$, and so the above expression reduces to a classical Crooks relation form. The proof of the theorem is provided in Appendix \ref{app:coherent work}. 

\subsection{Physical interpretation of the result}
Interpreting this theorem requires some care. Statement 1 is simply a deterministic transformation between pure states with coherence on a quantum system $S$. 

In contrast, statement 2 involves a stochastic thermal process in which initial coherence in $S$ interacts with a thermal environment $B$ and projective measurements at the start and at the end is assumed. Moreover this coherence is not outputted to some ordered degree of freedom but is dissipated into the large thermal environment $B$. The appearance of an auxiliary system $A$ in statement 2 is independent of the actual thermodynamic context and takes no part in the process.

This shows that coherent work processes and coherently connectedness relate to the coherent fluctuation in a simple way, via the coherent work output/input $\omega_A$. Moreover, the numerical contribution of $\omega_A$ to the fluctuation relation exponent is given simply by the effective potential of $\omega_A$. From a structural perspective we may therefore summarise this conclusion as follows:

\begin{center}
\emph{Coherent work processes are to coherent fluctuation relations what deterministic Newtonian work processes are to classical fluctuation relations.}
\end{center}
Moreover in the semi-classical regime, where coherence becomes negligible, the two fluctuation relations converge, and therefore we have that the coherent framework is a faithful and consistent extension of the classical work relations into quantum mechanics.

\subsection{Mixed state coherent work output/input} 
The connection with coherent work processes can be made more explicit by observing that the Gibbs state $\gamma_0$ is a probabilistic mixture of energy eigenstates and therefore the unitary part of the dynamics can be written as
\begin{align}
V(\psi_0 \otimes \gamma_0) V^\dagger & =\sum_k \frac{e^{-\beta E_k^0}}{Z_0} V (|\psi_0\>\<\psi_0| \otimes |E^0_k\>\<E_k^0|)V^\dagger.
\end{align}
In general $|\varphi_k\>_{SB} =V [ |\psi_0\>\otimes |E^0_k\>]$ is not a product state over $S$ and $B$, and so does not describe a coherent work process in itself, however the projective measurement on $S$ will collapse $|\varphi_k\>$ to a product state $|\psi_1\>\otimes |\phi^1_k\>$ and so the overall effect is the probabilistic transition $|\psi_0\>\otimes |E_k^0\> \rightarrow |\psi_1 \> \otimes |\phi^1_k\>$. Thus the particular energy eigenstate $|E^0_k\>$ in the Gibbs ensemble acts as a reference energy level and $|E^0_k\> \rightarrow |\phi^1_k\>$ can be viewed as a coherent work output conditioned on $|E_k^0\>$.

We can now link with more abstract formalism by considering the evolution $\E$ of $S$ and looking at what is called the \emph{complementary channel} \cite{WatrousQI} $\bar{\E}$ from $S$ into $B$ given by
\begin{equation}
\bar{\E} (\rho) := \tr_S V(\rho \otimes \gamma_0)V^\dagger.
\end{equation}
This provides a stochastic mixture of coherent work outputs, from the system $S$ into the bath system $B$. 

In fact this perspective can be taken as a more general formulation of coherent work processes in which a general state $\rho$ of a system $S$ transforms through some energy-conserving interaction with a second system $A$ in a default energy eigenstate $|0\>\<0|$ as $\rho \rightarrow \E(\rho) = \tr_A V (\rho \otimes |0\>\<0|) V^\dagger$. Then we define a general coherent work process as
\begin{equation}
\rho \stackrel{\omega}{\longrightarrow} \E(\rho).
\end{equation}
where $\omega = \bar{\E}(\rho)$ is now the mixed state coherent output for the process. 

The set of such quantum transformations are simply the time-translation covariant quantum operations \cite{Harmonic,QuantCoherence,CoherentNature}, and so in the same way as before we can say that two quantum states $\rho$ and $\sigma$ are coherently connected if there exists a time-translation covariant channel $\E$ from one state to the other. The coherent work output/input is given via the complementary channel $\bar{\E}$ acting on the input system. We leave the study of such features to future work, where coherent work processes could prove useful tools in studying asymmetry theory.

In the next section, we return to more concrete systems, and briefly outline how coherent work processes are experimentally accessible in existing trapped ion proposals.

\subsection{Coherent fluctuation relations in the experimental semi-classical regime}

We have shown that our analysis of coherent work processes, in which classical deterministic transitions are superposed together while respecting energy conservation, appears naturally within the fluctuation relation. We can now re-visit the case that the quantum system $S$ is a harmonic oscillator system, and restrict to states in $\mathcal{C}$ from Section \ref{C-states}, which are closed under coherent work processes. This is particularly of interest because a fluctuation relation for this system can be demonstrated within existing trapped ion systems \cite{Zoe}, and therefore shows that the theoretical analysis presented here is of relevance to existing experimental work.

We can without loss of generality restrict to the canonical states $|\alpha,k\>$ in the set $\mathcal{C}$, and to simplify things further we consider the subset formed of pure coherent states (with $k = 0$), although the fully general case can also be easily computed. This gives rise to the same fluctuation relation as in \cite{Zoe}, which followed from the analysis of \cite{Aberg}. For a coherent state $|\alpha\>$ of a quantum harmonic oscillator with Hamiltonian $H_S = h \nu (a^\dagger a + \frac{1}{2})$, the effective potential is readily computed and takes the form
\begin{equation}
\Lambda (\beta , \alpha) = \frac{1}{2}\beta h \nu + \frac{1}{h\nu}\left (\<H_S\>_\alpha - \frac{1}{2}h\nu\right) (1 - e^{-\beta h \nu}),
\end{equation}
where $\<H_S \>_\alpha := \<\alpha | H_S |\alpha\> = h\nu (|\alpha|^2 + 1/2)$.

A straightforward calculation leads to the following quantum fluctuation relation restricted to coherent states of an oscillator system $S$.
\begin{theorem}[\textbf{Semi-classical relation }\cite{Zoe}]\label{thm:FT forms}
Let the assumptions of Theorem \ref{thm:Fluctuation} hold, and let $S$ is a harmonic oscillator with Hamiltonian $H_S = h\nu (a^\dagger a + \frac{1}{2})$. Then for two coherent states of the system $|\alpha_0\>$ and $|\alpha_1\>$, the following holds
\begin{align}
\frac{P[\alpha_1|\tilde{\alpha}_0]}{P[\alpha_0^*|\tilde{\alpha}_1^*]}   &=  \exp \left [ -\frac{\Delta F }{k T}+  \frac{\bar{W}_B}{h \nu_{\rm \tiny{th}}} \right],
\end{align}
where  $h \nu_{th} = \< H_S \>_\gamma$ is the average energy of a Gibbs state $\gamma$ of a quantum harmonic oscillator, related to the thermal de Broglie wavelength $\lambda_{\rm \tiny{dB}} (T)$ via
\begin{equation}
h \nu_{\rm \tiny{th}} = \frac{h^2}{m \lambda_{\rm \tiny{dB}}(T)^2} + \frac{1}{2} h\nu,
\end{equation}
and $\bar{W}_B := - \frac{1}{2}(W_S + \tilde{W}_S)$, $W_S = \<H_S\>_{\alpha_1} - \<H_S\>_{\alpha_0}$, $\tilde{W}_S = \<H_S\>_{\tilde{\alpha}_1} - \<H_S\>_{\tilde{\alpha}_0} $.
\end{theorem}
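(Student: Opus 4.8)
The plan is to specialise the general effective-potential fluctuation theorem (Theorem \ref{thm:Fluctuation}) to coherent states and then repackage its abstract exponent $-\beta \Delta F - \Delta\Lambda$ into the physically transparent form claimed here. Since Theorem \ref{thm:Fluctuation} already gives $\frac{P[\alpha_1|\tilde\alpha_0]}{P[\alpha_0^*|\tilde\alpha_1^*]} = e^{-\beta\Delta F - (\Lambda(\beta,\alpha_1) - \Lambda(\beta,\alpha_0))}$, and since $\beta = 1/(kT)$ immediately identifies $-\beta\Delta F$ with $-\Delta F/kT$, the entire task reduces to showing that the coherent-state difference $\Delta\Lambda := \Lambda(\beta,\alpha_1) - \Lambda(\beta,\alpha_0)$ equals $-\bar W_B/h\nu_{\rm th}$.

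First I would evaluate $\Delta\Lambda$. Using the explicit coherent-state effective potential quoted just above the theorem --- or, equivalently, computing $\Lambda(\beta,\alpha) = -\log\langle\alpha|e^{-\beta H_S}|\alpha\rangle$ directly from the standard overlap $\langle\alpha|e^{-\beta h\nu a^\dagger a}|\alpha\rangle = \exp[-|\alpha|^2(1 - e^{-\beta h\nu})]$ --- the $\beta$-linear zero-point term $\tfrac{1}{2}\beta h\nu$ is state-independent and cancels in the difference. What survives depends only on $\langle H_S\rangle_\alpha = h\nu(|\alpha|^2 + \tfrac{1}{2})$, leaving
\begin{equation}
\Delta\Lambda = \frac{1}{h\nu}\big(\langle H_S\rangle_{\alpha_1} - \langle H_S\rangle_{\alpha_0}\big)\big(1 - e^{-\beta h\nu}\big) = \frac{W_S}{h\nu}\big(1 - e^{-\beta h\nu}\big).
\end{equation}

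The next and most important step is to understand $\tilde W_S$. Here I would use the fact that Gibbs rescaling preserves the coherent-state family: since $e^{-\beta H_S/2}|\alpha\rangle \propto |\alpha\, e^{-\beta h\nu/2}\rangle$, the rescaled state $|\tilde\alpha\rangle = \Gamma_S(|\alpha\rangle\langle\alpha|)$ is again a coherent state with damped amplitude $\tilde\alpha = \alpha\, e^{-\beta h\nu/2}$, so that $|\tilde\alpha|^2 = |\alpha|^2 e^{-\beta h\nu}$. This gives $\tilde W_S = e^{-\beta h\nu} W_S$ and hence $\bar W_B = -\tfrac{1}{2}(W_S + \tilde W_S) = -\tfrac{1}{2} W_S(1 + e^{-\beta h\nu})$.

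Finally I would close the identity by computing the thermal energy scale $h\nu_{\rm th} = \langle H_S\rangle_\gamma = \tfrac{h\nu}{2}\coth(\beta h\nu/2)$ for the oscillator (Bose occupation plus zero point), and verifying the elementary hyperbolic identity $(1 + e^{-\beta h\nu})/\coth(\beta h\nu/2) = 1 - e^{-\beta h\nu}$, which yields $\bar W_B/h\nu_{\rm th} = -\tfrac{W_S}{h\nu}(1 - e^{-\beta h\nu}) = -\Delta\Lambda$, as required. The de Broglie form of $h\nu_{\rm th}$ is then obtained by substituting the standard definition $\lambda_{\rm dB}(T) = h/\sqrt{2\pi m kT}$ for the residual $\langle H_S\rangle_\gamma - \tfrac{1}{2}h\nu$ term. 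The only genuinely nontrivial ingredient is the coherent-state-preservation property of $\Gamma$; everything else is bookkeeping, and the place to be careful is tracking the zero-point $\tfrac{1}{2}h\nu$ consistently across $\Lambda$, $W_S$ and $\tilde W_S$ so that the two exponent forms coincide exactly.
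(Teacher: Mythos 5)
Your derivation of the fluctuation relation itself is correct and is essentially the paper's own proof: both compute $\Lambda(\beta,\alpha)=\tfrac{1}{2}\beta h\nu+|\alpha|^2(1-e^{-\beta h\nu})$, use that Gibbs rescaling maps coherent states to coherent states via $|\tilde{\alpha}\rangle=|\alpha e^{-\beta h\nu/2}\rangle$ so that $\tilde{W}_S=e^{-\beta h\nu}W_S$, and identify $(1-e^{-\beta h\nu})$ with $h\nu(1+e^{-\beta h\nu})/(2h\nu_{\rm th})$ to turn $\Delta\Lambda$ into $-\bar{W}_B/h\nu_{\rm th}$ before inserting into Theorem \ref{thm:Fluctuation}; whether one performs this rewriting early (the paper) or checks the equivalent hyperbolic identity at the end (you) is purely cosmetic.

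The one genuine flaw is your final remark about the de Broglie form. The residual term is
\begin{equation}
h\nu_{\rm th}-\tfrac{1}{2}h\nu=\frac{h\nu}{e^{\beta h\nu}-1},
\end{equation}
the Bose occupation energy, which depends on $\nu$ and tends to $kT$ as $\beta h\nu\rightarrow 0$. Substituting the standard definition $\lambda_{\rm dB}(T)=h/\sqrt{2\pi m kT}$ gives $h^2/(m\lambda_{\rm dB}^2)=2\pi kT$, which matches neither the exact expression nor even its high-temperature limit (it is off by a factor of $2\pi$ there). The identity quoted in the theorem holds only with the generalized, frequency-dependent thermal de Broglie wavelength used in \cite{Zoe}, which is precisely why the paper's proof does not derive this breakdown but explicitly defers it to that reference. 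You should either adopt that definition or drop the claim that the standard one suffices; the rest of your argument stands unchanged.
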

The interpretation of this is quite natural, with only the term $\bar{W}$ needing care. We first note that $\bar{W}$ is the average of two changes in pure state energy: the energy change in $|\alpha_0\> \rightarrow |\alpha_1\>$ and its Gibbs rescaled version $|\tilde{\alpha}_0\> \rightarrow |\tilde{\alpha}_1\>$. Since Gibbs rescaling leaves energy eigenstates unchanged this implies that $\bar{W}$ reduces to the classical sharp energy transition in the case of zero coherences, otherwise the coherences provide a non-trivial distortion for a semi-classical `work' term\footnote{We do not refer to this as ``the work'' since it is purely restricted to the semi-classical case of coherent states, and there is no reason to believe that a physically sensible and unique ``work'' quantity should exist for general quantum systems.}. By comparing with the previous analysis one sees that $\bar{W}$ is not simply a change in first moments of $H$, and thus incorporates part of the mean coherence contribution from $\chi_m$. 

The reason that this form of the fluctuation relation is natural is that coherent states are the ``most classical'' states for the quantum system (they saturate the Heisenberg bound), and so gives rise to a fluctuation relation that is as close to the classical one as possible while still having coherent structure. This is reinforced by the fact that all the terms on the right-hand side, except for the `work' term $\tilde{W}$ are \emph{simple equilibrium properties} of a quantum system -- as in the standard Crooks relation. The principle difference is that the equilibrium temperature term $kT$ for the work is replaced by $h\nu_{\rm \tiny th}$ which is a quantum mechanical equilibrium property, related to the de Broglie thermal wavelength. This shows that in the high temperature regime where $kT \gg h \nu$, the equilibrium thermal fluctuations dominate the de Broglie thermal wavelength and we recover the classical regime $h \nu_{\rm \tiny th} \rightarrow kT$. In an intermediate temperature regime we get a smooth temperature-dependent distortion of the standard Crooks relation, while for very low temperatures we have $h \nu_{\rm \tiny th} \rightarrow 1/2 h\nu$ in a high coherence regime -- namely it is now the \emph{vacuum fluctuations} that dominate in the fluctuation relation.

One can also readily compute the mean coherence $\chi_m$ for the oscillator system, and find that for any coherent state $|\alpha\>$ it takes the form 
\begin{equation}
\chi_m =8 \pi |\alpha|^2 (kT)^2 [ \beta h\nu + e^{-\beta h \nu} -1 ].
\end{equation}
Expansion of the exponential gives that $\chi_m = 4 \pi |\alpha|^2 [ (h\nu)^2 - \frac{1}{3}\beta (h\nu)^3 + \cdots ]$ and so we see the temperature dependence only begins at cubic order in $h \nu$. We also see that $\lim_{h \rightarrow 0} \chi_m = 0$, for all temperatures $T$, and so the mean coherence $\chi_m$ is a purely quantum-mechanical feature that disappears in the classical limit. Perhaps more interestingly, we see that $\chi_m$ vanishes in the $T\rightarrow 0$ limit also.  Therefore, since $\chi_m$ only contributes when we simultaneously have both a non-zero Planck's constant $h$ and non-zero temperature $T$, the mean coherence is a genuinely quantum-thermodynamic property of the system.

\subsection{The macroscopic regime and multipartite entanglement.}

One method of characterising the classical regime is by letting thermal fluctuations dominate quantum fluctuations, as we showed. But one expects a notion `classicality' to also arise in the large system limit, where the state of the system is given in terms of a small set of intrinsic variables. In this section we show that if the system $S$ is composed of many independent, identically distributed (IID) systems $S_1,...,S_n$, all higher order corrections bar the variance vanish in the limit $n \to \infty$, under an appropriate scaling of energy units.

Given a macroscopic regime in which a quantum system with no correlations is well-described by a bounded list of intrinsic parameters, the total quantum state can be written as $\rho_{\rm tot} \approx \rho^{\otimes n}$, for some effective subsystem state $\rho$ that encodes the intrinsic parameters. We can now consider the macroscopic states $|\psi_n\> := |\psi\>^{\otimes n}$ and $|\phi_n\> := |\phi\>^{\otimes n}$ on the reference system, for two pure non-energy eigenstates $|\psi\>$ and $|\phi\>$ and see how the fluctuation relation behaves.

In particular, we consider a state $|\psi \> = \sum_{m = 0}^{d-1} \sqrt{p_m} |m \>$ of a $d$-dimensional system, where $\{|m\> \}$ are the eigenstates of the  Hamiltonian $H_1 = \epsilon \sum_{m=0}^{d-1} m |m \> \< m|$. Looking at $n$ copies of the system,
\begin{equation}
|\psi \>^{\otimes n} = \sum_{j = 0}^{n(d-1)} \sqrt{c_j} | e_j \>_n ,
\end{equation}
where we have decomposed $|\psi\>^{\otimes n}$ into a superposition of energy eigenstates $\{ |e_j\>_n \}$ of $n$ copies of the system \cite{ReferenceFrame}, where $|e_j\>_n$ has total energy $j$ and it is assumed $H_{\rm tot}$ is the sum of the non-interacting individual Hamiltonians. The coefficients $\{ c_j \}$ are then described by multinomial coefficients
\begin{equation}
c_j = { n \choose k_0 ... k_{d-1} } p_0^{k_0} p_1^{k_1} ... p_{d-1}^{k_{d-1}}
\end{equation}
where $j = \sum_i i \, k_i$ and $n = \sum_i k_i$. Furthermore, if the spectrum $\{ p_j \}$ is gapless, with $p_j \neq 0$ for $0 \le j \le d-1$, then in the asymptotic limit the distribution $c_j$ over energies becomes \cite{ReferenceFrame}
\begin{equation}
c_j = \frac{1}{\sqrt{2\pi \sigma_n^2 }} e^{- \frac{(\epsilon j-\mu_n(\psi))^2}{2 \sigma_n^2(\psi)} } + \mathcal{O} \left(n^{-1} \right),
\end{equation}
where $\mu_n (\psi) = \< \psi_n| H_{\rm tot} |\psi_n \> = n \< \psi | H_1 |\psi \> = n \mu$ and $\sigma_n^2(\psi) = n \sigma^2$ is the variance of $H_{\rm tot}$ evaluated on $|\psi_n \>$. 

In the large $n$ limit we know from the central limit theorem that the distributions over energy will tend to a Gaussian in the neighborhood around the mean energy. As we show in the Supplementary Material, the fluctuation theorem takes the following form: for any $\epsilon >0$ there is an $M$ such that for all $n >M$ we have
\begin{align}
\left | \frac{P[\psi_n | \tilde{\phi}_n]}{P[\phi^*_n | \tilde{\psi}^*_n]}   - e^{-n (\beta \Delta f +\beta \Delta \mu - \frac{1}{2}\beta^2\Delta \sigma^2)} \right | \le \epsilon,
\end{align}
however with the proviso that the scaling $\beta = \frac{\beta_0}{\sqrt{n} \lambda}$ is adopted, where $\beta_0$ is a dimensionless unit and $\lambda$ a characteristic energy scale -- which is technically required in order to apply the central limit theorem rigorously, and amounts to a choice of units for temperature for a fix scale $n$. This captures the informal statement that for the IID case the states ``become more like Gaussians'' and
\begin{equation}
 \frac{P[\psi_n | \tilde{\phi}_n]}{P[\phi^*_n | \tilde{\psi}^*_n]}  \stackrel{n \rightarrow \infty}\sim e^{-n (\beta \Delta f +\beta \Delta \mu - \frac{1}{2}\beta^2\Delta \sigma^2)}
\end{equation}

Thus in the macroscopic limit, the central limit theorem can be used to truncate the exponent in the fluctuation relation to the second-order cumulants, with the usual provisos in this statement.

As in the asymptotic IID regime one reduces to only the first two cumulants, we can use the second cumulant to place a bound on many-body entangled systems. Suppose we start with a state $|\phi\>^{\otimes n}$ and perform any unitary $U$ that is block diagonal in the total Hamiltonian $H = \sum_i H_i$ then in general the state $U |\phi\>^{\otimes n}$ will be entangled between the $n$ subsystems. But it is readily seen that $\Lambda(\beta, U\phi_nU^\dagger) = \Lambda(\beta, \phi_n)$ and thus (a) in the large $n$ limit will have a Gaussian profile in energy and (b) will generically be entangled between subsystems. 

When the total system $S$ is a large multi-partite system composed of $n$ spin--1/2 particles one can exploit the fact that the variance $Var(\psi,\sigma_i)$ of a Pauli operator $\sigma_i$ in a pure quantum state $|\psi\>$ is upper bounded when limited to $k$--producible states. A state $|\psi \>$ is called \emph{$k$--producible}  if it can be written in the form:
\begin{equation}
|\psi \> = |\phi_1 \> \otimes |\phi_2 \> \otimes ... \otimes |\phi_m \>,
\end{equation}
where $m \geq n/k$ and each pure state $|\phi_j \>$ involves at most $k$ particles \cite{kProducible} (which captures the notion of multipartite entanglement). It is clear from the definition that 1--producible states are product states with $m = n$, containing no entanglement. A pure state $|\psi \>$ has genuine $k$--partite entanglement if  $|\psi \>$ is $k$--producible but not $(k-1)$--producible.
Now for such a state, the variance of a  many-body Pauli spin operator $\Sigma_i^{(n)} = \sigma_i^{(1)} + \dots + \sigma^{(n)}_i$, where $i=x,y,z$ and the superscript denotes the qubit label, obeys 
\begin{equation}
4 \text{Var} (\psi,\Sigma_i^{(n)}) \le s k^2 + (n-sk)^2
\end{equation}
on the set of $k$--producible states \cite{Multipartite,Multipartite2}. Here $s$ is the integer part of $n/k$ and the bound ranges from $n$ up to $n^2$. 
\\
\indent
By using the above bound we see that the largest possible changes in variance will be $\pm [s k^2 + (n-sk)^2]$. Therefore for a system $S$ composed of $n$ particles, with each subsystem having a Hamiltonian $H_i = \epsilon \sigma_z$ giving a total Hamiltonian $H_S = \epsilon \, \Sigma_z^{(n)}$, 
\begin{equation}
 \frac{P [\psi | \tilde{\phi}_n]}{P[\phi_n^* | \tilde{\psi}^*]}   \leq  e^{\beta (W_{B} - \Delta F )} e^{ \frac{(\beta \epsilon)^2 }{8}(s k^2 + (n-sk)^2) + \O(1/\sqrt{n}) }.
\end{equation}
where $W_B = - [\< H_S \>_\psi - \< H_S \>_{\phi_n}]$ is the change in energy of the bath subsystem.
This holds for any large $n$ IID product state $|\phi_n\>$ and any $k$--producible state $|\psi \>$ with genuine $k$--partite entanglement. While this bound (which turns out to be tight \cite{Multipartite}) grows weaker and weaker as $n$ increases, it is nonetheless striking that one can apply the fluctuation relation to provide non-trivial statements on transitions between quantum states with different multipartite entanglement and it would be of interest to further develop this connection further.

%

\section{Outlook}
We started by extending the core defining characteristics of deterministic work in classical mechanics into quantum mechanics, which lead to the notion of coherent work processes. It turns out that the set of coherent processes is non-trivial for quantum systems, and it would be of interest to extend this analysis to determine, for example, what quantum systems admit infinitely decomposable quantum coherence. Because of Theorem \ref{thm:Decomposable} this can be determined from the large literature on the decomposability of classical random variables.

Another open question is whether a \emph{mixed state} version of coherent work processes leads to interesting physics, or could be related to on-going research in thermodynamics. The asymmetry theory perspective on coherent work processes means that this is expected to have non-trivial structure, and so applying existing asymmetry results within a thermodynamic context would be of interest.

The fluctuation relations and the effective potentials discussed here suggest that a wide range of tools (e.g. from many-body physics \cite{GooldManyBody}) could be brought to bear on coherent fluctuation relations. It is essential to develop greater connection with existing fluctuation relations, but it is conjectured that essentially any of the existing extensions of the classical stochastic Crooks relation could be faithfully represented in this setting.

\section{Acknowledgements}

We would like to thank Johan \r{A}berg, Zo{\"e} Holmes, Hyukjoon Kwon, Myungshik Kim, Jack Clarke and Doug Plato for useful discussions. EHM is funded by the EPSRC Centre for Doctoral Training in Controlled Quantum Dynamics. DJ is supported by the Royal Society and also a University Academic Fellowship.

\bibliographystyle{unsrtnat}
\bibliography{GrandCanPaper}

\clearpage

\appendix

\section{Coherent work processes}\label{thm:uniqueness}
Here we gather the proofs of the statements for coherent work processes. First we show that for auxiliary `weight' systems with non-degenerate spectra, the coherent work output is effectively unique.

\begin{proposition}
Let $|\omega \>_A = \sum_k e^{i\theta_k } \sqrt{p_k} |k \>_A$ be the coherent work output from a coherent work process $|\psi \>_S \stackrel{\omega}{\longrightarrow} | \phi\>_S$ where $\{ |k\>_A \}$ is the eigenbasis of the non-degenerate Hamiltonian $H_A$. Then the distribution $\{ p_k \}$ is uniquely determined, while the phases $\{ \theta_k\}$ are arbitrary.
\end{proposition}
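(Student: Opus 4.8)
The plan is to read the energy statistics of $|\omega\rangle_A$ straight off energy conservation, which reduces the uniqueness claim to a deconvolution problem, and then to treat the phases separately by exhibiting an explicit family of energy-conserving unitaries that shift them freely. Throughout, let $V$ be an energy-conserving unitary realising the process, so $[V,\,H_S\otimes\I+\I\otimes H_A]=0$ and $V[|\psi\rangle_S\otimes|0\rangle_A]=|\phi\rangle_S\otimes|\omega\rangle_A$, and let $\boldsymbol a,\boldsymbol b,\boldsymbol p$ denote the distributions over energy of $|\psi\rangle$, $|\phi\rangle$ and $|\omega\rangle$ respectively, with $\boldsymbol p=\{p_k\}$ the distribution in the statement.

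First I would establish the convolution constraint. Since $V$ commutes with the total Hamiltonian it preserves each total-energy spectral projector, so the total-energy distributions of the initial and final states coincide. For the initial state the weight $A$ sits in the default (zero-energy) eigenstate $|0\rangle$, so this distribution is just $\boldsymbol a$; for the final product state $|\phi\rangle\otimes|\omega\rangle$, energies add and the distribution is the convolution of the independent energy distributions of the two factors. Hence $\boldsymbol a=\boldsymbol b\ast\boldsymbol p$, which is the convolution relation underlying Theorem \ref{thm: Convolution}. As $|\psi\rangle$ and $|\phi\rangle$ are fixed, so are $\boldsymbol a$ and $\boldsymbol b$, and the uniqueness claim becomes the statement that this convolution equation has at most one solution $\boldsymbol p$.

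To prove that, I would pass to characteristic functions (equivalently, Fourier series of the lattice distributions), under which convolution becomes multiplication: $\hat a(t)=\hat b(t)\,\hat p(t)$. If $\boldsymbol p'$ were a second solution then $\hat b(t)\big(\hat p(t)-\hat p'(t)\big)=0$ for all $t$. Since $\hat b$ is continuous with $\hat b(0)=1$, it is non-vanishing near the origin, forcing $\hat p=\hat p'$ there; and since the characteristic function of a lattice distribution is real-analytic, its zeros are isolated, so $\hat p=\hat p'$ extends to all $t$ by continuity and $\boldsymbol p=\boldsymbol p'$ follows by Fourier inversion. This is the step I expect to be the main obstacle, precisely because $\hat b$ may have zeros: the argument rests on analyticity to confine these zeros to an isolated set, and it is exactly this structure (the arithmetic of distributions via their characteristic functions \cite{Lukacs,CharacteristicFunctions,ArithmeticProbability}) that underlies decomposability. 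I would note a cleaner elementary route when the spectra are bounded below: aligning the minimal occupied energies, the lowest component of $\boldsymbol a=\boldsymbol b\ast\boldsymbol p$ fixes the first entry of $\boldsymbol p$, and the rest follow by triangular recursion.

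Finally, for the phases I would show they are entirely unconstrained by exhibiting freedom in the realising unitary. For any reals $\{\delta_k\}$ set $D:=\sum_k e^{i\delta_k}|k\rangle\langle k|$ on $A$; being diagonal in the eigenbasis of $H_A$ it commutes with $H_A$, so $(\I_S\otimes D)V$ is again unitary and energy-conserving. Applied to $|\psi\rangle_S\otimes|0\rangle_A$ it yields $|\phi\rangle_S\otimes D|\omega\rangle_A$, a product state with the same reduced state $|\phi\rangle$ on $S$ and output $\sum_k e^{i(\theta_k+\delta_k)}\sqrt{p_k}\,|k\rangle_A$. As the $\delta_k$ range over $\mathbb{R}$ the output phases range freely while $\{p_k\}$ is untouched, establishing both that the phases are arbitrary and, consistently with the first part, that the distribution is the only invariant of the coherent work output.
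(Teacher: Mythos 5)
Your proof follows essentially the same route as the paper's own: your convolution relation $\mathbf{a}=\mathbf{b}\ast\mathbf{p}$ is exactly the paper's characteristic-function identity $\varphi_{S,\psi}(t)\varphi_{A,0}(t)=\varphi_{S,\phi}(t)\varphi_{A,\omega}(t)$ (obtained there by tracing $e^{i(H_S+H_A)t}$ against the states and using $[V,H_S+H_A]=0$), and your phase argument --- composing $V$ with a diagonal unitary on $A$ commuting with $H_A$ --- is identical to the paper's and is correct. The problem is the uniqueness step, where your justification contains a genuine error: the claim that the characteristic function of a lattice distribution is real-analytic, hence has isolated zeros, is false in general. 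For a distribution $(b_n)$ on the doubly-infinite ladder, $\hat b(t)=\sum_n b_n e^{int}$ is merely a continuous function with absolutely summable Fourier coefficients; if $(b_n)$ has heavy tails it need not even be differentiable, and it can vanish on an entire interval. Concretely, $b_n=\frac{1-\cos n}{\pi n^2}$ for $n\neq 0$, $b_0=\frac{1}{2\pi}$, is a probability distribution whose characteristic function is the periodized triangle $\max(1-|t|,0)$, vanishing for $1\le |t|\le \pi$. One can then perturb by a bump supported in that zero set (e.g.\ the translated triangle centered at $t=\pi$, whose Fourier coefficients are $(-1)^n b_n$) to produce $\mathbf{p}'\neq\mathbf{p}$ with $\mathbf{b}\ast\mathbf{p}=\mathbf{b}\ast\mathbf{p}'$; both factorizations are then realizable as coherent work processes between the same pair of states by Theorem~\ref{thm: Convolution}. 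So on the doubly-infinite ladder with such heavy-tailed states, deconvolution is genuinely non-unique: this is not a repairable gap in your argument but a counterexample showing the unqualified statement needs additional hypotheses.

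To be fair, the paper's own proof has the same soft spot: it invokes the uniqueness theorem for characteristic functions but never justifies dividing by $\varphi_{S,\phi}$ on its zero set. Your write-up is in fact more self-aware, since you explicitly flag the zeros of $\hat b$ as the main obstacle, and your closing remark contains the correct repair: when the spectra are bounded below (or the states have finite support, or the moment generating functions are entire, as the paper itself assumes in Appendix~\ref{app:coherent work}), the triangular recursion on the lowest occupied levels --- equivalently, analyticity of the relevant generating functions --- does yield uniqueness. I would promote that ``elementary route'' from a side remark to the actual proof, and state the proposition with one of these hypotheses attached.
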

\begin{proof}
Assume $|\omega \>_A$ is the coherent work output of the process $|\psi_S\> \stackrel{\omega}{\longrightarrow} |\phi_S\>$, enacted by the unitary $U$. Define the characteristic function $\varphi_{X,\eta} (t) := \tr ( e^{i H_X t} |\eta \> \< \eta |_X)$. Then under a coherent work process,
\begin{align}
\varphi_{S,\psi} (t) \varphi_{A, 0 } (t) &= \tr (e^{i (H_S + H_A) t} |\psi\> \< \psi |_S \otimes |0 \> \< 0|_A)  \\
    &= \tr (e^{i (H_S + H_A) t} U^\dagger U |\psi\> \< \psi |_S \otimes |0 \> \< 0 |_A ) \\
    &= \varphi_{S,\phi} (t) \varphi_{A,\omega} (t),
\end{align}
where $U$ is a unitary and $[U,H_S + H_A] = 0$ by the construction of coherent work processes. Due to the uniqueness property of characteristic functions \cite{Lukacs}, $\varphi_{A,\omega} (t) = \varphi_{A,\omega'} (t)$ for all $t \in \mathbb{R}$ if and only if the distribution for $\omega$ and $\omega'$ are equal. Therefore with $\psi$ and $\phi$ pre-determined, the distribution of $\omega$ is uniquely determined. 

Since the only constraint on $U$ is that it is energy conserving, then for any Hermitian operator $L_A$ such that $[L_A,H_A] = 0$, we can define a unitary $V = (\mathds{1}_S \otimes e^{-i L_A \theta} ) U$ that also conserves energy, with $[V,H_S + H_A] = 0$. Furthermore, with appropriate choice of $L_A$, any set of phases $\{\theta_k \}$ can be selected.
\end{proof}

It is necessary to constrain the spectrum of $A$ to be non-degenerate, as given a degenerate Hamiltonian $H_{A'}$ for a system $A'$, one could enact a unitary $T$ that locally conserves energy ($[H_{A'},T]=0$) and can arbitrarily change the state $|\omega\>_{A}$ in the degenerate subspace.

Though we describe the coherent work output as effectively unique, one is always free to enact evolution locally on the auxiliary system that will induce a set of relative phases. In this way, any phases gained from the joint evolution between $S$ and $A$ can be arbitrarily changed with local evolution. Since relative phases are a signature of time-evolution, the result distinguishes between global and local effects of evolution on states.

We can also prove that all reversible transformations involve trivial coherent work processes.

\begin{proposition}
A  coherent work process is trivial if and only if it is reversible. 
\end{proposition}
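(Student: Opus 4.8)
The plan is to argue entirely at the level of characteristic functions, reusing the machinery of the uniqueness proposition above. For a system $X$ in a pure state $|\eta\>$ write $\varphi_{X,\eta}(t) := \tr(e^{iH_X t}|\eta\>\<\eta|)$. The same computation that established uniqueness shows that any coherent work process $|\psi\>_S \stackrel{\omega}{\longrightarrow} |\phi\>_S$, with $A$ started in the zero-energy default $|0\>$, factorises the characteristic function as $\varphi_{S,\psi}(t) = \varphi_{S,\phi}(t)\,\varphi_{A,\omega}(t)$; equivalently the energy random variables obey $\hat{X} = \hat{Y} + \hat{W}$ as in Theorem \ref{thm:Decomposable}. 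This single factorisation identity is the object on which both directions rest.

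For the forward direction (reversible $\Rightarrow$ trivial) I would feed both halves of a reversible pair into this identity. Reversibility supplies $\varphi_{S,\psi} = \varphi_{S,\phi}\,\varphi_{A,\omega}$ together with $\varphi_{S,\phi} = \varphi_{S,\psi}\,\varphi_{A,\omega'}$, and substituting one into the other yields $\varphi_{A,\omega}(t)\,\varphi_{A,\omega'}(t) = 1$ on the neighbourhood of $t=0$ where $\varphi_{S,\psi}$ is non-zero (such a neighbourhood exists since a characteristic function is continuous with value $1$ at the origin). As every characteristic function satisfies $|\varphi|\le 1$, this forces $|\varphi_{A,\omega}(t)| = |\varphi_{A,\omega'}(t)| = 1$ throughout that neighbourhood. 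The key analytic input is then the classical lemma that a characteristic function of unit modulus on a whole neighbourhood of the origin is that of a degenerate random variable; hence $\hat{W}$ is a point mass, $|\omega\>$ has sharp energy, and the process is trivial.

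For the converse (trivial $\Rightarrow$ reversible) the substantive case is that the work output is an energy eigenstate $|\omega\> = |w\>$, so the factorisation reads $\varphi_{S,\psi}(t) = e^{iwt}\varphi_{S,\phi}(t)$ and the two energy distributions differ by a rigid shift. Here I would build the reverse process from the forward unitary $U$. Let $\Sigma$ be the rigid energy translation on the ladder $A$ with $\Sigma|k\>_A = |k-w\>_A$; it is unitary and conjugation by $\Sigma$ shifts $H_A$ by a constant multiple of $\I$. Setting $U'' := (\I_S\otimes\Sigma)\,U^\dagger\,(\I_S\otimes\Sigma^\dagger)$, the constant shifts cancel in the sandwich so that $[U'',H_S+H_A]=0$, and a direct check gives $U''[|\phi\>_S\otimes|0\>_A] = |\psi\>_S\otimes|{-w}\>_A$, which is exactly the reverse coherent work process $|\phi\>\stackrel{-w}{\longrightarrow}|\psi\>$. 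Thus the transformation is reversible.

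The main obstacle I anticipate is not the algebra but the case analysis hidden in the word ``trivial''. The shift construction needs the sharp output to sit on the auxiliary system, so the equivalence is cleanest when read as \emph{reversible if and only if the work output $\omega$ is an energy eigenstate}. The complementary reading of triviality, where the final system state $|\phi\>$ is sharp while $|\omega\>$ retains coherence, is the delicate point: by Corollary \ref{corollary: disorder} the disorder of the energy statistics can never increase, so the coherence shed into $A$ cannot be regenerated on $S$ from the default state, and reversibility in that branch can only occur when $|\psi\>$ is itself an eigenstate up to a rigid shift. Establishing carefully that reversibility is therefore governed entirely by the sharpness of the work output, together with pinning down the modulus-one characteristic-function lemma on a neighbourhood (so that no non-degenerate lattice distribution slips through), is the part I expect to require the most care.
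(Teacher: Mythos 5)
Your proof is correct, and it takes a genuinely different route from the paper's in both directions. For reversible $\Rightarrow$ trivial the paper argues with second moments: writing $\hat{S} = \hat{S}' + \hat{A}_1$ for the forward process and $\hat{S}' = \hat{S} + \hat{A}_2$ for the reverse, additivity of variance over independent summands gives $\mathrm{Var}(\hat{A}_1) + \mathrm{Var}(\hat{A}_2) = 0$, so both work outputs are constant. Your characteristic-function argument (the product $\varphi_{A,\omega}\varphi_{A,\omega'}$ equals $1$ near $t=0$, each factor has modulus at most $1$, hence both have unit modulus and are degenerate by the classical lemma in \cite{Lukacs}) reaches the same conclusion and is strictly more robust: it needs no moment assumptions, whereas the paper's variance bookkeeping is only meaningful when the energy distributions have finite second moments, which arbitrary states on an infinite ladder need not have. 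For trivial $\Rightarrow$ reversible the paper relabels random variables ($\hat{S}' = \hat{S} - a$, $\hat{S} = \hat{S}' + a$) and implicitly leans on the existence construction of Theorem \ref{thm:Decomposable}, while your conjugation $U'' = (\I_S \otimes \Sigma)U^\dagger(\I_S \otimes \Sigma^\dagger)$ exhibits the reverse energy-conserving unitary explicitly; the only cost is that $A$ must support the shift $\Sigma$ as a unitary, i.e.\ be a doubly infinite ladder, which matches the paper's conventions. Finally, the case split you flag is precisely what the paper's proof glosses over with ``without loss of generality, we assume $\hat{A} = a$ is constant'': under the paper's definition a process with $|\phi\>$ sharp but $|\omega\>$ coherent is trivial, yet it is not reversible --- e.g.\ the energy-conserving swap taking $\tfrac{1}{\sqrt{2}}(|0\>+|1\>)_S \otimes |0\>_A$ to $|0\>_S \otimes \tfrac{1}{\sqrt{2}}(|0\>+|1\>)_A$ cannot be undone, by your argument or by the variance count, since no coherent work process can carry a sharp state to a coherent one. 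So the equivalence both proofs actually establish is the one you isolate --- reversibility holds if and only if the \emph{work output} is an energy eigenstate --- and your insistence on that reading repairs a genuine lacuna in the paper's own ``without loss of generality'' rather than reflecting excess caution.
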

\begin{proof}
Let $V|\psi_0 \>_S \otimes |E_0\>_A = |\psi_1\>_{S'} \otimes |\omega\>_{A}$ be a coherent work process. In terms of random variables with respect to Hamiltonians $H_S$ and $H_A$, an equivalent condition for the existence of this process is $\hat{S} = \hat{S}' + \hat{A}$, where $\hat{S}, \,  \hat{S'}, \, \hat{A}$ are the random variables for $|\psi_0\>_S, \, |\psi_1\>_{S'}, \, |\omega\>_{A}$ respectively. If it is a trivial coherent work process, then at least one of $\hat{S}'$ or $\hat{A}$ is a constant random variable. Without loss of generality, we assume $\hat{A} = a$ is constant and so we must have that $\hat{S}' = \hat{S} - a$ and $\hat{S} = \hat{S'} + a$. This gives $|\psi_0 \>_S \stackrel{a}\longrightarrow |\psi_1\>_{S'}$ and $|\psi_1\>_{S'} \stackrel{-a}{\longrightarrow} |\psi_0 \>_S$.

Conversely, assume we have a reversible coherent work process \begin{align}
V|\psi_0 \>_S \otimes |E_0 \>_{A_1} &= |\psi_1\>_{S'} \otimes |\omega_1 \>_{A_1}, \\  
U|\psi_1 \>_{S'} \otimes |E_0' \>_{A_2} &= |\psi_0\>_{S} \otimes |\omega_2 \>_{A_2},
\end{align}
where we introduce a different auxiliary system for each process.
 Then $\hat{S} = \hat{S}' + \hat{A}_1$ and $\hat{S}' = \hat{S} + \hat{A}_2$. As the random variables are all independent, we have both $ \text{Var}(\hat{S}) = \text{Var}(\hat{S}') + \text{Var}(\hat{A}_1)$ and $ \text{Var}(\hat{S}') = \text{Var}(\hat{S}) + \text{Var}(\hat{A}_2)$. Therefore
 \begin{equation}
\mbox{Var}(\hat{A}_1) +\mbox{Var}(\hat{A}_1)  = 0.
\end{equation}
 Due to the non-negativity of the variance for real-valued random variables this implies that $\text{Var}(\hat{A}_1) = \text{Var}(\hat{A}_2) = 0$. This implies that both $\hat{A}_1$ $\hat{A}_2$ are constant random variables and so both processes are trivial as required.
\end{proof}

The consequence of this result is that any reversible coherent work process can simply be labeled by the constant value of $\hat{A}_1$. For example, if we have $|\psi_0\>_S \stackrel{\omega}\longrightarrow |\psi_1\>_{S'}$ and $|\psi_1\>_{S'} \stackrel{\omega'}\longrightarrow |\psi_0\>_S$ then this implies $|\omega\> = |E_k\>$ and $|\omega'\> = |-E_k\>$ for some energy $E_k$ (if $E_0 = 0$), then we label the reversible process simply as $|\psi_0\>_S \stackrel{E_k} \longleftrightarrow |\psi_1 \>_{S'}$.

\subsection{Decomposition of Poisson distributions}\label{Poisson-decomposition}

The main theorem used in this section is Theorem \ref{Raikov's Theorem} due to Raikov and later generalized by others, but we first re-cap on some background details \cite{Lukacs}.

The cumulative distribution function (CDF) of a random variable $\hat{X}$ on the real line is defined as $F(x):= Prob[ \hat{X} \le x]$. If $\hat{X}$ has CDF $F_1$ and $\hat{Y}$ has CDF $F_2$, and $\hat{X}$ and $\hat{Y}$ are independent, then the CDF of the random variable $\hat{Z} := \hat{X} + \hat{Y}$ has CDF given by the convolution
\begin{align}
F_{\hat{Z}}(x)  &= \int_{-\infty}^{\infty}F_1(x-y) dF_2(y) \nonumber \\ 
&= \int_{-\infty}^{\infty}F_1(x-y) \frac{dF_2}{dy}(y) dy\nonumber \\
&= \int_{-\infty}^{\infty}F_1(x-y) f_2(y) dy,
\end{align}
where $f_2(y)$ is the \emph{probability density} for $\hat{Y}$. For completeness of notation, we also define the step function $\epsilon(x) =1$ if $ x\ge 0$ and zero otherwise.

With this notation in place, we can now state the theorem by Raikov.

\begin{theorem}[Raikov, 1938]\label{Raikov's Theorem} The Poisson law has cumulative distribution function
\begin{equation}
F \left(\frac{x-\alpha}{\sigma}, \lambda \right) , \sigma >0, \alpha \in \mathbb{R}, \lambda \ge 0,
\end{equation}
where
\begin{equation}
F(x; \lambda) = e^{-\lambda} \sum_{ k \le x} \frac{\lambda^k}{k!}.
\end{equation}
(for $x<0$ the sum is empty and $F(x;\lambda)=0$). $F(x;\lambda)$ corresponds to the Poisson cumulative distribution for a random variable that takes only integer values at non-negative values $k$ with probability $\frac{\lambda^k}{k!}e^{-\lambda}$. When $\lambda=0$ we obtain the singular distribution law $\epsilon (x - \alpha)$.

The integral equation
\begin{equation}
\int_{-\infty}^\infty F_1(x-y) dF_2(y) = F \left(\frac{x-\alpha}{\sigma}, \lambda \right),
\end{equation}
where $F_1(x)$ and $F_2(x)$ are cumulative distribution functions only has solutions 
\begin{align}
F_1(x) &= F \left(\frac{x-\alpha_1}{\sigma}, \mu \right)\\
F_2(x) &= F \left(\frac{x-\alpha_2}{\sigma}, \nu \right),
\end{align}
where $\mu \ge 0$, $\nu \ge 0$, $\mu + \nu = \lambda$ and $\alpha_1 + \alpha_2 = \alpha$, with $\alpha_1,\alpha_2 \in \mathbb{R}$. 
\end{theorem}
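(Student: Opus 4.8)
The plan is to translate the convolution identity into a statement about generating functions and then exploit the rigidity of the Poisson generating function $e^{\lambda(s-1)}$ as a zero-free entire function of order one. First I would reduce to the standard case $\sigma = 1$, $\alpha = 0$ by the substitution $\tilde{Z} = (Z-\alpha)/\sigma$, writing $\tilde{Z} = \tilde{X} + \tilde{Y}$ with $\tilde{X}, \tilde{Y}$ independent random variables whose cumulative distribution functions are $F_1, F_2$ rescaled. Since shifts and dilations map the claimed family of solutions into itself, it suffices to classify factorizations of a standard Poisson variable and then translate back; this reintroduces the shared scale $\sigma$ and produces the shift decomposition $\alpha_1 + \alpha_2 = \alpha$.

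The second step is to show the factors are supported on the non-negative integers. Writing $a = \operatorname{ess\,inf}\tilde{X}$ and $b = \operatorname{ess\,inf}\tilde{Y}$, independence gives $a + b = \operatorname{ess\,inf}\tilde{Z} = 0$; absorbing $a,b$ into the shifts (this is precisely where $\alpha_1,\alpha_2$ originate) I may assume $\tilde{X},\tilde{Y} \ge 0$ almost surely with essential infimum $0$. Because $\tilde{Z}$ is integer-valued, the fractional parts obey $\{\tilde{X}\} + \{\tilde{Y}\} \in \{0,1\}$ almost surely, and an elementary argument using independence forces each fractional part to be almost surely constant, with the essential-infimum condition pinning that constant to $0$. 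Hence $\tilde{X},\tilde{Y}$ take values in $\mathbb{Z}_{\ge 0}$. I expect this lattice step to be the main obstacle: the statement that independent summands of a lattice variable are themselves lattice-valued is \emph{false} without the non-negativity and essential-infimum normalization, so the argument must genuinely use those hypotheses rather than invoke a generic convolution fact.

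With lattice support in hand, I would pass to the probability generating functions $G_X(s) = \mathbb{E}[s^{\tilde{X}}]$ and $G_Y(s) = \mathbb{E}[s^{\tilde{Y}}]$, which satisfy $G_X(s)G_Y(s) = e^{\lambda(s-1)}$ with $G_X(1)=G_Y(1)=1$. Both are power series with non-negative coefficients; since for $r \ge 1$ one has $G_Y(r) \ge G_Y(1) = 1$, the product identity gives $G_X(r) \le e^{\lambda(r-1)}$ on $[0,\infty)$, so each $G_X, G_Y$ is finite on the positive axis and therefore entire. Non-negativity of the coefficients yields $\max_{|s|=r}|G_X(s)| = G_X(r) \le e^{\lambda(r-1)}$, so $G_X$ (and likewise $G_Y$) is entire of order at most one, and because the product $e^{\lambda(s-1)}$ never vanishes neither factor can vanish.

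Finally, Hadamard's factorization theorem forces a zero-free entire function of order at most one to take the form $e^{c_0 + c_1 s}$. Imposing $G_X(1)=1$ gives $G_X(s) = e^{\mu(s-1)}$ with $\mu := c_1$, and the requirement that the coefficients $e^{-\mu}\mu^k/k!$ be non-negative forces $\mu \ge 0$; thus $\tilde{X}$ is Poisson with parameter $\mu \ge 0$ and, identically, $\tilde{Y}$ is Poisson with parameter $\nu \ge 0$. The identity $e^{\mu(s-1)}e^{\nu(s-1)} = e^{\lambda(s-1)}$ then yields $\mu + \nu = \lambda$. Undoing the reduction of the first step reinstates the common scale $\sigma$ and distributes the shift as $\alpha_1 + \alpha_2 = \alpha$, giving exactly the claimed solutions $F_1(x) = F((x-\alpha_1)/\sigma,\mu)$ and $F_2(x) = F((x-\alpha_2)/\sigma,\nu)$.
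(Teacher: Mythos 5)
Your proposal addresses a statement the paper itself never proves: Raikov's theorem is quoted as an imported result, with the proof explicitly deferred to the cited reference (the paper only adds a correction of a typographical error in the shift parameters and two consistency checks, on the sum of shifted Poisson variables and on the $\lambda \to 0$ limit). So there is no in-paper argument to compare against; judged on its own merits, your proof is correct, and it is essentially the classical proof of Raikov's theorem found in the literature the paper cites. All the key ingredients are present and in the right order: the affine normalization to $\sigma=1$, $\alpha=0$; the lemma that independent summands of a $\mathbb{Z}_{\ge 0}$-valued variable are themselves $\mathbb{Z}_{\ge 0}$-valued once essential infima are absorbed into the shifts (your fractional-part argument is the right one, and you correctly identify this as the step that genuinely needs the normalization, since without it the factors are only supported on translates of the lattice); the extension of $G_X(s)G_Y(s)=e^{\lambda(s-1)}$ to entire functions; and Hadamard's factorization of a zero-free entire function of order at most one. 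Two points deserve tighter bookkeeping in a final write-up. First, the inference ``$G_Y(r)\ge 1$, hence $G_X(r)\le e^{\lambda(r-1)}$'' presupposes the product identity at $r>1$, where finiteness of the factors is not yet known; this is repaired by Tonelli's theorem, since for non-negative integrands and independent variables $\mathbb{E}\bigl[r^{\tilde{X}}\bigr]\,\mathbb{E}\bigl[r^{\tilde{Y}}\bigr]=\mathbb{E}\bigl[r^{\tilde{X}+\tilde{Y}}\bigr]$ holds in $[0,\infty]$, and the right-hand side is finite. Second, in the Hadamard step the exponent coefficient $\mu=c_1$ is a priori complex; it is the non-negativity of the power-series coefficients $e^{-\mu}\mu^k/k!$ (for instance the ratio of the $k=2$ to the $k=1$ coefficient) that forces $\mu$ to be real and non-negative, as you indicate but should spell out. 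Neither point is a gap in the idea, only in the details.
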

We refer the reader to \cite{PoissonDistribution} for the proof. 

We note that the original statement of Raikov's theorem found in \cite{PoissonDistribution} has a typographical error to do with the shifting parameters $\alpha$ and $\beta$. Namely, $F_1(x) = F(\frac{x - \alpha - \beta}{\sigma},\mu)$ and $F_2(x) = F(\frac{x - \alpha + \beta}{\sigma},\nu)$, with $\beta \in \mathbb{R}$. This error can be seen in the following manner. Suppose we have a Poisson distributed random variable $\hat{Y}$ shifted by $a \in \mathbb{R}$, such that $\hat{Y}' = (\hat{Y} -a) \sim \text{Pois}(\mu)$, where $\text{Pois}(\mu)$ denotes a Poisson distribution on the non-negative integers. Similarly, let us define another Poisson distributed random variable $\hat{Z}$ shifted by $b \in \mathbb{R}$, such that $\hat{Z}' = (\hat{Z} - b) \sim \text{Pois}(\nu)$. If we assume $\hat{Y}$ and $\hat{Z}$ are independent, then the sum of the random variables is
\begin{align}
\hat{X} &= \hat{Y} + \hat{Z} \\
&= (\hat{Y} - a) + (\hat{Z} - b) + (a+b).
\end{align}
Since $a$ and $b$ are constants, we are able to re-arrange this to find
\begin{equation}
\hat{X} - (a+b) = (\hat{Y}-a) + (\hat{Z} - b).
\end{equation}
However, the right hand side of this equation corresponds to the sum of two  unshifted Poisson distributed random variables and thus $\hat{X} - (a+b) = \hat{Y}' + \hat{Z}' \sim \text{Pois}(\mu + \nu)$, where we deduce this from Raikov's theorem. From this we see that $\hat{X}$ is a Poisson distributed random variable shifted by an amount $(a + b)$. Now let us choose $a = \alpha + \beta$ and $b = \alpha - \beta$. Thus $a + b = 2\alpha$ and  we expect the sum of the shifted random variables to have a shifting of $2\alpha$, rather than the $\alpha$ that is claimed.

Another consistency check is to consider the limit of a Poisson cumulative distribution function. We have that
\begin{align}
\lim_{\lambda \to 0} F \left(\frac{x-\alpha}{\sigma};\lambda \right) &= \lim_{\lambda \to 0} \sum_{k \leq \frac{x-\alpha}{\sigma}} e^{-\lambda} \frac{\lambda^k}{k!} \\
&= \begin{cases}
1,& \text{if } x \geq \alpha \\
0,& \text{otherwise}
\end{cases}
\end{align}
where we make use of the limit of a Poisson probability mass function \cite{LimitPoisson}. Now suppose $\hat{Y}' \sim \text{Pois}(\mu)$ and $\hat{Y} = \hat{Y}' + a$. Then $\lim_{\mu \to 0} \hat{Y} = \lim_{\mu \to 0} \hat{Y}' + a = a$. If we similarly define $\hat{Z}' \sim \text{Pois}(\nu)$, with $\hat{Z} = \hat{Z}' + b$, then $\lim_{\nu \to 0} \hat{Z} = b$. Thus when we sum the two shifted random variables, $\lim_{\mu,\nu \to 0} \hat{Y} + \hat{Z} = a+b$, as expected.

\subsection{Proof for decomposability theorem}\label{Proof Decomposable}

\begin{theorem}
Given a quantum system $X$ with Hamiltonian $H_X$ with a discrete spectrum, a state $|\psi\>_X$ admits a non-trivial coherent work process if and only if the associated classical random variable $\hat{X}$ is a decomposable random variable. Furthermore, for a coherent work process 
\begin{equation}
|\psi \>_X \stackrel{\omega}{\longrightarrow} |\phi \>_Z,
\end{equation}
the associated classical random variables are given by $\hat{X} =  \hat{Z} + \hat{W}$, where $\hat{Z}$ and $\hat{W}$ correspond to  the measurements of $H_Z$ and $H_{W}$ in $|\phi \>_Z$ and $|\omega \>_W$ respectively.
\end{theorem}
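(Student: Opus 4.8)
The plan is to reduce the entire statement to a single multiplicative identity between characteristic functions and then read it off as a convolution of energy distributions. Recall that the characteristic function of the energy measurement in a pure state $|\eta\>$ of a system $X$ is $\varphi_{X,\eta}(t) = \tr(e^{iH_X t}|\eta\>\<\eta|)$, and that by the convolution theorem a product $\varphi_{Z,\phi}(t)\varphi_{W,\omega}(t)$ is the characteristic function of a sum of two independent random variables with those marginals. The uniqueness theorem for characteristic functions then makes the correspondence ``factorisation $\Leftrightarrow$ the distribution of $\hat{X}$ equals the convolution of the distributions of $\hat{Z}$ and $\hat{W}$'' exact, which is precisely the statement $\hat{X} = \hat{Z} + \hat{W}$ with $\hat{Z},\hat{W}$ independent.

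For the forward implication together with the ``furthermore'' claim, I would invoke the characteristic-function computation already carried out in the uniqueness Proposition, which applies verbatim when the output systems differ from the input. For any coherent work process $|\psi\>_X \to |\phi\>_Z$ with output $|\omega\>_W$ and auxiliary initialised in an energy eigenstate (which I take to have energy zero, absorbing any nonzero reference level into a deterministic shift), microscopic energy conservation of the enacting unitary gives $\varphi_{X,\psi}(t) = \varphi_{Z,\phi}(t)\varphi_{W,\omega}(t)$, i.e. $\hat{X} = \hat{Z}+\hat{W}$. If the process is moreover non-trivial then, by definition, neither $|\phi\>$ nor $|\omega\>$ is an energy eigenstate, so neither $\hat{Z}$ nor $\hat{W}$ is constant and $\hat{X}$ is decomposable.

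The reverse implication is where the real work lies. Given a decomposition $\hat{X}=\hat{Z}+\hat{W}$ into independent non-constant random variables with distributions $(q_j)$ on values $\{y_j\}$ and $(r_k)$ on values $\{z_k\}$, and writing $|\psi\>_X = \sum_n \sqrt{p_n}\, e^{i\alpha_n}|E_n\>$, I would choose output Hamiltonians $H_Z,H_W$ with eigenvalues $\{y_j\},\{z_k\}$ and define, on the support of $|\psi\>$, the map
\[
V|E_n\>_X = \sum_{\substack{j,k:\ y_j+z_k=E_n}} \sqrt{\frac{q_j r_k}{p_n}}\; e^{i(\theta_j+\varphi_k-\alpha_n)}\, |y_j\>_Z\otimes|z_k\>_W .
\]
By construction $V$ sends the energy-$E_n$ eigenspace of $H_X$ into the energy-$E_n$ eigenspace of $H_Z+H_W$, so it conserves energy, and the convolution relation $p_n = \sum_{j,k:\,y_j+z_k=E_n} q_j r_k$ is exactly what normalises each image vector; orthonormality across distinct $n$ is automatic since the images lie in distinct total-energy sectors, so $V$ is an isometry. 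A direct computation gives $V|\psi\>_X = |\phi\>_Z\otimes|\omega\>_W$ with $|\phi\>_Z=\sum_j\sqrt{q_j}e^{i\theta_j}|y_j\>$ and $|\omega\>_W=\sum_k\sqrt{r_k}e^{i\varphi_k}|z_k\>$ for arbitrary phases $\theta_j,\varphi_k$; since $\hat{Z},\hat{W}$ are non-constant these outputs are not energy eigenstates, so the process is non-trivial. Extending $V$ to a full energy-conserving unitary on the input space is routine.

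The hard part will be this explicit construction, and its crux is recognising that column normalisation of $V$ is \emph{equivalent} to the convolution identity for the distributions --- this is exactly where decomposability of $\hat{X}$ is consumed. A secondary subtlety is that a discrete $\hat{X}$ could in principle admit factors $\hat{Z},\hat{W}$ whose support is not contained in the spectrum of $H_X$; this is dealt with by the freedom to pick $H_Z,H_W$ with eigenvalues at precisely the values taken by $\hat{Z},\hat{W}$, energy conservation being guaranteed because every sum $y_j+z_k$ of positive weight lies in the spectrum of $H_X$.
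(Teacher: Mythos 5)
Your proposal is correct and follows essentially the same route as the paper's own proof: the forward direction rests on energy conservation forcing the total-energy statistics to be preserved (the paper phrases this via block-diagonality of the isometry, you via the characteristic-function identity the paper itself uses in its uniqueness proposition), and the converse is the same explicit construction of an energy-conserving isometry defined sector-by-sector on the support of $|\psi\>_X$, whose normalisation is exactly the convolution identity $p_n = \sum_{j,k:\, y_j+z_k=E_n} q_j r_k$. The only differences are cosmetic: the paper carries the auxiliary zero-energy system $Y$ explicitly and spells out the extension of $V$ to a full energy-conserving isometry by enlarging the output space, a step you correctly flag as routine.
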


\begin{proof}
Given quantum systems $X,W,Y,Z$ with corresponding Hamiltonians, suppose first the state $|\psi\>_X$ has decomposable coherence. This implies that there exists an auxiliary system $Y$ with zero energy eigenstate $|0\>_Y$ and an isometry $V$ from $\H_X \otimes \H_Y$ into $\H_Z \otimes \H_W$ with
\begin{equation}\label{conserve}
V (H_X\otimes \I_Y +\I_X\otimes H_Y) = (H_Z \otimes \I_W + \I_Z \otimes H_W)V^\dagger,
\end{equation}
such that 
\begin{equation}
V [ |\psi\>_X \otimes |0\>_Y] = |\phi\>_Z\otimes |\omega\>_W,
\end{equation}
and where neither $|\phi\>_Z$ nor $|\omega\>_W$ are energy eigenstates of their respective Hamiltonians. The associated classical random variables for the input systems are $\hat{X}$ and $\hat{Y}$, while the associated classical random variables on the output systems are $\hat{Z}$ and $\hat{W}$.

Now since $\hat{Y}=0$ the distribution for $\hat{X}$ coincides with the distribution $\hat{X} + \hat{Y}$, which is simply the total energy of the input systems. However equation (\ref{conserve}) implies that $V$ is block diagonal in the total energy and thus preserves the measurement statistics for the total energy. Therefore $ \hat{X}=\hat{Z} + \hat{W}$. Since $|\phi\>_Z$ and $|\omega\>_W$ are not energy eigenstates this means that both $\hat{Z}$ and $\hat{W}$ have non-trivial distributions and are independent since the composite state on the output is a product state, and thus $\hat{X}$ is a decomposable random variable. 

Conversely, given an initial state $|\psi\>_X$ with associated classical random variable $\hat{X}$, now suppose that $\hat{X}$ is decomposable. This means there exist classical random variables $\hat{Z}$ and $\hat{W}$, both with independent, non-trivial distributions, such that $\hat{X} = \hat{Z}+\hat{W}$. Since $H_X$ is assumed to have a discrete spectrum the classical random variable $\hat{X}$ only has support on a discrete set of values, and thus $\hat{Z}$ and $\hat{W}$ only take on discrete values too. Let $E_1, E_2, \dots$ denote the values that the random variables $\hat{X}, \hat{Z}, \hat{W}$ take on, and define the set $\Omega = \{E_1, \dots, E_d, \dots\}$. Denote by $(r_k)$ and $(s_k)$ respectively the distributions of $\hat{Z}$ and $\hat{W}$ over $\Omega$.

Given these distributions, we first introduce two quantum systems  $Z,W$, each of countable dimension $|\Omega|$ with Hamiltonians $ H_Z, H_W$ and each with an energy spectrum lying in $\Omega$. Define the states
\begin{align}
|\phi\>_Z &:= \sum_{k=1}^{|\Omega|} \sqrt{r_k} |E_k\>_Z \\
|\omega\>_W &:=  \sum_{k=1}^{|\Omega|} \sqrt{s_k} |E_k\>_W,
\end{align}
on quantum systems $Z$ and $W$ where we label the eigenstates of each Hamiltonian with their corresponding eigenvalues. We also introduce a third system auxiliary $Y$ of arbitrary dimension, but with a zero energy eigenstate $|0\>_Y$ for some Hamiltonian $H_Y$. We now construct an isometry $V_{\mbox{\tiny tot}}$ from $\H_X \otimes \H_Y$ to $\H_Z \otimes \H_W$ that sends $|\psi\>_X \otimes |0\>_Y$ to $|\phi\>_Z \otimes |\omega\>_W$. 

First observe that
\begin{equation}
|\phi\>_Z \otimes |\omega\>_W = \sum_{z,w = 1}^{|\Omega|} \sqrt{r_z s_w}  |E_z\>_Z \otimes |E_w\>_W.
\end{equation}
However $\hat{X} = \hat{Z} + \hat{W}$, with $\hat{Z}$ and $\hat{W}$ independent random variables. This implies that the distribution $(p_x)$ of $\hat{X}$ over $\Omega$ is given by the convolution

\begin{equation}\label{gen convolution}
p_x = \sum_{\substack{z,w: \\ E_x = E_z + E_w }} q_z r_w.
\end{equation}
Therefore we instead write the above expression for the output states as
\begin{align}
|\phi\>_Z \otimes |\omega\>_W &= \sum_{z,w = 1}^{|\Omega|} \sqrt{r_z s_w}  |E_z\>_Z \otimes |E_w\>_W \nonumber \\
&= \sum_{E_x} \hspace{-0.2cm} \sum_{\substack{z,w: \\ E_x = E_z + E_w }} \sqrt{r_z s_w}  |E_z\>_Z \otimes |E_w\>_W.
\end{align}
We let $\Pi_x$ denote the projector onto the $E_x$ eigenspace of $\H_X$, and so for those $x$ for which $p_x \ne 0$ we have
\begin{equation}
\Pi_x |\psi\>_X = \sqrt{p_x} |\psi_x\>_X,
\end{equation}
which defines a normalised pure state $|\psi_x\>_X$ within the $E_x$ eigenspace of $\H_X$. Thus 
\begin{equation}
|\psi\>_X = \sum_{x=1}^{|\Omega|} \sqrt{p_x} |\psi_x\>_X.
\end{equation}

We next define a linear mapping $V$ from $\H_{\mbox{\tiny sup}}:=\mbox{span}\{|\psi_x\>_X \otimes |0\>_Y\}_x$ into $\H_Z\otimes \H_W$ for which
\begin{equation}\label{initial V}
V \left [ |\psi_x\>_X \otimes |0\>_Y \right ]= \frac{1}{ \sqrt{p_x}}\hspace{-0.3cm} \sum_{\substack{z,w: \\ E_x = E_z + E_w }} \sqrt{r_z s_w}  |E_z\>_Z \otimes |E_w\>_W,
\end{equation}
for all $x =1, \dots |\Omega|$ for which $p_x \ne 0$. If $p_x \ne 0$ for some $x$ then equation (\ref{gen convolution}) implies that there is at least one pair $(z,w)$ such that $q_z$ and $r_w$ are non-zero and so for such an $x$ the image under $V$ is a non-zero vector.

\emph{V is an isometry}: It is readily checked, using the orthogonality of the eigenstates of the output systems and equation (\ref{gen convolution}) that
\begin{equation}
\<\psi_x |_X\otimes \<0|_Y V^\dagger V |\psi_x\>_X \otimes |0\>_Y = 1
\end{equation}
for all $x$ for which $p_x \ne 0$. Note that since different energy eigenspaces are orthogonal, we have both that the input states $\{|\psi_x\>_X \otimes |0\>_Y\}_x$ are an orthonormal basis for $\H_{\mbox{\tiny sup}}$, and also the output states are orthonormal
\begin{equation}
\<\psi_x |_X\otimes \<0|_Y V^\dagger V |\psi_y\>_X \otimes |0\>_Y =0,
\end{equation}
for any $x \ne y$.
The above provide the matrix components for $V$ and imply that $V$ is an isometry from $\H_{\mbox{\tiny sup}}$ into $\H_Z \otimes \H_W$. 

\emph{V is energy conserving:} Let $H_{XY}:= H_X \otimes \I_Y + \I_X \otimes H_Y$ be the initial Hamiltonian. Now since $\H_X \otimes \H_Y = \H_{\mbox{\tiny sup}} \oplus \H_{\mbox{\tiny sup}}^\perp$ and since $|\psi_x\>_X\otimes |0\>_Y$ are eigenstates of $H_{XY}$  we can write $H_{XY} = H_{XY,s} \oplus H_{XY,p}$, where $H_{XY,s}$ is simply the restriction of the total Hamiltonian to the subspace $\H_{\mbox{\tiny sup}}$ and $H_{XY,p}$ is the component on $\H_{\mbox{\tiny sup}}^\perp$. Now each pair $(z,w)$ in equation (\ref{initial V}) contributes a vector $\sqrt{r_z}|E_z\>_Z \otimes \sqrt{s_w}|E_w\>_W$ that is an $E_x$ energy eigenvector of the total Hamiltonian $H_{ZW}:=H_Z \otimes \I_W + \I_Z \otimes H_W$, and so $V|\psi_x\>_X \otimes |0\>_Y$ is an $E_x$ energy eigenstate of $H_{ZW}$ for any $x$. Thus $V$ sends $E_x$ eigenstates of $H_{XY,s}$ to $E_x$ eigenstates of $H_{ZW}$ as required by energy conservation.

Finally, by rearranging equation (\ref{initial V}) and summing over $x$ we see that $V[| \psi\>_X\otimes |0\>_Y] = |\phi\>_Z \otimes |\omega\>_W$. To get an isometry on the full space, we again write $\H_X \otimes \H_Y = \H_{\mbox{\tiny sup}} \oplus \H_{\mbox{\tiny sup}}^\perp$, and extend the systems $Z$ and $W$ and their Hamiltonians so that now 
\begin{equation}
\H_Z \otimes \H_W = \mbox{span}\{|E_z\>_Z \otimes |E_w\>_W\}_{z,w} \oplus \H_{\mbox{\tiny other}}
\end{equation}
with $\mbox{dim}( \H_{\mbox{\tiny other}}) \ge \mbox{dim} (\H_{\mbox{\tiny sup}}^\perp)$. The final isometry on the input system is then $V_{\mbox{\tiny tot}} = V \oplus V_{\mbox{\tiny other}}$, where $V_{\mbox{\tiny other}}$ is an arbitrary energy conserving isometry from $ \H_{\mbox{\tiny sup}}^\perp$ into $ \H_{\mbox{\tiny other}}$ . This constructs a globally energy conserving $V_{\mbox{\tiny tot}}$ such that
\begin{equation}
V_{\mbox{\tiny tot}} |\psi\>_X \otimes |0\>_Y = |\phi\>_Z \otimes |\omega\>_W,
\end{equation}
and since both of $\hat{Z}$ and $\hat{W}$ are non-trivial random variables we have that neither of the output states $|\omega\>_W, |\phi\>_Z $ is an energy eigenstate of its respective Hamiltonian. Thus the state $|\psi\>_X$ has decomposable coherence as required.

\end{proof}

\subsection{Proof of semi-classical coherent work processes theorem}\label{Proof Semi Classical}

The proof of the semi-classical theorem relies on Raikov's theorem, which is a result for classical probability distributions. The core idea behind the proof is outlined in Fig. \ref{fig: Raikov Diagram}. For the non-trivial direction, one starts with a tuple $(S,|\psi\>_S,H_S)$ with $S$ a quantum system with Hamiltonian $H_S$, and $|\psi\>_S$ a state of the system that admits a non-trivial coherent work process. Any state $|\psi\>_S$ uniquely determines a classical random variable $\hat{S}$, although the converse is not true since we have freedom in, for example, choosing phases or the dimension of $S$. If $\hat{S}$ is decomposable, then Raikov's theorem tells us the relationship between the components of the decomposition $\hat{S} = \hat{S}' + \hat{A}$. From this we can construct (non-unique) tuples $(S',|\phi\>_{S'},H_{S'})$ and $(A,|\omega\>_A,H_A)$ associated to the random variables $\hat{S}'$ and $\hat{A}$ respectively. On these quantum systems we then show that a coherent work process can be constructed. The precise statement and proof are as follows.

\begin{figure}[h]
\begin{center}
\includegraphics[width=6cm]{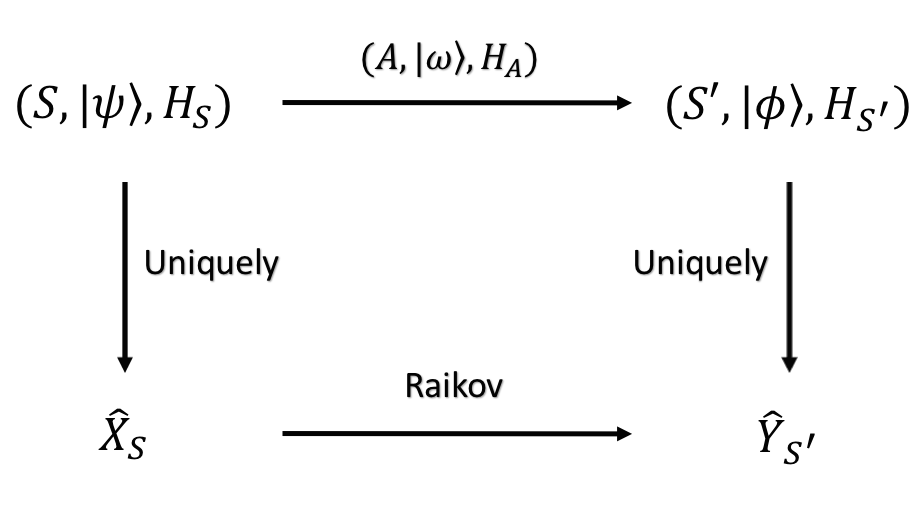}
\end{center}
\caption{\textbf{Raikov's theorem and coherent work processes.} Given any quantum system $S$ with Hamiltonian $H_S$ in a state $|\psi\>_X$ there is a unique classical random variable $\hat{X}_S$ corresponding to the energy measurement on $S$. Conversely, given a random variable $\hat{X}_S$ we may construct a (non-unique) tuple $(S,|\psi\>_S, H_S)$ associated to this random variable.
  \label{fig: Raikov Diagram}
}
\end{figure}

\begin{theorem} Let $S$ be a harmonic oscillator system, with Hamiltonian $H_S = h \nu a^\dagger a$. Let $\mathcal{C}$ be the set of quantum states for $S$ as defined above. Then:
\begin{enumerate}
\item The set of quantum states $\mathcal{C}$ is closed under all possible coherent work processes from $S$ to $S$ with fixed Hamiltonian $H_S$ for both the input and output.
\item Given any quantum state $|\psi\>_S\in \mathcal{C}$ there is a unique, canonical state $|\alpha, k\>_S \in \mathcal{C}$ such that we have a reversible transformation
\begin{equation}
|\psi\>_S \stackrel{\omega_c}{\longleftrightarrow} |\alpha,k\>_S,
\end{equation}
with $|\omega_c\>_A = |0\>_A$ and $\alpha = |\alpha|$.
Moreover, the only coherent work processes possible between canonical states are 
\begin{equation}
|\alpha, k\>_S \stackrel{\omega}{\longrightarrow} |\alpha',k\>_S,
\end{equation}
 such that $|\alpha'| \le |\alpha|$. Modulo phases, the coherent work output is of the form $|\omega\>_A =  |\lambda, n\>_A$, where $\lambda = \sqrt{|\alpha|^2 - |\alpha'|^2}$ and $n,k'$ are any integers that obey $n+k' = k$.
\item In the classical limit of large displacements $|\alpha| \gg 1$, from coherent processes on $\mathcal{C}$ we recover all classical work processes on the system $S$ under a conservative force.
\end{enumerate}
\end{theorem}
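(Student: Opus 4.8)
The plan is to reduce the whole theorem to Raikov's theorem (Theorem \ref{Raikov's Theorem}) by first giving a clean characterization of $\mathcal{C}$ purely in terms of energy statistics. The key observation is that a coherent state $|\alpha\>$ has Poissonian photon-number statistics $p_n = e^{-|\alpha|^2}|\alpha|^{2n}/n!$, i.e. a Poisson law of parameter $\lambda = |\alpha|^2$; the rigid translation $\Delta^k$ shifts this law by $k$; and since $H_S = h\nu a^\dagger a$ is non-degenerate, any $L$ with $[L,H_S]=0$ is diagonal in the energy eigenbasis, so $e^{-iLt}$ realizes an arbitrary relative phase on each $|n\>$. Hence $|\psi\>\in\mathcal{C}$ if and only if the energy distribution of $|\psi\>$ is a Poisson law of parameter $\lambda\ge 0$ shifted by an integer $k\ge 0$, with the phases of the amplitudes completely free. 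Since by Theorem \ref{thm:Decomposable} a coherent work process depends only on the energy distributions as random variables and is insensitive to phases, this reformulation is exactly what is needed.

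First I would prove Part 1 (closure). By Theorem \ref{thm:Decomposable}, any coherent work process $|\psi\>_S \stackrel{\omega}{\longrightarrow} |\phi\>_S$ gives $\hat{S} = \hat{S}' + \hat{W}$ with $\hat{S}',\hat{W}$ independent, where $\hat{S},\hat{S}',\hat{W}$ are the energy variables of $|\psi\>,|\phi\>,|\omega\>$. For $|\psi\>\in\mathcal{C}$ the variable $\hat{S}$ is a shifted Poisson, so Raikov's theorem forces both $\hat{S}'$ and $\hat{W}$ to be shifted Poisson as well, with Poisson parameters $\mu,\nu$ summing to $\lambda$ and integer shifts summing to $k$. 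Reading this back through the characterization of $\mathcal{C}$, both the output state $|\phi\>$ and the work output $|\omega\>$ lie in $\mathcal{C}$, which is the closure statement.

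The structural content of Part 2 follows from the same decomposition. The relation $\mu+\nu=\lambda$ gives $|\alpha'|^2=\mu\le\lambda=|\alpha|^2$, hence $|\alpha'|\le|\alpha|$, while $\nu=|\alpha|^2-|\alpha'|^2$ identifies the work output as $|\omega\>_A=|\lambda_\omega,n\>$ with $\lambda_\omega=\sqrt{|\alpha|^2-|\alpha'|^2}$, and the shift constraint $n+k'=k$ comes from additivity of the (non-negative integer) shifts. The converse — that every such process is actually achievable — is supplied by the explicit energy-conserving isometry built in the proof of Theorem \ref{thm:Decomposable}. For the reversibility-to-canonical-form claim, stripping the phases of $|\psi\>$ together with the phase of $\alpha$ is accomplished by a single phase rotation $e^{-iLt}$ on $S$ with $[L,H_S]=0$; this is a trivial coherent work process leaving $A$ in $|0\>$ and is manifestly invertible, so $|\omega_c\>=|0\>$. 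Uniqueness of the canonical state follows because the pair $(\lambda,k)$ — the variance and the minimum of the support of the energy distribution — is fixed by $|\psi\>$.

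Finally, for the classical limit (Part 3) I would use that coherent states are the most classical states: they saturate the Heisenberg bound, and for $|\alpha|\gg1$ the fluctuations of $x$, $p$ and $H_S$ become negligible relative to the scale $|\alpha|$ set by the displacement, so the state is localized at a definite phase-space point $(\langle x\rangle,\langle p\rangle)$ with a sharp energy. The energy delivered to $A$ then becomes a definite number $w$, and the admissible transitions $|\alpha,k\>\to|\alpha',k\>$ with $|\alpha'|\le|\alpha|$, together with their reverses (the coherent work input), trace out precisely the endpoint-determined energy changes of an oscillator under a conservative force, whose work is path-independent. I expect the main obstacle to be not any single manipulation but the reliance on the rigidity supplied by Raikov's theorem — everything hinges on a Poisson law admitting only Poisson factors — together with careful bookkeeping of the lattice-valued shifts, so that all component distributions stay supported on genuine non-negative, integer-spaced energy levels; this is exactly where the constraints $n+k'=k$ and the one-sided bound $|\alpha'|\le|\alpha|$ get pinned down.
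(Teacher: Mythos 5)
Your proposal is correct and follows essentially the same route as the paper's own proof: reduce to energy statistics via Theorem \ref{thm:Decomposable}, invoke Raikov's theorem to force both factors of a shifted Poisson law to be shifted Poissons (pinning down $|\alpha'|\le|\alpha|$, $\lambda=\sqrt{|\alpha|^2-|\alpha'|^2}$ and the shift bookkeeping), handle phases with energy-conserving phase unitaries generated by operators commuting with $H_S$, and obtain the classical limit from the vanishing relative fluctuations of position, momentum and energy at large $|\alpha|$ together with path-independence. Your explicit uniqueness argument for the canonical state (the pair given by the variance and the minimum of the support determines $(\lambda,k)$) and your use of reversed processes rather than the $k$-label to recover energy-increasing classical work are small presentational variants, not substantive departures.
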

\begin{proof}
 
(Proof of 1.) Let $|\psi \>_S = e^{i L_S \theta} |\lambda ,m\>_{S} $ be a state in $\mathcal{C}$ defined with respect to a Hamiltonian $H_S = h \nu a^\dagger a$, with $[L_S,H_S ]=0$, $L_S = L_S^\dagger$ and $\theta \in \mathbb{R}$. Denote by $\hat{S}$ the classical random variable obtained from an energy measurement of $S$, which takes values in $\{ s_n = n h \nu :  n \in \mathbb{Z}\} $, and with distribution $\p = (p_n = |\< \psi | n \>|^2 )$. We can expand the state $|\psi \>_S$ in the energy basis as
\begin{equation}
|\psi \> = e^{-\lambda/2} \sum_{n=0}^\infty e^{i\theta_n} \sqrt{\frac{\lambda^n}{n!}}  |n +m \>,
\end{equation}
for some arbitrary phases $\{\theta_n \}$ and some $m \in \mathbb{N}$. Denote by $\sigma = h \nu$ the energy spacing of $H_S$. Then the energy measurement distribution is Poissonian and given by 
\begin{equation}
p_n=P \left(\hat{S} = (n+m)h \nu \right) = e^{-\lambda} \frac{\lambda^n}{n!},
\end{equation}
for $n \ge 0$ and $p_n =0$ for $n <0$. The associated cumulative distribution function is given by
\begin{equation}
F_{\hat{S}}(x) =F \left( \frac{x - m h \nu}{\sigma} , \lambda \right) .
\end{equation}

The set $\mathcal{C}$ is clearly closed under trivial coherent work processes. Now suppose that $|\psi\>_S$ admits some non-trivial coherent work process given by
\begin{equation}
|\psi\>_S \stackrel{\omega}{\longrightarrow} |\phi\>_{S'}.
\end{equation}
By Theorem \ref{thm:Decomposable} this implies that $\hat{S} = \hat{S}' + \hat{A}$, where $\hat{S}'$ and $\hat{A}$ are non-trivial, independent classical random variables. From these we can construct tuples $(S', |\phi\>_{S'}. H_{S'})$ and $(A, |\omega\>_A, H_A)$ that realise the associated coherent work process, as detailed in the proof of Theorem \ref{thm:Decomposable}. For any $|\psi\>_S \in \mathcal{C}$ we have that $\hat{S}=\hat{S}' + \hat{A}$ is a decomposable Poisson random variable, and thus by Raikov's theorem the cumulative distribution functions for $\hat{S}'$ and $\hat{A}$ are also Poissonian, with
\begin{align}\label{Poisson eqns}
F_{\hat{S}'}(x) &= F \left( \frac{x - \alpha_1}{\sigma} , \mu \right), \\
F_{\hat{A}}(x) &= F \left( \frac{x - \alpha_2}{\sigma} , \kappa \right),
\end{align}
where $\nu, \kappa \ge 0$ and constrained as $\mu + \kappa = \lambda$, $\alpha_1 + \alpha_2 = m h \nu$ and $\alpha_1,\alpha_2 \in \mathbb{R}$.

As $\sigma$ appears identically in all cumulative distribution functions this implies that both $H_{S'}$ and $H_A$ must have within their spectral decompositions, countably infinite discrete energy eigenstates with equal spacings $h \nu$, corresponding to the outcomes of the energy measurements for the constructed states $|\phi\>_{S'}$ and $|\omega\>_A$. We write the corresponding distributions for $\hat{S}'$ and $\hat{A}$ as
\begin{align}
q_n := P [\hat{S}' = n h \nu + c_{S'}] &=  e^{-\mu}\frac{\mu^n}{n!} \nonumber \\
r_a := P [\hat{A} = a h \nu + c_A] &= e^{-\kappa}\frac{\kappa^a}{a!},
\end{align}
for integers $n,a \ge 0$, and where $c_{S'}$ and $c_A$ are any constants that obey
\begin{equation}
c_{S'} + c_A = m h \nu.
\end{equation}
By suitable constant shifts of the Hamiltonians $H_{S'}$ and $H_A$, that shift the zero energy level, we can ensure that $c_{S'} = m' h \nu$ for some integer $m'$, and thus $c_A = (m-m')h\nu$. This implies that both $(q_n)$ and $(r_a)$ are both Poissonian distributions on a discrete set of energy eigenvalues with constant separation $h \nu$.

Let $\Pi_n^{S'}$ and $\Pi^A_a$ denote the projectors for the discrete energy eigenvalues in the energy measurements of $S'$ and $A$ with probabilities $q_n$ and $r_a$ on $|\phi\>_{S'}$ and $|\omega\>_A$ respectively. We now define the states $|n+m'\>_{S'}$ and $|a+m-m'\>_A$ via
\begin{align}
\sqrt{q_n} |n+m'\>_{S'}& := \Pi_n^{S'} |\phi\>_{S'} \\
\sqrt{r_a} |a+ m-m'\>_A & := \Pi_a^A |\omega\>_A ,
\end{align}
with $n, a \ge 0$ and where we have ensured trivial phases via the definition of the states.
Therefore we can write the two output states as
\begin{align}
|\phi\>_{S'} &= e^{-\mu/2}\sum_{n=0}^\infty \sqrt{\frac{\mu^n}{n!}} |n+m'\>_{S'} \nonumber \\
|\omega\>_A &=  e^{-\kappa/2}\sum_{a=0}^\infty \sqrt{\frac{\kappa^a}{a!}} |a+m-m'\>_A.
\end{align}
Recall that the translated coherent states are given by
\begin{align}
D(\alpha) |0\> = |\alpha\> &= e^{-\frac{|\alpha|^2}{2}} \sum_{n=0}^\infty \frac{\alpha^n}{n!}|n\> \\
 |\alpha, m\> &= e^{-\frac{|\alpha|^2}{2}} \sum_{n=0}^\infty \frac{\alpha^n}{n!}|n+m\>.
\end{align}
and so we can define real constants $\beta_1 = \sqrt{\mu}$ and $\beta_2 = \sqrt{\kappa}$ that allow us to write $|\phi\>_{S'} = |\beta_1, m'\>_{S'}$ and $|\omega\>_A = |\beta_2, m-m'\>_A$, in terms of the notation for $\mathcal{C}$ states.
Therefore $|\phi \>_{S'} , |\omega \>_A \in \mathcal{C}$ and so $\mathcal{C}$ is closed under coherent work processes.  

(Proof of 2.) Any state in $\mathcal{C}$ can be expressed in the form \begin{equation}
|\psi \>_S = e^{-\lambda/2} \sum_{n=0}^\infty e^{i\theta_n} \sqrt{\frac{\lambda^n}{n!}} |n+m\>,
\end{equation} 
where $m$ is some non-negative integer, $\lambda \geq 0$ and $\{\theta_n \}$ are phases.
We now define a unitary of the form $V = V_S \otimes \mathds{1}_A$ where 
\begin{equation}
V_S = \sum_{n = 0}^\infty e^{-i\theta_n} |n+m\>\<n+m|  + \sum_{k = 0}^{m-1} |k\>\< k|,
\end{equation}
which realises the transformation $V[|\psi \>_S \otimes |0\>_A ] = |\sqrt{\lambda}, m\>_S \otimes |0\>_A$ with $\lambda \ge 0$. It is clear that both $V$ and its inverse $V^\dagger$ are energy conserving and so we have a reversible coherent work process with $|\omega_c\>_A = |0\>_A$ as claimed.

As constructed in the proof of part 1, we saw that modulo arbitrary phases between the energy eigenstates, any coherent work process on a state $|\alpha,m\>_S$ takes the form
\begin{equation}
|\alpha , m\>_S \stackrel{\omega}\longrightarrow  |\beta_1 , m' \>_{S'} 
\end{equation}
and with $|\omega\>_A = |\beta_2, m - m' \>_A$. With a freedom to shift the energies of $S'$ and $A$ by a constant factor $\pm c$ respectively. Moreover, since the Poisson parameters are given, as above, by $\lambda = |\alpha|^2$, $\mu= |\beta_1|^2$ and $\kappa = |\beta_2|^2$, Raikov's theorem tells us that we must have $|\alpha|^2 = |\beta_1|^2 + |\beta_2|^2$, and therefore $|\beta_2| = \sqrt{|\alpha|^2 - |\beta_1|^2}$ as claimed.

(Proof of 3.) For a classical work process under a conservative force, we need two conditions to hold: (i) the position, momentum and energy are sharp and (ii) the process depends only on the initial and final energies, not the path. For the first, consider the position and momentum operators $q = \sqrt{\frac{\hbar}{2m\omega}} (a^\dagger + a)$ and $p = i \sqrt{\frac{\hbar m \omega}{2}}(a^\dagger - a)$ with the canonical commutation relation $[q,p]= i\hbar$, and $\hbar \omega = h \nu$. It is straightforward to verify that the variances satisfy $\sigma_q^2 =  \frac{\hbar}{2 m \omega}$ and $\sigma_p^2 = \frac{\hbar m \omega}{2}$ for any coherent state. 
Then coherent states are minimal uncertainty states with Heisenberg uncertainty relation $\sigma_q^2 \sigma_p^2 = \frac{\hbar^2}{4}$. For a coherent state $|\alpha \>$, the expectation value for the position and momentum are $\< q \>_\alpha =\sqrt{ \frac{\hbar}{2m\omega}}(\alpha^* + \alpha)$ and $\< p \>_\alpha = i \sqrt{\frac{\hbar m \omega}{2}} (\alpha^* - \alpha)$. Therefore
\begin{align}
\frac{\sigma_q}{\<q \>_\alpha} &= \frac{1}{(\alpha^* + \alpha)},  \\
\frac{\sigma_p}{\<p \>_\alpha} &= \frac{i}{(\alpha - \alpha^*)}.  
\end{align}
In the limit of large displacement $|\alpha| \gg 1$, the values for $\<q\>_\alpha$ and $\< p \>_\alpha$ become arbitrarily sharp with the ratio of standard deviation to value tending to zero. Similarly, for a Harmonic oscillator Hamiltonian $H_S = \hbar \omega a^\dagger a$, the variance of energy is $\sigma_H^2 = \hbar^2 \omega^2 |\alpha|^2$ leading to the ratio of the standard deviation to the mean 
\begin{equation}
\frac{\sigma_H}{\< H_S \>_\alpha} = \frac{1}{|\alpha|},
\end{equation}
which again becomes sharp in the limit in the same manner. Thus, all energy transfers, positions and momenta become sharp in the limit of large displacements and the classical mechanics of a harmonic oscillator is recovered.

For (ii), the energy transfer depends only on the initial and final states of the system $|\alpha \>_S$ and $|\alpha'\>_{S}$. As all coherent states are included in $\mathcal{C}$, then coherent work processes on $\mathcal{C}$ states with large displacement are classical work processes under a conservative force. Furthermore, for the transition $|\alpha \>_S \stackrel{\omega}{\longrightarrow} |\alpha' \>_S$, we have that $|\alpha | \geq |\alpha'|$ and therefore $\< H_S \>_\alpha \geq \< H_S \>_{\alpha'}$ -- meaning the process on $S$ is work extracting only for the coherence component of $|\alpha,k\>$. However variations in the $k$ label are always sharp and can involve either positive or negative work and it is clear that for large displacements the fractional energy separation of levels tends to zero and one has an effective continuum of energy values that recovers the classical case.

\end{proof}

\subsection{Proof of Theorem II.6 (coherent work processes on a ladder system)}\label{Proof of ladder theorem}

\begin{theorem} 
Let $S$ be a quantum system,  with energy eigenbasis $\{ |n\>: n \in \mathbb{Z}\}$ with fixed Hamiltonian $H_S = \sum_{n \in \mathbb{Z}} n |n\>\<n|$. Given an initial quantum state $|\psi\>_S = \sum_i \sqrt{p_i} |i\>_S$, a coherent work process
\begin{equation}
|\psi\>_S \stackrel{\omega}{\longrightarrow} |\phi\>_S ,
\end{equation}
is possible with
\begin{align}
|\phi \>_S &= \sum_j e^{i\theta_j} \sqrt{q_j} |j\>_S\\
|\omega \>_A &= \sum_k e^{i\varphi_k}\sqrt{r_k} |k\>_A
\end{align}
for arbitrary phases $\{\theta_j\}$ and $\{\varphi_k\}$, with output Hamiltonian $H_S$ as above and $H_A = \sum_{n \in \mathbb{Z}} n |n\>_A\<n|$,
if and only the distribution $(p_n)$ over $\mathbb{Z}$ can be written as
\begin{align}
p_n &= \sum_{j \in \mathbb{Z}} r_{j} (\Delta^{j}\mathbf{q})_n.
\end{align}
for distributions $(q_m)$ and $(r_j)$ over $\mathbb{Z}$. 
\end{theorem}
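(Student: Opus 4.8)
The plan is to reduce both directions to a single structural fact: since the realising unitary satisfies $[U, H_S + H_A] = 0$, it is block-diagonal with respect to the \emph{total} energy $H_S \otimes \I_A + \I_S \otimes H_A$, and therefore preserves the probability distribution of total energy on any input state. Because the integer spectra of $S$ and $A$ coincide, the total-energy-$n$ eigenspace is spanned by $\{|j\>_S \otimes |n-j\>_A : j \in \mathbb{Z}\}$, and the whole argument becomes a comparison between the total-energy distribution of the input $|\psi\>_S \otimes |0\>_A$ and that of the output product state $|\phi\>_S \otimes |\omega\>_A$.

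For the forward direction I would note that $|\psi\>_S \otimes |0\>_A = \sum_n \sqrt{p_n}\,|n\>_S \otimes |0\>_A$, so the total energy is distributed as $(p_n)$. Projecting the output $|\phi\>_S \otimes |\omega\>_A = \sum_{j,k} e^{i(\theta_j + \varphi_k)}\sqrt{q_j r_k}\,|j\>_S \otimes |k\>_A$ onto the total-energy-$n$ eigenspace and taking the squared norm gives $\sum_{j+k=n} q_j r_k = \sum_j r_j q_{n-j}$, with the phases cancelling identically. Equating the two distributions yields $p_n = \sum_j r_j q_{n-j}$. Equivalently, and more transparently, this is the statement that the characteristic functions multiply, $\varphi_{S,\psi}(t) = \varphi_{S,\phi}(t)\,\varphi_{A,\omega}(t)$ (using $\varphi_{A,0}(t) \equiv 1$), which is exactly the energy-conservation identity already exploited in the uniqueness proposition of Appendix \ref{thm:uniqueness}; its inverse Fourier transform is precisely the convolution relation.

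For the converse I would construct the unitary explicitly, mirroring the isometry construction in the proof of Theorem \ref{thm:Decomposable}. Given the convolution relation, for each $n$ with $p_n \neq 0$ define
\begin{equation}
U(|n\>_S \otimes |0\>_A) := \frac{1}{\sqrt{p_n}} \sum_{j+k=n} e^{i(\theta_j + \varphi_k)} \sqrt{q_j r_k}\, |j\>_S \otimes |k\>_A. \nonumber
\end{equation}
The right-hand side is a unit vector precisely because $\sum_{j+k=n} q_j r_k = p_n$, and it lies in the same total-energy eigenspace as its preimage, so $U$ conserves energy on the support of the input. Since distinct values of $n$ label orthogonal eigenspaces, the images are mutually orthonormal and $U$ is an isometry on $\mathrm{span}\{|n\>_S \otimes |0\>_A : p_n \neq 0\}$. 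Each total-energy block being countably infinite-dimensional, I can extend $U$ block-by-block to a global unitary commuting with $H_S + H_A$, and applying it to $|\psi\>_S \otimes |0\>_A$ reproduces $|\phi\>_S \otimes |\omega\>_A$ with exactly the prescribed phases.

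The distributional and characteristic-function bookkeeping is routine; the only genuinely delicate step is the converse construction in infinite dimensions, namely verifying that the block-diagonal isometry extends to a bona fide energy-conserving unitary. This is handled exactly as in Appendix \ref{Proof Decomposable} (where each block has infinite co-dimension available for the image), so the main conceptual content is simply the identification of the convolution $\mathbf{p} = \mathbf{r} * \mathbf{q}$ with the preserved total-energy distribution.
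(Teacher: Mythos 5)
Your proposal is correct and follows essentially the same route as the paper's proof: the forward direction exploits that the energy-conserving unitary preserves the total-energy distribution (the paper phrases this via a conditional distribution $m_{j|n}$ and the projectors $\Pi_n$, which is the same computation), and the converse builds the unitary block-by-block within each total-energy eigenspace, exactly as the paper's $V = \oplus_n V_n$ construction. The only cosmetic difference is that you absorb the phases $e^{i(\theta_j+\varphi_k)}$ directly into the block maps, whereas the paper factors them out as a separate diagonal phase unitary $U^\dagger$.
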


\begin{proof}
Assume $|\psi \>_S \stackrel{\omega}{\longrightarrow} |\phi \>_S$. Then there exists a unitary $V$ such that $V |\psi \>_S \otimes |0 \>_A = |\phi \>_S \otimes |\omega \>_A$. Furthermore, the unitary satisfies the global energy constraint $[V,H_S + H_A ] = 0$, where $H_S$ and $H_A$ are the Hamiltonians of $S$ and $A$ respectively. 

We expand $|\phi\>_S$ and $|\omega\>_A$ in their energy eigenbases as
\begin{align*}
|\phi \>_S &= \sum_j e^{i\theta_j} \sqrt{q_j} |j\>_S\\
|\omega \>_A &= \sum_k e^{i\varphi_k}\sqrt{r_k} |k\>_A.
\end{align*}
If $V$ provides a valid coherent work process then so too does the unitary $UV$ where $U = e^{i B_S \theta} \otimes e^{i C_A \varphi}$ for any Hermitian $B_S$ and $C_A$ such that $[B_S,H_S ] = [C_A ,H_A] = 0$. By choosing the operators $B_S$ and $C_A$ appropriately one can freely vary the phases $\theta_j$ and $\varphi_k$. We have that $A$ has the same Hamiltonian ladder structure as $S$. 
Energy conservation implies that for any $n \in \mathbb{Z}$, the action of $V$ on the state $|n\>_S \otimes |0\>_A$ gives
\begin{equation}
V |n \>_S \otimes |0 \>_A = \sum_{j \in \mathbb{Z}} e^{i\alpha_{n,j}}\sqrt{m_{j|n}} |n-j \>_S \otimes |j \>_A.
\end{equation}
for some distribution $(m_{j|n})_j$ over $j$, and phases $(\alpha_{n,j})$. This implies that 
\begin{align}
 |\phi\>_S \otimes |\omega\>_A &= \sum_{n,j \in \mathbb{Z}} e^{i\alpha_{n,j}} \sqrt{m_{j|n} p_n} |n-j \>_S \otimes |j \>_A,  \\
      &= \sum_{k,l \in \mathbb{Z}} e^{i (\theta_k + \varphi_l)} \sqrt{q_k r_l} |k \>_S \otimes |l \>_A.
\end{align}
Let $\Pi_n$ be the projector onto the energy eigenspace of total energy $n \in \mathbb{Z}$ for the joint system $SA$. We thus have that
\begin{align}
\< \phi |_S \otimes \< \omega |_A \Pi_n  |\phi \>_S \otimes |\omega \>_A  &= \sum_{j \in \mathbb{Z}} m_{j|n} p_n,  \\
    &= \sum_{j \in \mathbb{Z}} q_{n-j} r_j,
\end{align}
However $\sum_j m_{j|n} =1$ and so we have that the distribution $(p_n)$ decomposes as
\begin{equation}\label{trans-major2}
p_n = \sum_{j \in \mathbb{Z}} r_j q_{n-j} = \sum_{j \in \mathbb{Z}} r_j (\Delta^{j} \mathbf{q})_n,
\end{equation}
as required.
Conversely, suppose we have a distribution $(p_n)$ over $\mathbb{Z}$ which can be decomposed as
\begin{equation}
p_n = \sum_{j \in \mathbb{Z}} r_j q_{n-j} = \sum_{j \in \mathbb{Z}} r_j (\Delta^{j} \mathbf{q})_n,
\end{equation}
for two distributions $(r_j)$ and $(q_k)$. From these we define for ladder systems $S$ and $A$ the states
\begin{equation}
|\psi\>_S = \sum_i \sqrt{p_i}|i\>_S, 
\end{equation}
and
\begin{align}
|\phi \>_S &= \sum_j e^{i\theta_j} \sqrt{q_j} |j\>_S\\
|\omega \>_A &= \sum_k e^{i\varphi_k}\sqrt{r_k} |k\>_A
\end{align}
where $\theta_j$ and $\varphi_k$ are arbitrary phases. The joint system $SA$ therefore has the two states
\begin{align}
|\psi\>_S \otimes |0\>_A &=  \sum_i \sqrt{p_i}|i\>_S\otimes |0\>_A \nonumber \\
 |\phi\>_S \otimes |\omega\>_A &= \sum_{k,l \in \mathbb{Z}} e^{i (\theta_k + \varphi_l)} \sqrt{q_k r_l} |k \>_S \otimes |l \>_A. \nonumber \\
  &= U^\dagger \sum_{k,l \in \mathbb{Z}} \sqrt{q_k r_l} |k \>_S \otimes |l \>_A,
\end{align}
where $U^\dagger$ is a phase unitary that generates the phases $\theta_k$, $\varphi_j$ and which is diagonal in the energy eigenbasis.
Again, we let $\Pi_n$ denote the projector onto the $n$--eigenspace of the system $SA$ and so
\begin{align}
\Pi_n  |\psi\>_S &\otimes |0\>_A = \sqrt{p_n}|i\>_S\otimes |0\>_A \nonumber \\
 \!\!\!\Pi_n U |\phi\>_S \otimes |\omega\>_A &=\!\!\! \sum_{m \in \mathbb{Z}} \sqrt{q_{n-m} r_m} |n-m \>_S |m \>_A. 
\end{align}
Thus we have that
\begin{equation}
\<\phi|_S\otimes \<\omega|_A U^\dagger \Pi_n U |\phi\>_S \otimes |\omega\>_A = \sum_{m} q_{n-m} r_m =p_n,
\end{equation}
and thus the vectors $\Pi_n |\psi\>_S \otimes |0\>_A$ and $\Pi_n U |\phi\>_S \otimes |\omega\>_A$ both lie in the $n$--eigenspace of the total hamiltonian and have equal norms. Therefore, we may define a unitary $V_n$ within each energy eigenspace such that
\begin{equation}
V_n \Pi_n |\psi\>_S \otimes |0\>_A  = \Pi_n U |\phi\>_S \otimes |\omega\>_A.
\end{equation}
For those $n$--eigenspaces outside the support of $|\psi\>_S\otimes |0\>_A$ we simply let $V_n$ be the identity unitary on that subspace. We now write $V := \oplus_n V_n$ to obtain an energy conserving unitary on the full space, and using the fact that $U$ commutes with the energy projectors we have that
\begin{equation}
U^\dagger V |\psi\>_S \otimes |0\>_A = |\phi\>_S \otimes |\omega\>_A,
\end{equation}
where $U^\dagger V$ is an energy conserving unitary on the joint system. Therefore the required coherent work process exists.

\end{proof}

\subsection{Physical example of decomposable coherence}

In the main text, an example of infinitely divisible coherence was provided in the form of a coherent state. Here we provide a physical system in which one might encounter such examples of coherent state manipulations. Consider a beam splitter set up, with a coherent state of light injected as the input on one mode and the vacuum on the other mode, such that the initial state is $|\alpha \>_S \otimes |0 \>_A$. We use the beam splitter unitary 
\begin{equation}
B(\theta) = e^{\theta (ab^\dagger - a^\dagger b)}
\end{equation}
where $a$ and $b$ are the annihilation operators for $S$ and $A$ respectively \cite{BeamSplitter}.
Under the action of a beam splitting,  which conserves photon number, we want to achieve
\begin{align}
B(\theta) |\alpha \> \otimes |0 \> = |\beta \> \otimes |\omega\>,
\end{align}
for some tunable parameter $\theta$. It is well known that the two mode annihilation operators satisfy the algebraic property:
\begin{align}
B^\dagger (\theta) a B(\theta) &= a \cos \theta + b \sin \theta   \\
B^\dagger (\theta ) b B(\theta) &= b \cos \theta - a \sin \theta.
\end{align}
Since coherent states are displaced vacuum states, we have that

\begin{align}
B(\theta) |\alpha \> \otimes |0 \> &= B(\theta) e^{-\frac{1}{2} |\alpha|^2} e^{ \alpha a^\dagger} e^{-\alpha^* a} |0 \> \otimes |0 \>  \\
      &= e^{-\frac{1}{2} |\alpha|^2} B(\theta) e^{ \alpha a^\dagger}  B^\dagger (\theta) |0 \> \otimes |0 \> \\
      &= e^{-\frac{1}{2} |\alpha|^2} e^{ \alpha B(\theta) a^\dagger B^\dagger (\theta)} |0 \> \otimes |0 \> \\
      &= e^{-\frac{1}{2} |\alpha|^2}  e^{\alpha (a^\dagger \cos \theta - b^\dagger \sin \theta)} |0 \> \otimes |0 \> \\
      &= |\alpha \cos \theta \> \otimes |-\alpha \sin \theta \>
\end{align}
where we used the fact that $B(\theta) |0 \> \otimes |0 \> = |0 \> \otimes |0 \>$ and $B(\theta)$ is unitary.
Here, $r = \cos \theta$ and $t =  \sin \theta$ are the transmission and reflection coefficients. The consequence is that we have achieved the desired coherent state splitting with a photon number conserving unitary. The second mode that began in the vacuum therefore carries information of the coherent energy transfer. As expected, the condition $|\alpha |^2 = |\alpha \cos\theta |^2 + |\alpha \sin \theta |^2$ is satisfied.
\\
If one were to perform measurements on the ancilla system, they would extract information on the energetics and coherences. However, by performing the measurement, the state would no longer be available for manipulations that involve interference effects. For this reason, the coherent work output is the state $|\alpha \sin \theta \>$ rather than a post-processed state.

\section{Fluctuation relations and the effective potential}
Here we gather together the proofs for the fluctuation relation analysis.

\subsection{Basic properties of the effective potential.}

It is worth describing some properties of the function $\Lambda(\beta,\rho)$ to shed light on the fluctuation relations.

We first note that the effective potential has an invariance under the group generated by the observable $H$. More precisely, let $U(t)$ be the unitary representation of time-translations generated by $H$. Then $\Lambda(\beta, U(t) \rho U(t)^\dagger) = \Lambda(\beta, \rho)$ for any $t \in \mathbb{R}$, and so we have that $\Lambda(\beta,\rho(t_1)) = \Lambda(\beta,\rho(t_2))$ for any two times $t_1$, and $t_2$. Thus, $\Lambda$ is a constant of motion under the free Hamiltonian evolution. For a pure quantum state $|\psi\> = \sum_k e^{i\theta_k}\sqrt{p_k} |E_k\>$, $\Lambda$ is purely a function of the classical distribution $p_k$ over the energy eigenstates. For a projective measurement of energy in this eigenbasis we can denote by $\hat{X}$ the classical random variable for the energy obtained in the measurement, then it is readily seen that $\Lambda(\beta,\psi)$ is essentially the negative cumulant generating function for $\hat{X}$ in the inverse temperature $\beta$,
\begin{equation}
\Lambda(\beta, \psi) = - \sum_{n \geq 1} \frac{(-1)^n\beta^n}{n!} \kappa_n(\hat{X}).
\end{equation}

We shall make use of the following theorem for our analysis.

\begin{theorem} \label{Theorem:Legendre}
If $A$ is Hermitian and $Y$ is strictly positive, then 
\begin{equation*}
\begin{aligned}
\ln \tr(e^{A + \ln Y}) = \max \{ &\tr(XA) - S(X||Y) \ : \  X \ \text{is positive } \\ &\& \ \tr X = 1 \}.
\end{aligned}
\end{equation*}
Alternatively, if $X$ is positive with $\Tr(X) = 1$ and $B$ is Hermitian, then
\begin{equation*}
S(X||e^{B}) = \max \{ \tr(XA) - \log \tr(e^{A+B}) \ : \ A = A^\dagger \}.
\end{equation*}
\end{theorem}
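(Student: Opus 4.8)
The plan is to derive both identities from a single source: a Gibbs variational principle that is itself a repackaging of the non-negativity of quantum relative entropy (Klein's inequality). First I would isolate the central lemma: for any Hermitian operator $C$ and any density operator $X$ (positive, unit trace), one has $\tr(XC) - \tr(X\log X) \le \log\tr(e^{C})$, with equality if and only if $X = e^{C}/\tr(e^{C})$. This follows by introducing the Gibbs state $\rho_C := e^{C}/\tr(e^{C})$ and computing $S(X\|\rho_C) = \tr(X\log X) - \tr(XC) + \log\tr(e^{C})$, where I use $\log\rho_C = C - \log\tr(e^{C})\,\I$; since $S(X\|\rho_C)\ge 0$ with equality exactly at $X=\rho_C$ (faithfulness), the claimed inequality and its equality condition drop out immediately.

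For the first identity I would set $C = A + \log Y$, which is Hermitian precisely because $Y$ is strictly positive, so that $\log Y$ is a well-defined Hermitian operator. Rewriting the left-hand side of the lemma gives $\tr(XC) - \tr(X\log X) = \tr(XA) + \tr(X\log Y) - \tr(X\log X) = \tr(XA) - S(X\|Y)$. The lemma then reads $\tr(XA) - S(X\|Y) \le \ln\tr(e^{A+\log Y})$ for every density operator $X$, with equality exactly at $X = e^{A+\log Y}/\tr(e^{A+\log Y})$. Since this optimizer is a genuine, strictly positive density operator, the supremum is attained and equals $\ln\tr(e^{A+\log Y})$, which is the first formula.

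For the second identity I would fix the density operator $X$ and the Hermitian $B$ and apply the same lemma with $C = A + B$. This yields, for every Hermitian $A$, the inequality $\tr(X(A+B)) - \tr(X\log X) \le \log\tr(e^{A+B})$, equivalently $\tr(XA) - \log\tr(e^{A+B}) \le \tr(X\log X) - \tr(XB) = S(X\|e^{B})$, using $\log e^{B} = B$. Thus $S(X\|e^{B})$ upper-bounds the quantity being maximized. To show the bound is saturated I would exhibit the optimizer $A^\star = \log X - B$: then $A^\star + B = \log X$, so $e^{A^\star+B} = X$ has unit trace, $\log\tr(e^{A^\star+B}) = 0$, and $\tr(XA^\star) = \tr(X\log X) - \tr(XB) = S(X\|e^{B})$, achieving equality. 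Equivalently one may note that $A\mapsto \log\tr(e^{A+B})$ is convex, hence $h(A) := \tr(XA) - \log\tr(e^{A+B})$ is concave, and verify the stationarity condition $X = e^{A+B}/\tr(e^{A+B})$ via $\tfrac{d}{dt}\log\tr(e^{M(t)}) = \tr(e^{M}\dot M)/\tr(e^{M})$.

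I expect the only genuine technical point to be attainment of the maximum in the second identity when $X$ is rank-deficient: the optimizer $A^\star = \log X - B$ is ill-defined on $\ker X$, and the supremum is then only approached as the corresponding eigenvalues of $A$ tend to $-\infty$. Since in Lemma \ref{free-energy-like} the theorem is applied to the dephased state $\D(\rho)$, I would either restrict to strictly positive $X$, where $\log X$ exists and the maximum is genuinely attained, or remark that the identity persists as a supremum in general by a limiting argument, approximating $X$ by full-rank states $X_\epsilon = (1-\epsilon)X + \epsilon\,\I/d$ (with $d$ the Hilbert space dimension) and invoking continuity of both sides. Apart from this, the only care required is the standing strict positivity of $Y$ (resp. $e^{B}$), which is exactly what guarantees that the relative entropies are finite and the logarithms well-defined.
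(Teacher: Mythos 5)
Your proof is correct, but it cannot be compared against a proof in the paper because the paper offers none: Theorem \ref{Theorem:Legendre} is stated and then dispatched with a citation to Petz, so your argument is a genuine addition rather than a rederivation of the authors' route. What you have written is the standard finite-dimensional proof of this Legendre duality: Klein's inequality $S(X\|\rho_C)\ge 0$ applied to the Gibbs state $\rho_C = e^{C}/\tr(e^{C})$ yields the variational lemma $\tr(XC)-\tr(X\log X)\le \log\tr(e^{C})$ with equality exactly at $X=\rho_C$, and the two identities follow by the substitutions $C = A+\log Y$ (maximizing over states $X$) and $C = A+B$ (maximizing over Hermitian $A$, with optimizer $A^{\star}=\log X - B$). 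Both specializations are computed correctly, including the identification $\tr(X\log Y)-\tr(X\log X) = -S(X\|Y)$ and the verification that $e^{A^{\star}+B}=X$ makes the log-partition term vanish. Compared with deferring to Petz (whose original argument is set in the von Neumann algebra framework and uses relative modular theory), your proof buys self-containedness, an explicit characterization of the maximizers, and one refinement the paper's bare statement glosses over: when $X$ is singular the second ``max'' is really a supremum, attained only in the limit of eigenvalues of $A$ tending to $-\infty$ on $\ker X$, since equality in Klein's inequality would force $X = e^{A+B}/\tr(e^{A+B})$ to be full rank. Your handling of that point (restrict to full-rank $X$, or approximate and invoke continuity of both sides, which is unproblematic here because $e^{B}$ is full rank) is adequate; it is also harmless for the paper's application, since Lemma \ref{free-energy-like} explicitly assumes full-rank states.
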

The proof of this can be found in \cite{Petz1994}.

We now establish basic properties of the effective potential $\Lambda(\beta, \rho)$. As a function of the quantum state $\rho$, the effective potential has the following properties.
\begin{lemma} \label{lemma:EffectivePotential2} Given a quantum system $S$ with Hamiltonian $H$ , and inverse temperature $\beta >0$. We have that
\begin{enumerate}
\item (Invariance) Let $\rho(t) = e^{-iHt} \rho e^{+iHt}$ for all $t \in \mathbb{R}$. For any state $\rho$ and any $t\in \mathbb{R}$ we have $\Lambda(\beta, \rho(t)) = \Lambda (\beta, \rho(0))$. Moreover, we have that $\Lambda(\beta, \rho) = \Lambda(\beta, \D(\rho))$ where $\D(\rho) = \sum_k \Pi_k \rho \Pi_k$ is the dephasing of $\rho$ in the energy basis, with $\Pi_k$ being the projector onto the $k$'th energy eigenspace of $H$.
\item  For any state $\rho$, we have that
\begin{equation}
 \hspace{-0.5cm}\beta \< H \>_\rho-\frac{\beta^2}{8}(E^{\rm max} - E^{\rm min})^2 \le \Lambda(\beta, \rho) \le \beta \<H\>_\rho.
\end{equation}
\item Let  $E^{\rm min}$ and $E^{\rm max}$ be the smallest and largest eigenvalues of $H$ respectively in the support of $\rho$. Then
\begin{equation}
\beta E^{\rm min} \le \Lambda(\beta,\rho) \le \beta E^{\rm max}.
\end{equation}
\item (Variational form) For states $\rho$ of full rank and $[\rho,H]=0$, we have that
\begin{equation}
\Lambda(\beta,\rho) = \min_\sigma \left [ \beta \<H\>_\sigma + S(\sigma||\rho) \right ],
\end{equation}
where the minimization is taken over all quantum states $\sigma$ of the system $S$.
\item (Additivity) For a bipartite system $S_1S_2$ with total Hamiltonian $H_{12} = H_1\otimes \I_2 + \I_1 \otimes H_2$ and a bipartite product state $\rho_{12} = \rho_1 \otimes \rho_2$ we have $\Lambda(\beta,\rho_1\otimes \rho_2) = \Lambda_1(\beta,\rho_1)+ \Lambda_2(\beta,\rho_2)$, where the effective potentials for subsystems are defined solely with respect to the corresponding Hamiltonian on that system.
\end{enumerate}
\end{lemma}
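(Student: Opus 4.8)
The plan is to dispose of the five properties in order of increasing difficulty, using throughout the observation that $\tr(e^{-\beta H}\rho) = \sum_k e^{-\beta E_k} p_k$ with $p_k = \tr(\Pi_k\rho)$, so that $\Lambda(\beta,\rho)$ is nothing but the negated logarithm of the Laplace transform of the energy random variable $\hat{X}$ with distribution $(p_k)$. Properties 1, 3, and 5 are then immediate, Property 2 is a standard pair of moment inequalities, and Property 4 is the only substantive step, resting on Theorem~\ref{Theorem:Legendre}.

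For Property 1, I would note that $e^{-\beta H}$ is a function of $H$ and hence commutes with $U(t)=e^{-iHt}$; cyclicity of the trace then gives $\tr(e^{-\beta H}\,U(t)\rho U(t)^\dagger) = \tr(e^{-\beta H}\rho)$, so $\Lambda(\beta,\rho(t)) = \Lambda(\beta,\rho(0))$. Dephasing invariance follows because $\tr(e^{-\beta H}\D(\rho)) = \sum_k e^{-\beta E_k}\tr(\Pi_k\rho) = \tr(e^{-\beta H}\rho)$, using $e^{-\beta H}=\sum_k e^{-\beta E_k}\Pi_k$ together with $\Pi_j\Pi_k=\delta_{jk}\Pi_k$. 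Property 3 is obtained by bounding each Gibbs factor: on the support of $\rho$ we have $E^{\rm min}\le E_k\le E^{\rm max}$, so for $\beta>0$, $e^{-\beta E^{\rm max}}\le e^{-\beta E_k}\le e^{-\beta E^{\rm min}}$; averaging against $(p_k)$ and applying $-\log$ yields $\beta E^{\rm min}\le \Lambda(\beta,\rho)\le \beta E^{\rm max}$. Property 5 uses that $H_1\otimes\I$ and $\I\otimes H_2$ commute, so $e^{-\beta H_{12}}=e^{-\beta H_1}\otimes e^{-\beta H_2}$; the trace of $e^{-\beta H_{12}}(\rho_1\otimes\rho_2)$ factorizes as $\tr(e^{-\beta H_1}\rho_1)\,\tr(e^{-\beta H_2}\rho_2)$, and $-\log$ converts the product into the claimed sum.

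For Property 2, I would obtain the upper bound from Jensen's inequality applied to the convex function $-\log$: since $\Lambda(\beta,\rho) = -\log\mathbb{E}[e^{-\beta\hat{X}}]$, we have $-\log\mathbb{E}[e^{-\beta\hat{X}}]\le \mathbb{E}[-\log e^{-\beta\hat{X}}] = \beta\<H\>_\rho$. The lower bound is Hoeffding's lemma for the random variable $\hat{X}$ supported on $[E^{\rm min},E^{\rm max}]$, which bounds the cumulant generating function by $\log\mathbb{E}[e^{-\beta\hat{X}}]\le -\beta\<H\>_\rho + \tfrac{1}{8}\beta^2(E^{\rm max}-E^{\rm min})^2$; negating this gives $\Lambda(\beta,\rho)\ge \beta\<H\>_\rho - \tfrac{\beta^2}{8}(E^{\rm max}-E^{\rm min})^2$.

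The genuinely substantive step, and the one I expect to require the most care, is Property 4. Here I would first use that $[\rho,H]=0$ and that $\rho$ has full rank to write $e^{-\beta H}\rho = e^{-\beta H + \ln\rho}$: full rank guarantees $\ln\rho$ is a well-defined Hermitian operator, and commutativity is precisely what allows the two exponentials to be combined. Applying Theorem~\ref{Theorem:Legendre} with $A=-\beta H$ and $Y=\rho$ (strictly positive by full rank) then gives $\log\tr(e^{-\beta H + \ln\rho}) = \max_\sigma\{-\beta\<H\>_\sigma - S(\sigma\|\rho)\}$, the maximum ranging over density operators $\sigma$. Negating turns the maximum into a minimum and yields $\Lambda(\beta,\rho) = \min_\sigma\{\beta\<H\>_\sigma + S(\sigma\|\rho)\}$, as claimed. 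The main obstacle is entirely in verifying that the hypotheses of the variational theorem hold, so I would flag explicitly that both the full-rank and the commutativity assumptions are essential — the former so that $\rho$ is strictly positive and $\ln\rho$ exists, the latter so that $e^{-\beta H}\rho$ equals the single exponential $e^{-\beta H+\ln\rho}$ to which Theorem~\ref{Theorem:Legendre} applies.
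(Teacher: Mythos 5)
Your proposal is correct and follows essentially the same route as the paper's own proof: cyclicity of the trace and projector manipulation for item 1, Jensen's inequality plus Hoeffding's lemma for item 2, elementary bounding of the Gibbs factors for item 3, Theorem~\ref{Theorem:Legendre} with $A=-\beta H$ and $Y=\rho$ for item 4, and factorization of $e^{-\beta H_{12}}$ for item 5. If anything, your handling of item 4 is more explicit than the paper's, since you spell out that $[\rho,H]=0$ is precisely what justifies the identification $e^{-\beta H}\rho = e^{-\beta H+\ln\rho}$ needed before the variational theorem can be applied, a step the paper leaves implicit.
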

\begin{proof}
1. For any $t \in \mathbb{R}$ we have $\Lambda(\beta,\rho(t)) = - \ln \tr(e^{-\beta H} e^{-iHt} \rho e^{+iHt})  = \Lambda (\beta,\rho)$, by cyclicity of the trace. Moreover, we can evaluate the trace in an energy eigenbasis of $H$, and so denoting by $\Pi_k$ the projector onto the $k$'th eigenspace of $H$ we have that
\begin{align}
-\ln \tr ( e^{-\beta H} \rho) &= -\ln \tr ( \sum_k \Pi_k \Pi_k (e^{-\beta H} \rho) ) \nonumber \\
&= -\ln \tr ( \sum_k  \Pi_k (e^{-\beta H} \rho) \Pi_k) \nonumber \\
&= -\ln \tr ( \sum_k  e^{-\beta H} \Pi_k \rho \Pi_k) \nonumber \\
&= -\ln\tr (e^{-\beta H}\D(\rho)) = \Lambda(\beta, \D(\rho)).
\end{align}

2. The proof of the upper bound is a direct result of Jensen's inequality, $\< e^{-\beta  H } \>_\rho \geq  e^{- \beta \< H \>_\rho} $, which implies
\begin{equation}
\Lambda (\beta, \rho)  \leq - \ln e^{-\beta \< H \>_\rho} = \beta  \< H \>_\rho,
\end{equation}
achieving the desired result.

 The lower bound is obtained as follows. Given a state $\rho$, let us define the zero mean observable $H_0 = H - \< H \>_\rho \mathds{1}$. It is readily seen that the effective potential satisfies $\Lambda (\beta , \rho) = \Lambda_0 (\beta, \rho) + \beta \< H \>_\rho$, where $\Lambda_0(\beta , \rho)$ is evaluated for the Hamiltonian $H_0$. Let us denote the random variable $\hat{X}_0$ with sample space $\text{eigs}(H_0)$, distributed by $\rho$. Then $\hat{X}_0$ has zero mean and $E^{\rm min} \leq \hat{X}_0 \leq E^{\rm max}$ almost surely. Thus we can apply Hoeffding's lemma \cite{HoeffdingLemma}, which states 
\begin{equation}
\mathbb{E} [e^{\lambda \hat{X}_0} ] \leq e^{\frac{\lambda^2}{8} (E^{\rm max} - E^{\rm min})^2}
\end{equation}
for any $\lambda \in \mathbb{R}$. Taking the logarithm and re-ordering the inequality, $\Lambda_0 (\beta , \rho) \geq -\frac{\beta^2}{8}(E^{\rm max} - E^{\rm min})^2$, which gives the lower bound. 

 3. From 2, we know that $\Lambda (\beta, \rho) \leq \beta  \< H \>_\rho  \leq \beta E^{\rm max}$ as the expectation value cannot exceed the  greatest eigenvalue. To prove the lower bound, consider a state $\rho$ with $\text{diag} (\rho) = \p$ and assume $E_k = E^{\rm min} \implies p_k \neq 0$  without loss of generality. Then
\begin{align}
\Lambda (\beta , \rho) &= - \ln \big( e^{-\beta E^{\rm min} } [ p_k + \sum_{j \neq k} p_j e^{-\beta(E_j - E^{\rm min} )} ] \big)    \\
       &= \beta E^{\rm min} - \ln \big( p_k  + \sum_{j \neq k} p_j e^{-\beta(E_j - E^{\rm min} )}\big)
\end{align} 
By assumption, $E^{\rm min }$ is the smallest eigenvalue of $H$ in the support of $\rho$ and therefore $e^{-\beta(E_n - E^{\rm min})} \leq 1$ for all $n$. From the property $\sum_i p_i = 1$,  the argument of the logarithm is smaller than one and therefore the latter term is positive. It therefore follows that $\beta E^{\rm min} \leq \Lambda (\beta , \rho) \leq \beta E^{\rm max}$. \\

\indent 4. With an appropriate choice of operators $A = - \beta H$ and $Y = \rho$, Theorem \ref{Theorem:Legendre} implies $\Lambda (\beta , \rho) = - \max_\sigma \{ -\beta \< H \>_\sigma - S(\sigma ||\rho) \} = \min_\sigma \{ \beta  \< H \>_\sigma + S(\sigma || \rho ) \}$ for states $\sigma$. However as $\ln \rho$ will diverge for non-full rank states, and so the assumption that $\rho$ must be full rank is required. \\
\indent 5. Since $H_{12} = H_1 \otimes \I_2 + \I_1 \otimes H_2$ we obtain 
\begin{align}
\Lambda (\beta, \rho_1 \otimes \rho_2) &= - \ln \tr(e^{-\beta  (H_1 \otimes \mathds{1} + \mathds{1} \otimes H_2)} \rho_1 \otimes \rho_2)  \\
     &= - \ln \Tr (e^{-\beta  H_1}\rho_1) \tr(e^{-\beta H_2 } \rho_2 )  \\
     &= \Lambda_1 (\beta ,\rho_1 ) + \Lambda_2 (\beta , \rho_2).
\end{align}
\end{proof}

The effective potential has the following properties in terms of the parameter $\beta$.
\begin{lemma} \label{lemma:EffectivePotential1} Given a quantum system $S$ with Hamiltonian $H$. For any state $\rho$ of $S$, the effective potential $\Lambda(\beta,\rho)$ obeys 
\begin{enumerate}
\item $\Lambda(\beta,\rho)$ is concave in $\beta$ for all $ \beta \ge 0$. 
\item $\Lambda(\beta,\rho)$ is monotone increasing in the variable $\beta$, for all states $\rho$, if and only if $H\ge 0$.
\item (High temperature regime) $\lim_{s\rightarrow 0} s^{-1} \Lambda(s \beta,\rho) = \beta \<H \>_\rho$.
\item  (Low temperature regime) $\lim_{s\rightarrow \infty} s^{-1} \Lambda(s \beta,\rho) = \beta E^{\rm  min}$, where $E^{\rm \tiny min}$ is the smallest eigenvalue of $H$ in the support of $\rho$.
\end{enumerate}
\end{lemma}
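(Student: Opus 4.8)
The plan is to collapse the whole statement to the single scalar function $Z(\beta):=\tr(e^{-\beta H}\rho)=\sum_k p_k e^{-\beta E_k}$, where $(p_k)$ is the energy distribution of $\rho$ and $E_k$ are the eigenvalues of $H$; by part 1 of Lemma~\ref{lemma:EffectivePotential2} only this classical distribution enters, since $\Lambda(\beta,\rho)=\Lambda(\beta,\D(\rho))$. Then $\Lambda(\beta,\rho)=-\log Z(\beta)$, and each of the four claims becomes an elementary statement about this one-variable function and its derivatives.

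For part 1 I would simply invoke the second derivative already computed in the main text, $\partial_\beta^2\Lambda(\beta,\rho)=-\kappa_2(\tilde{\p})$, the negative variance of energy under the Gibbs-rescaled distribution $\tilde{\p}$; since a variance is non-negative this is $\le 0$ for all $\beta\ge 0$, giving concavity. (Equivalently, $-\Lambda(\beta)=\log Z(\beta)$ is a cumulant generating function, hence convex.) For part 2 the relevant object is the first derivative $\partial_\beta\Lambda=\langle H\rangle_{\tilde{\p}}=\sum_k E_k\tilde p_k$. If $H\ge 0$ then every $E_k\ge 0$, so this weighted average is non-negative for every state and every $\beta\ge 0$, which is the ``if'' direction. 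For the converse I would argue by contraposition: if $H$ has a negative eigenvalue $E_j<0$, then choosing $\rho=|E_j\rangle\langle E_j|$ gives $\Lambda(\beta,\rho)=\beta E_j$, which is strictly decreasing, so monotonicity fails for at least one state.

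Parts 3 and 4 are the two limits of $s^{-1}\Lambda(s\beta,\rho)$. For part 3, using $\Lambda(0,\rho)=0$ together with differentiability at the origin and $\partial_\beta\Lambda|_{\beta=0}=\langle H\rangle_\rho$ (both recorded earlier), the limit $s\to 0$ is just $\beta$ times this derivative; alternatively one expands $Z(s\beta)=1-s\beta\langle H\rangle_\rho+\mathcal{O}(s^2)$ and takes logarithms. For part 4 I would factor out the dominant exponential: writing $E^{\rm min}$ for the least eigenvalue in the support of $\rho$ and $p_{\rm min}>0$ for its total weight,
\begin{equation}
Z(s\beta)=e^{-s\beta E^{\rm min}}\Bigl(p_{\rm min}+\!\!\sum_{E_k>E^{\rm min}}\!\! p_k\,e^{-s\beta(E_k-E^{\rm min})}\Bigr),
\end{equation}
so that $s^{-1}\Lambda(s\beta,\rho)=\beta E^{\rm min}-s^{-1}\log\bigl(p_{\rm min}+o(1)\bigr)\to\beta E^{\rm min}$ as $s\to\infty$. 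This mirrors the factoring already used in the proof of part 3 of Lemma~\ref{lemma:EffectivePotential2}.

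The step I expect to need the most care is the low-temperature limit in part 4 when the spectrum is infinite and unbounded above: the claim that the residual sum $\sum_{E_k>E^{\rm min}}p_k e^{-s\beta(E_k-E^{\rm min})}$ tends to $0$ requires justifying an interchange of limit and infinite summation, which I would secure by dominated convergence under the standing assumption that the partition function (equivalently, all cumulants) is finite for the relevant $\beta$. The remaining manipulations are elementary.
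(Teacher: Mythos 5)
Your proposal is correct, and for parts 2--4 it follows essentially the same route as the paper: the derivative formula plus sharp (eigenstate) distributions for the equivalence with $H\ge 0$, the Taylor expansion of $\tr(e^{-s\beta H}\rho)$ for the high-temperature limit, and factoring out $e^{-s\beta E^{\rm min}}$ for the low-temperature limit. The genuine difference is part 1. You prove concavity by citing the second-derivative computation from the main text, $\partial_\beta^2\Lambda(\beta,\rho)=-\kappa_2(\tilde{\p})\le 0$ (equivalently, $\log Z$ is a cumulant generating function, hence convex), whereas the paper's appendix proof avoids derivatives altogether: it writes $e^{-(a\beta_1+b\beta_2)H}=XY$ with $X=e^{-a\beta_1 H}$, $Y=e^{-b\beta_2 H}$ and applies H\"older's inequality with exponents $p=1/a$, $q=1/b$ to get $\Lambda(a\beta_1+b\beta_2,\rho)\ge a\Lambda(\beta_1,\rho)+b\Lambda(\beta_2,\rho)$ directly. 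The trade-off: your argument is shorter and reuses structure already established (the variance of the Gibbs-rescaled distribution), but it inherits the main text's standing assumption that $\Lambda$ is twice differentiable in $\beta$ — automatic in finite dimensions, but an extra hypothesis for unbounded spectra; the H\"older argument needs no regularity at all and yields concavity wherever the traces are finite. Your treatment of part 4 is actually slightly more careful than the paper's: the paper asserts the residual sum $\sum_{i\neq k}e^{-s\beta(E_i-E^{\rm min})}\rho_i$ decreases to zero without comment, while you flag the limit--sum interchange and justify it by dominated convergence (each term is dominated by $p_k$, summable); note that even boundedness of the bracket between $p_{\rm min}$ and $1$ would suffice, since it is then killed by the $s^{-1}$ prefactor.
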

\begin{proof}
1. Let $a,b > 0$ satisfy $a + b =1$. For two different inverse temperatures $\beta_1$ and $\beta_2$ and Hamiltonian $H$, we have:
\begin{align}
\Lambda (a \beta_1 + b \beta_2 , \rho) &= - \ln \tr (\rho e^{-a \beta_1 H} e^{- b \beta_2 H } ).
\end{align}  
Let us define the operators $X = e^{- a \beta_1 H}$ and $Y  = e^{- b \beta_2 H}$, where $X,Y \geq 0$. Then H\"older's inequality tells us expectation values satisfy $\mathbb{E} [|XY|] \leq (\mathbb{E} [|X|^p])^{1/p} (\mathbb{E} [|Y|^q ])^{1/q}$, where $\frac{1}{p} + \frac{1}{q} = 1$. Choosing $p = \frac{1}{a}$ and $q = \frac{1}{b}$, we get:
\begin{align}
\Lambda (a \beta_1 + b \beta_2 , \rho)  \geq a \Lambda (\beta_1 ,\rho) + b \Lambda (\beta_2 , \rho).
\end{align}
Therefore $\Lambda$ is concave in the inverse temperature.
\\
\indent 2.  First suppose that $H \geq 0$, then 
\begin{equation}
\frac{\partial \Lambda (\beta , \rho)}{ \partial \beta} = \frac{\tr (H e^{-\beta H}\rho) }{\tr (e^{-\beta H}\rho)} 
\end{equation}
For $\beta \in \mathbb{R}$, the denominator is always strictly positive. The nominator can be expanded using the spectral decomposition $H = \sum_k E_k \Pi_k$ and so
\begin{equation}
\tr (H e^{-\beta H}\rho) = \sum_k E_k e^{-\beta E_k} \tr (\Pi_k \rho \Pi_k).
\end{equation}
Since $H \ge 0$ all terms in this expression are non-negative for all $\beta$ and thus $\partial_\beta (\Lambda(\beta, \rho) \ge 0$ which implies the function is monotonely increasing in $\beta$.

Conversely, assume $\partial_\beta \Lambda(\beta , \rho)  \geq 0$ for all states $\rho$. This implies that $\tr(H e^{-\beta H }\rho) \ge 0 $ for all states $\rho$. Again, using the spectral decomposition of $H$ and the fact that we have that
\begin{equation}
 \sum_k E_k e^{-\beta E_k}p_k \ge 0
\end{equation}
for all distributions $p_k$ over the eigenvalues $E_k$ of $H$. For $\p=(p_k)$ sharp on an eigenvalue $E_m$ the above equation implies that $E_m \ge 0$, and since this holds for all $m$ this in turn implies that $H \ge 0$ as required. 
 \\
\indent 3. To prove this, we Taylor expand in orders of $s$. We find:
\begin{align}
\lim_{s \to 0} s^{-1} \Lambda ( s \beta, \rho) &= \lim_{s \to 0} (-s^{-1}) \ln \< e^{-s \beta H } \>_\rho \\
     &= \lim_{s \to 0} (-s^{-1}) \ln [1 - s \beta \< H \>_\rho + \O(s^2) ] \\
     &= \lim_{s \to 0} (-s^{-1}) [-s \beta  \< H \>_\rho + \mathcal{O} (s^2)] \\
     &= \beta  \<H \>_\rho
\end{align}
as claimed. \\
\indent 4. Let $E^{\rm min}=E_k$ be the smallest energy eigenvalue in the support of $\rho$, and let $\rho_{i} := \tr (\Pi_i \rho)$ for any $i$. Then:
\begin{align}
\lim_{s \to \infty} s^{-1} \Lambda (s \beta , \rho ) = \lim_{s \to \infty} (-s^{-1} ) \ln \left( \sum_i e^{- s \beta E_i } \rho_i \right) \\
   =  \lim_{s \to \infty} (-s^{-1}) \ln \big( e^{-s \beta E^{\rm min}} [\rho_k  + \sum_{i \neq k} e^{-s \beta (E_i - E^{\rm min})} \rho_i ] \big) \nonumber \\
   = \lim_{s \to \infty} (-s^{-1}) [ -s \beta E^{\rm min} + \ln (\rho_k  + \sum_{i \neq k} e^{-s \beta (E_i - E^{\rm min})} \rho_i)]\nonumber \\
    =   \beta E^{\rm min} +\lim_{s \to \infty} (-s^{-1}) \ln (\rho_k  + \sum_{i \neq k} e^{-s \beta (E_i - E^{\rm min})} \rho_i).
\end{align}
Now since $E_i -E_k>0$ we have that $e^{-s \beta( E_i - E_{\rm min})} <1$, for all $i\ne k$ and for all $s >0, \beta >0$ and this term decreases to zero as $s \rightarrow \infty$. Moreover $\rho_k \ne 0$ by assumption, and therefore 
\begin{align}
&\lim_{s \to \infty} (-s^{-1}) \ln (\rho_k  + \sum_{i \neq k} e^{-s \beta (E_i - E^{\rm min})} \rho_i)\nonumber \\
&= \lim_{s \to \infty} (-s^{-1}) \ln (\rho_k) =0.
\end{align}
which implies that $\lim_{s \to \infty} s^{-1} \Lambda (s \beta , \rho ) = \beta E^{\rm min}$ as required.

\end{proof}

\newpage

\subsection{Mean coherence representation}

As in the main text, we have that $\chi_m$, the \emph{mean coherence at inverse temperature} $\beta$, is given by the equation
\begin{equation}\label{mean-coh}
\Lambda(\beta,\rho) = \beta \<H\>_\rho - \frac{1}{8\pi} \beta^2 \chi_m(\tilde{\rho}),
\end{equation}
where $\chi_m$ is evaluated at the inverse temperature $\beta_m \le \beta$ determined from the Mean Value Theorem.

\begin{theorem}\label{thm:mean-coherence} Given a quantum system $S$ with Hamiltonian $H_S$ with spectral decomposition $H_S = \sum_k E_k \Pi_k$, the mean coherence at inverse temperature $\beta$ of a quantum state $|\psi\>$ is given by
\begin{equation}
\frac{\beta^2}{8 \pi}\chi_m (\tilde{\psi}) =  \beta(\<H_S\>_{\p} - \<H_S\>_{\tilde{\p}}) - S(\tilde{\p} ||\p).
\end{equation}
where $\p$ is the distribution $p_k := \tr[\Pi_k \psi]$ over energy of $\psi$ and $\tilde{\p}$ is given by $ \tilde{p}_k:=\tr [ \Pi_k \tilde{\psi}]$, the distribution over energy for $|\tilde{\psi}\>\<\tilde{\psi}| = \Gamma(|\psi\>\<\psi|)$.
\end{theorem}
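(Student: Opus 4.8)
The plan is to derive the identity by equating two independent expressions for the effective potential $\Lambda(\beta,\psi)$ evaluated on $\rho=|\psi\>\<\psi|$: the one that \emph{defines} the mean coherence through the second-order Mean Value Theorem, and the variational relative-entropy form supplied by Lemma~\ref{free-energy-like}. No new analytic machinery is needed; the argument is purely a matter of combining these two and reducing the resulting quantum objects to classical distributions.

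First I would recall the defining relation for $\chi_m$ established earlier, namely
\begin{equation}
\Lambda(\beta,\rho) = \beta\langle H_S\rangle_\rho - \frac{\beta^2}{8\pi}\chi_m(\tilde\rho),\nonumber
\end{equation}
and note that the first moment $\langle H_S\rangle_\rho$ depends only on the energy distribution $\mathbf{p}=(p_k)$ of $\rho$, so that $\langle H_S\rangle_\rho = \langle H_S\rangle_{\mathbf{p}}$. Next I would invoke Lemma~\ref{free-energy-like}, which states $\Lambda(\beta,\rho) = \min_\sigma\{\beta\langle H_S\rangle_\sigma + S(\sigma||\D(\rho))\}$ with the minimum attained at $\sigma^\star = \Gamma(\D(\rho))$. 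The central observation is that both $\D(\rho)$ and $\sigma^\star$ are diagonal in the energy eigenbasis: a direct evaluation of $\Gamma$ on the diagonal operator $\D(\rho)$ shows that its diagonal entries are exactly the Gibbs-rescaled probabilities $\tilde p_k = e^{-\beta E_k} p_k / \sum_j e^{-\beta E_j} p_j$. Hence $\langle H_S\rangle_{\sigma^\star} = \sum_k E_k \tilde p_k = \langle H_S\rangle_{\tilde{\mathbf{p}}}$, and since $\sigma^\star$ commutes with $\D(\rho)$ the quantum relative entropy collapses to the classical one, $S(\sigma^\star||\D(\rho)) = S(\tilde{\mathbf{p}}||\mathbf{p})$ (using the dephasing invariance $\Lambda(\beta,\rho)=\Lambda(\beta,\D(\rho))$ from part~1 of Lemma~\ref{lemma:EffectivePotential2} to replace $\rho$ by $\D(\rho)$ without loss). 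This gives the closed classical form
\begin{equation}
\Lambda(\beta,\rho) = \beta\langle H_S\rangle_{\tilde{\mathbf{p}}} + S(\tilde{\mathbf{p}}||\mathbf{p}).\nonumber
\end{equation}

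Finally, I would equate the two expressions for $\Lambda(\beta,\rho)$ and solve for the coherence term, obtaining
\begin{equation}
\frac{\beta^2}{8\pi}\chi_m(\tilde\psi) = \beta\bigl(\langle H_S\rangle_{\mathbf{p}} - \langle H_S\rangle_{\tilde{\mathbf{p}}}\bigr) - S(\tilde{\mathbf{p}}||\mathbf{p}),\nonumber
\end{equation}
which is precisely the claim. The main obstacle — really the only step requiring care — is justifying the reduction from quantum to classical quantities: verifying that the variational minimiser $\Gamma(\D(\rho))$ is genuinely the Gibbs rescaling of $\mathbf{p}$ and that, because it is simultaneously diagonal with $\D(\rho)$, the relative-entropy term becomes the purely classical $S(\tilde{\mathbf{p}}||\mathbf{p})$ depending only on $\mathbf{p}$ and $\tilde{\mathbf{p}}$. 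Everything else (attainment of the minimum, positivity, and the second-order differentiability underlying the definition of $\chi_m$) has already been discharged in the preceding lemmas, so the remaining computation is routine bookkeeping.
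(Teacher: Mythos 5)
Your derivation is correct, and it follows the route the paper itself sketches in the main text just before the theorem; however, it is not the paper's formal proof. The paper's appendix proof never invokes the variational Lemma~\ref{free-energy-like}: setting $\sigma = \D(\rho)$, it expands the von Neumann entropy $S(\tilde{\sigma}) = -\tr[\tilde{\sigma}\ln\tilde{\sigma}]$ using the explicit form $\tilde{\sigma} = e^{-\beta H/2}\sigma e^{-\beta H/2}/\tr(e^{-\beta H}\sigma)$ and reads off the identity $\Lambda(\beta,\sigma) = \beta\<H\>_{\tilde{\sigma}} + S(\tilde{\sigma}||\sigma)$ directly; the ingredients you share with it are the dephasing invariance $\Lambda(\beta,\rho)=\Lambda(\beta,\D(\rho))$ and the commutation $\D(\Gamma(\rho))=\Gamma(\D(\rho))$ that collapses the quantum relative entropy to the classical one. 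The point that needs attention in your route is a latent circularity: in the paper, the attainment clause of Lemma~\ref{free-energy-like} (that the minimiser is $\Gamma(\D(\rho))$) is itself proved by comparing the Petz variational formula of Lemma~\ref{lemma:EffectivePotential2} against precisely the identity $\Lambda(\beta,\sigma) = \beta\<H\>_{\tilde{\sigma}} + S(\tilde{\sigma}||\sigma)$, i.e.\ against the content of the theorem you are proving. Quoting that clause as a black box therefore makes the combined logic circular if the paper's proofs are the only ones available. The repair is cheap and lies inside the step you yourself single out as requiring care: do not merely identify the diagonal entries of $\Gamma(\D(\rho))$ as $\tilde{p}_k = e^{-\beta E_k}p_k/\sum_j e^{-\beta E_j}p_j$, but evaluate the variational objective at that state,
\begin{align}
\beta\<H_S\>_{\tilde{\p}} + S(\tilde{\p}||\p) &= \sum_k \tilde{p}_k\left(\beta E_k + \ln\tilde{p}_k - \ln p_k\right) \nonumber \\
&= -\ln\sum_j e^{-\beta E_j}p_j = \Lambda(\beta,\rho),
\end{align}
which is all your argument actually needs: attainment of the minimum then becomes a corollary rather than an input, and the variational characterisation itself is dispensable. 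This substitution also sidesteps a rank issue — the appendix version of Lemma~\ref{free-energy-like} assumes $\rho$ full rank, whereas $\D(\psi)$ for a pure state generally is not — and once it is made, your proof is complete and coincides in substance with the paper's computation.
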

\begin{proof}
From Lemma \ref{lemma:EffectivePotential2} we have that $\Lambda(\beta, \rho) = \Lambda(\beta, \D(\rho))$ where $\D(\rho)$ is the dephased state across the energy eigenspaces, and so because of this it will suffice to restrict to this dephased state, as we now show.

Let $\sigma = \D(\rho)$ for compactness. Noting that $[\sigma, H]=0$ we now expand $S(\tilde{\sigma})$ as
\begin{align}
S(\tilde{\sigma}) &= - \tr [ \tilde{\sigma} \ln \tilde{\sigma}] \nonumber \\
&= - \tr \left [ \tilde{\sigma} \ln \left \{ e^{-\beta H/2} \sigma e^{-\beta H/2}\frac{1}{\tr(e^{-\beta H}\sigma)}\right \} \right] \nonumber \\ 
&= - \tr \left [ \tilde{\sigma} \left \{ \ln (e^{-\beta H/2} \sigma e^{-\beta H/2}) +\Lambda(\beta, \sigma) \right \} \right] \nonumber \\ 
&= - \tr \left [ \tilde{\sigma} \ln (e^{-\beta H/2} \sigma e^{-\beta H/2})\right ] -\Lambda(\beta, \sigma) \nonumber \\ 
&= - \tr \left [ \tilde{\sigma}(-\beta H+  \ln  \sigma) \right ] -\Lambda(\beta, \sigma) \nonumber \\ 
&=\beta \<H\>_{\tilde{\sigma}} - \tr [\tilde{\sigma}  \ln  \sigma]  -\Lambda(\beta, \sigma) \nonumber \\ 
&=\beta \<H\>_{\tilde{\sigma}} +S(\tilde{\sigma}) + S(\tilde{\sigma}||\sigma) -\Lambda(\beta, \sigma),
\end{align}
which implies that
\begin{equation}\label{effective-divergence}
\Lambda(\beta,\sigma) = \beta \<H\>_{\tilde{\sigma}} + S(\tilde{\sigma}||\sigma).
\end{equation}
Now since $\sigma = \D(\rho)$ we have that $\tr [ \Pi_k \sigma] = \tr [ \Pi_k \rho]$ and likewise it can be checked that $\tr [ \Pi_k \tilde{\sigma}] = \tr [ \Pi_k \tilde{\rho}]$, where we use that $\D(\Gamma(\rho)) = \Gamma(\D(\rho))$ for any $\rho$. Thus the preceding equation can be written
\begin{equation}
\Lambda(\beta, \D(\rho)) = \Lambda(\beta,\rho) = \beta \<H\>_{\tilde{\rho}} + S(\tilde{\sigma} ||\sigma).
\end{equation}
Restricting to $\rho = \psi$, a pure state, and using $p_k = \tr[\Pi_k \psi]$ and $\tilde{p}_k=\tr [ \Pi_k \tilde{\psi}]$ we have that
\begin{equation}
\Lambda(\beta,\psi) = \beta \<H\>_{\tilde{\p}} + S(\tilde{\p} ||\p),
\end{equation}
where we use that $\tilde{\sigma}$ commutes with $\sigma$ and so $S(\tilde{\sigma} ||\sigma)$ can be replaced with the classical divergence between the respective probability distributions over energy $S(\tilde{\p} ||\p)$. Finally, using equation (\ref{mean-coh}) we have that
\begin{equation}
\frac{1}{8\pi} \beta^2 \chi_m(\tilde{\psi}) =  \beta(\<H_S\>_{\p} - \<H_S\>_{\tilde{\p}}) - S(\tilde{\p} ||\p),
\end{equation}
as required.
\end{proof}
From this the fluctuation relation in Theorem \ref{thm:Fluctuation} can be stated in the following way.

\begin{theorem}\label{thm:RelativeEntropy}
Given the assumptions to Theorem \ref{thm:Fluctuation}, let $\p_k$ and $\tilde{\p}_k$ denote the probability distribution for an energy measurement of the state $|\psi_k \>$ and $|\tilde{\psi}_{k} \>$ respectively. Then
\begin{equation}
 \frac{P[\psi_1 | \tilde{\psi}_0]}{P[\psi_0^* | \tilde{\psi}_1^*]}  = e^{- \beta \Delta F - \beta  \tilde{W}_{S} }e^{S(\mathbf{\tilde{p}}_{0} ||\mathbf{p}_{0}) - S(\mathbf{\tilde{p}}_{1} || \mathbf{p}_{1})} ,
\end{equation}
where $ \tilde{W}_{S} := \< H_S \>_{\tilde{\psi}_{1}} - \< H_S \>_{\tilde{\psi}_{0}}  $ and $S(\mathbf{p}||\mathbf{q})$ is the Kullback-Leibler divergence.
\end{theorem}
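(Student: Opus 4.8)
The plan is to obtain this form of the fluctuation relation by direct substitution into the effective-potential version already established in Theorem \ref{thm:Fluctuation}, using the relative-entropy representation of $\Lambda(\beta,\psi)$ that was derived along the way in the proof of Theorem \ref{thm:mean-coherence}. Recall that Theorem \ref{thm:Fluctuation} gives
\begin{equation}
\frac{P[\psi_1 | \tilde{\psi}_0]}{P[\psi_0^* | \tilde{\psi}_1^*]} = e^{-\beta\Delta F - (\Lambda(\beta,\psi_1) - \Lambda(\beta,\psi_0))},
\end{equation}
so the entire task reduces to re-expressing the difference $\Lambda(\beta,\psi_1) - \Lambda(\beta,\psi_0)$ in terms of mean energies of Gibbs-rescaled states and Kullback--Leibler divergences.

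The key ingredient I would invoke is the identity
\begin{equation}
\Lambda(\beta,\psi) = \beta \langle H_S \rangle_{\tilde{\p}} + S(\tilde{\p} \,||\, \p),
\end{equation}
which appears as an intermediate step in the proof of Theorem \ref{thm:mean-coherence}. This in turn rests on two facts already available: first, the dephasing invariance $\Lambda(\beta,\rho) = \Lambda(\beta,\D(\rho))$ from Lemma \ref{lemma:EffectivePotential2}(1), which lets one replace $|\psi\>\<\psi|$ by its dephased version $\sigma = \D(|\psi\>\<\psi|)$ without changing $\Lambda$; and second, the observation that since $[\sigma,H_S]=0$, the Gibbs rescaling $\tilde{\sigma}$ commutes with $\sigma$, so the quantum relative entropy $S(\tilde{\sigma}||\sigma)$ collapses to the classical divergence $S(\tilde{\p}||\p)$ between the energy distributions $\p$ and $\tilde{\p}$.

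With this in hand I would simply take the difference. Writing the identity for both endpoints and subtracting gives
\begin{equation}
\Lambda(\beta,\psi_1) - \Lambda(\beta,\psi_0) = \beta\bigl(\langle H_S\rangle_{\tilde{\p}_1} - \langle H_S\rangle_{\tilde{\p}_0}\bigr) + S(\tilde{\p}_1||\p_1) - S(\tilde{\p}_0||\p_0).
\end{equation}
I would then identify the first bracket as $\tilde{W}_S = \langle H_S\rangle_{\tilde{\psi}_1} - \langle H_S\rangle_{\tilde{\psi}_0}$, and insert the whole expression (with an overall minus sign) into the exponent of the Theorem \ref{thm:Fluctuation} relation. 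After flipping the signs of the two divergence terms, the exponent becomes $-\beta\Delta F - \beta\tilde{W}_S + S(\tilde{\p}_0||\p_0) - S(\tilde{\p}_1||\p_1)$, which is exactly the claimed form.

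I do not anticipate a genuine obstacle here, as the argument is essentially bookkeeping once the relative-entropy representation of $\Lambda$ is granted. The only point that warrants a careful sentence is the reduction of the quantum relative entropy to the classical Kullback--Leibler divergence; this is legitimate precisely because dephasing and Gibbs rescaling commute (i.e. $\D(\Gamma(\rho)) = \Gamma(\D(\rho))$) and the two operators $\tilde{\sigma}$ and $\sigma$ are simultaneously diagonal in the energy eigenbasis, so no coherences contribute to the entropy. Everything else is the substitution and sign-tracking described above.
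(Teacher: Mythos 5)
Your proposal is correct and takes essentially the same route as the paper: the paper presents this theorem as an immediate consequence of the identity $\Lambda(\beta,\psi) = \beta \langle H_S\rangle_{\tilde{\p}} + S(\tilde{\p}\,||\,\p)$ established in the proof of Theorem \ref{thm:mean-coherence}, substituted directly into Theorem \ref{thm:Fluctuation} with the same sign-tracking you describe. The justification via dephasing invariance and the commutation of $\tilde{\sigma}$ with $\sigma$ (collapsing the quantum relative entropy to the classical Kullback--Leibler divergence) is likewise identical to the paper's reasoning.
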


\subsection{Variational expression for the effective potential}\label{Proof of Lemma III.3}

The following is the same as Lemma \ref{free-energy-like} in the main text.

\begin{lemma} Given a quantum system $S$ with Hamiltonian $H_S$ and $\{\Pi_k\}$ the projectors onto the energy eigenspaces of $H_S$. Let us denote the de-phased form of state in the energy basis  by $\D(\rho) = \sum_k \Pi_k \rho \Pi_k$. Then, for any full rank quantum state $\rho$, the effective potential $\Lambda(\beta, \rho)$ is given by
\begin{equation}
\Lambda(\beta, \rho) = \min_\tau \{   \beta \<H_S\>_\tau + S(\tau||\D(\rho)) \},
\end{equation}
where the minimization is over all quantum states $\tau$ of $S$, and $S(\tau||\rho) = \tr [ \tau \log \tau - \tau \log \rho]$ is the relative entropy function. Moreover, the minimization is attained for the state $\tau = \Gamma(\D(\rho))$. 
\end{lemma}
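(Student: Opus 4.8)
The plan is to reduce the statement to two ingredients already established in Lemma \ref{lemma:EffectivePotential2} — the dephasing invariance of the effective potential and its Legendre dual form — and then to read off the optimizer directly from the duality in Theorem \ref{Theorem:Legendre}. Since $\D(\rho)$ automatically commutes with $H_S$, almost all of the work has effectively been done already; the only care needed is in justifying the logarithm of $\D(\rho)$.

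First I would invoke part 1 of Lemma \ref{lemma:EffectivePotential2}, which gives $\Lambda(\beta,\rho) = \Lambda(\beta,\D(\rho))$ for any state $\rho$, because $e^{-\beta H_S}$ sees only the diagonal energy blocks. This replaces the original state by its dephased version $\D(\rho) = \sum_k \Pi_k \rho \Pi_k$, which by construction satisfies $[\D(\rho),H_S]=0$. Moreover, since $\rho$ is full rank, every diagonal matrix element $\<k|\rho|k\>$ is strictly positive, so $\D(\rho)$ is full rank and $\log \D(\rho)$ is well defined — this is precisely the point at which the full-rank hypothesis is used.

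Next I would apply Theorem \ref{Theorem:Legendre} with the choices $A = -\beta H_S$ and $Y = \D(\rho)$. The key step is that, because $[\D(\rho),H_S]=0$, the noncommutative exponent collapses, $e^{A+\log Y} = e^{-\beta H_S}\D(\rho)$, so that the left-hand side of the theorem reads $\log \tr(e^{-\beta H_S}\D(\rho)) = -\Lambda(\beta,\D(\rho)) = -\Lambda(\beta,\rho)$. Negating the resulting maximum converts the right-hand side into the claimed minimum,
\begin{equation}
\Lambda(\beta,\rho) = \min_\tau \{ \beta \<H_S\>_\tau + S(\tau\,||\,\D(\rho)) \},\nonumber
\end{equation}
taken over all states $\tau$. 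This is exactly part 4 of Lemma \ref{lemma:EffectivePotential2} applied to $\D(\rho)$, which serves as a reassuring consistency check.

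Finally, to establish the attainment claim I would use the explicit optimizer of the Legendre duality: the maximizing $X$ in Theorem \ref{Theorem:Legendre} is the Gibbs-like state $X^\star = e^{A+\log Y}/\tr(e^{A+\log Y})$, which here equals $e^{-\beta H_S}\D(\rho)/\tr(e^{-\beta H_S}\D(\rho))$. Splitting $e^{-\beta H_S}$ symmetrically and using $[\D(\rho),H_S]=0$ once more, this is precisely $\Gamma(\D(\rho))$, the Gibbs rescaling of the dephased state, so the minimum is attained at $\tau = \Gamma(\D(\rho))$ as asserted. The main obstacle worth flagging is therefore not any deep inequality but the well-definedness of $\log \D(\rho)$ together with the collapse $e^{A+\log Y}=e^{-\beta H_S}\D(\rho)$: it is the commutation of $\D(\rho)$ with $H_S$ that makes the noncommutative exponent in Theorem \ref{Theorem:Legendre} reduce to the simple product whose trace is the effective potential.
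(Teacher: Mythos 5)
Your proposal is correct, and for the main equality it follows essentially the same path as the paper: invoke dephasing invariance $\Lambda(\beta,\rho)=\Lambda(\beta,\D(\rho))$ (part 1 of Lemma \ref{lemma:EffectivePotential2}), note that full rank of $\rho$ makes $\D(\rho)$ full rank and that $[\D(\rho),H_S]=0$, and then apply the Petz duality of Theorem \ref{Theorem:Legendre} with $A=-\beta H_S$, $Y=\D(\rho)$ --- which is exactly how the paper obtains part 4 of Lemma \ref{lemma:EffectivePotential2}, the result you correctly recognise you are reproducing. Where you genuinely diverge is the attainment claim. The paper does not extract the optimizer from the duality at all; it verifies attainment by substitution, using the separately derived identity $\Lambda(\beta,\sigma)=\beta\<H_S\>_{\tilde{\sigma}}+S(\tilde{\sigma}||\sigma)$ with $\sigma=\D(\rho)$ and $\tilde{\sigma}=\Gamma(\sigma)$ (equation (\ref{effective-divergence}), obtained by directly expanding the entropy of $\tilde{\sigma}$), and then compares this with the variational minimum. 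You instead read off the maximizer $X^\star=e^{A+\ln Y}/\tr\,(e^{A+\ln Y})$ from the Legendre duality and collapse it to $\Gamma(\D(\rho))$ using commutation. That route is cleaner, but be aware that Theorem \ref{Theorem:Legendre} as stated in the paper gives only the \emph{value} of the maximum, not the maximizer, so you are invoking an unproved strengthening. It is standard and costs one line --- writing $\tr(XA)-S(X||Y)=\ln\tr(e^{A+\ln Y})-S\big(X\,||\,e^{A+\ln Y}/\tr(e^{A+\ln Y})\big)$ shows the maximum is attained uniquely at $X^\star$ by non-negativity and faithfulness of the relative entropy --- but you should include that line (or verify attainment by direct substitution, as the paper does) to make the argument self-contained.
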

\begin{proof}
From Lemma \ref{lemma:EffectivePotential2}, we see that for any full rank $\rho$ we have
\begin{align}
\hspace{-0.5cm}\Lambda(\beta, \rho) &= \Lambda(\beta, \D(\rho) ) \\
 &=  \min_\tau [  \beta \< H_S\>_\tau   + S(\tau||\D(\rho))].\nonumber
\end{align}
However equation (\ref{effective-divergence}) says
\begin{equation}
\Lambda(\beta, \rho) =\Lambda (\beta , \sigma) = \beta \<H_S\>_{\tilde{\sigma}} + S(\tilde{\sigma}||\sigma),
\end{equation}
where $\sigma = \D(\rho)$. Comparing the two equations we see that the minimization is realised for $\tau = \tilde{\sigma}= \Gamma (\sigma) = \Gamma(\D(\rho))$ as claimed.

\end{proof}

\subsection{Coherent work and the quantum fluctuation relation.}\label{app:coherent work}

We now prove the coherent work form of the coherent fluctuation theorem and provide an account of how it naturally connects with the resource theory of asymmetry, inducing a natural measure for coherent work. However, first we require the conditions in which the moments of a probability distribution uniquely determine the distribution.

\begin{theorem}[\cite{BillingsleyProbability}]
Let $\mu$ be a probability measure on the line having finite moments $m_k = \int_{-\infty}^\infty x^k \mu(dx)$ of all orders. If the power series $\sum_{k = 1}^\infty \frac{r^k }{k!} m_k$ has positive radius of convergence, then $\mu$ is the unique probability measure with the moments $m_1,m_2,\dots$.
\end{theorem}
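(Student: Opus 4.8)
The plan is to prove moment-determinacy by showing that the stated hypothesis forces the characteristic function of $\mu$ to extend holomorphically to a horizontal strip about the real axis, so that its Taylor coefficients at the origin --- which are exactly the moments, up to factors of $i^k$ --- pin the function down everywhere, after which the classical uniqueness theorem for characteristic functions finishes the argument.

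First I would convert the hypothesis into a growth bound on the absolute moments $\beta_k := \int_{-\infty}^\infty |x|^k \, \mu(dx)$. Positive radius of convergence $R$ of $\sum_k \frac{r^k}{k!} m_k$ means $\limsup_k (|m_k|/k!)^{1/k} = 1/R < \infty$, so for any $0 < R' < R$ there is a constant $C$ with $|m_k| \le C (R')^{-k}\, k!$ for all $k$. The even moments coincide with absolute ones, $\beta_{2k} = m_{2k}$, and Cauchy--Schwarz gives $\beta_{2k+1} \le \beta_{2k}^{1/2}\beta_{2k+2}^{1/2}$, so a bound of the same geometric-times-factorial type (with possibly slightly smaller radius) holds for every $\beta_k$. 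Consequently $\sum_k \frac{s^k}{k!}\beta_k < \infty$ for all small $s$; equivalently $\int e^{s|x|}\,\mu(dx) < \infty$ on a neighbourhood of the origin.

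Next I would use this integrability to define $\varphi(z) := \int_{-\infty}^\infty e^{izx}\,\mu(dx)$ for complex $z$ with $|\operatorname{Im} z| < R''$ (for a suitable $R'' > 0$) and show it is holomorphic there. For fixed $z$ in the strip the integrand is dominated by $e^{|\operatorname{Im} z|\,|x|}$, which is $\mu$-integrable by the previous step, so $\varphi$ is well-defined; holomorphy follows from Morera's theorem together with Fubini, the contour integral of $\varphi$ over any closed triangle vanishing once the order of integration is interchanged, which the same domination licenses. Expanding $e^{izx} = \sum_k \frac{(izx)^k}{k!}$ and interchanging sum and integral --- again justified by absolute convergence of $\sum_k \frac{|z|^k}{k!}\beta_k$ --- yields $\varphi(z) = \sum_k \frac{(iz)^k}{k!} m_k$, so the moments are precisely the Taylor data of $\varphi$ at $0$.

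Finally I would close the argument: if a second probability measure $\nu$ shares all the moments $m_k$, then its characteristic function $\psi$ extends holomorphically to the same strip (the hypothesis depends only on the $m_k$) and has the identical power series at the origin, so $\varphi \equiv \psi$ on the strip by the identity theorem, in particular $\varphi = \psi$ on the real axis; the standard uniqueness theorem for characteristic functions then gives $\mu = \nu$. I expect the main obstacle to be the second paragraph, namely bridging the hypothesis --- phrased in terms of the signed moments $m_k$ --- to the bound on the absolute moments $\beta_k$ that analyticity actually requires, since this is where the Cauchy--Schwarz interpolation between consecutive even moments does the real work and where the radius of the strip is fixed; the holomorphy and uniqueness steps are then routine applications of Morera/Fubini and the classical inversion theorem.
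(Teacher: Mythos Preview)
The paper does not actually prove this statement: it is quoted verbatim as a classical result from Billingsley and used as a black box in the proof of Theorem~\ref{thm:CoherentWorkFT}. So there is no ``paper's own proof'' to compare against.

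That said, your proposal is correct and is essentially the standard argument one finds in Billingsley. The key step you identify --- passing from a growth bound on the signed moments $m_k$ to one on the absolute moments $\beta_k$ via $\beta_{2k}=m_{2k}$ and the Cauchy--Schwarz interpolation $\beta_{2k+1}\le(\beta_{2k}\beta_{2k+2})^{1/2}$ --- is exactly right and yields $\int e^{s|x|}\,\mu(dx)<\infty$ for small $s>0$. From there the analytic continuation of the characteristic function to a strip, the identification of its Taylor coefficients with the moments, and the appeal to the identity theorem plus uniqueness of characteristic functions are routine, as you note. Nothing is missing.
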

When the full set of moments uniquely define the probability distribution, then so too do the cumulants. This follows since in such cases the moments can be expressed in terms of the cumulants. Furthermore, finite moments correspond to finite cumulants since once can obtain the cumulants from a recursive formula that is polynomial in the moments. Thus the full set of cumulants uniquely determines a distribution when the moment generating function of the distribution has a positive radius of convergence. When the moment generating function is finite and non-zero, then the cumulant generating function is finite and non-zero as it is obtained by taking the logarithm.

In the result that follows, we assume the moment generating function and by extension the effective potential, have infinite radius of convergence and therefore always exist and uniquely specify the distribution. This choice is for simplicity in the result, though one could generalise it to hold for any positive radius of convergence. Requiring an infinite radius of convergence is a restriction on the states permitted. For example, the negative binomial distribution $\text{NB}(r,p)$ has finite radius of convergence and therefore any state with such a distribution would not be included in the statement. This can be seen from the moment generating function $M(t)$ for a random variable $\hat{X} \sim \text{NB}(r,p)$, with \cite{Lukacs}
\begin{equation}
M(t) = \left(\frac{p}{1 - (1-p)e^t}\right)^r,
\end{equation}
which is valid for $t <  - \ln(1-p)$.

\begin{theorem}
Let $S$ be a quantum system with Hamiltonian $H_S$ and let $|\psi_0\>_S$ and $|\psi_1\>_S$ be two pure states of $S$ that have energy statistics with finite cumulants of all orders. Then the following are equivalent:
\begin{enumerate}
\item The pure states $|\psi_0\>_S$ and $|\psi_1\>_S$ are coherently connected with coherent work output/input $\omega$ on an auxiliary system $A$.
\item Within the fluctuation relation context with a thermal system $B$ and quantum system $S$ we have that
\begin{equation}
\frac{P[\psi_1 | \tilde{\psi}_0]}{P[\psi^*_0 | \tilde{\psi}^*_1]}   = e^{-\beta(\Delta F \pm kT\Lambda(\beta,\omega))},
\end{equation} 
for all inverse temperatures $\beta \ge 0$, for some state $|\omega\>_A$ with finite cumulants on an auxiliary system $A$ and some choice of sign before $\Lambda(\beta, \omega)$.
\end{enumerate}
 
\end{theorem}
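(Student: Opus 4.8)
The plan is to reduce the stated equivalence to the decomposability theorem (Theorem \ref{thm:Decomposable}) by exploiting the fact, already noted in the main text, that the effective potential is (minus) a cumulant generating function. The crucial simplification is dimensional: since $\beta = (kT)^{-1}$ we have $\beta\, kT = 1$, so the exponent on the right-hand side of statement (2) collapses to $-\beta \Delta F \mp \Lambda(\beta,\omega)$. Comparing this with the effective-potential fluctuation theorem (Theorem \ref{thm:Fluctuation}), which already guarantees
\begin{equation}
\frac{P[\psi_1 | \tilde{\psi}_0]}{P[\psi_0^* | \tilde{\psi}_1^*]} = e^{-\beta\Delta F - (\Lambda(\beta,\psi_1) - \Lambda(\beta,\psi_0))},
\end{equation}
one sees that statement (2) is \emph{equivalent} to the purely state-theoretic functional identity
\begin{equation}
\Lambda(\beta,\psi_1) - \Lambda(\beta,\psi_0) = \pm\,\Lambda(\beta,\omega) \qquad \text{for all } \beta \ge 0 .
\end{equation}
Thus the theorem follows once I show that this identity, for some auxiliary state $|\omega\>_A$ and some choice of sign, is equivalent to coherent connectedness of $|\psi_0\>$ and $|\psi_1\>$.

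For the direction $(1)\Rightarrow(2)$ I would start from a coherent work process realising the connection, say $|\psi_0\>_S \stackrel{\omega}{\longrightarrow} |\psi_1\>_S$ (the work-\emph{input} case is identical and produces the opposite sign). Theorem \ref{thm:Decomposable} then gives the random-variable decomposition $\hat{X}_0 = \hat{X}_1 + \hat{W}$ into the independent energy variables of $|\psi_1\>$ and $|\omega\>$. Since $\Lambda(\beta,\psi)$ is the negative cumulant generating function evaluated at $-\beta$, and cumulant generating functions are additive over independent summands, this at once yields $\Lambda(\beta,\psi_0) = \Lambda(\beta,\psi_1) + \Lambda(\beta,\omega)$, i.e. the functional identity with the minus sign; substituting back into Theorem \ref{thm:Fluctuation} reproduces exactly statement (2).

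The substantive direction is $(2)\Rightarrow(1)$. Assuming the functional identity, I would expand each effective potential in its cumulant series, $\Lambda(\beta,\psi) = -\sum_{n \ge 1}\tfrac{(-1)^n\beta^n}{n!}\kappa_n(\hat{X})$, which is legitimate because every state is assumed to have finite cumulants of all orders with infinite radius of convergence. Matching coefficients of $\beta^n$ on the two sides gives $\kappa_n(\hat{X}_0) = \kappa_n(\hat{X}_1) + \kappa_n(\hat{W})$ for every $n$ (in the $-$ case). These are precisely the cumulants of the convolution of the laws of $\hat{X}_1$ and $\hat{W}$ taken independent, so by the moment uniqueness theorem cited above (\cite{BillingsleyProbability}) the convergence hypothesis forces the distributional identity $\hat{X}_0 \stackrel{d}{=} \hat{X}_1 + \hat{W}$ as an honest decomposition into independent random variables. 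Feeding this into the constructive half of Theorem \ref{thm:Decomposable} yields an energy-conserving isometry realising $|\psi_0\>_S \stackrel{\omega}{\longrightarrow} |\psi_1\>_S$ whose coherent work output has the same energy distribution as the given $|\omega\>$; by uniqueness of the coherent work output (Appendix \ref{thm:uniqueness}) the two states agree up to phases, which do not affect $\Lambda$, establishing coherent connectedness. The $+$ sign is handled symmetrically and produces instead a process $|\psi_1\> \stackrel{\omega}{\longrightarrow} |\psi_0\>$ with $\omega$ the coherent work input.

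The main obstacle I anticipate is precisely the passage from the analytic identity of effective potentials on the half-line $\beta \ge 0$ to a genuine distributional decomposition of $\hat{X}_0$: matching Taylor coefficients at $\beta = 0$ only delivers the cumulant statement, and upgrading it to equality of laws is exactly where the finite-radius-of-convergence hypothesis is indispensable, via the uniqueness theorem. One must also check that the candidate summands $(\hat{X}_1,\hat{W})$ may be realised as independent variables on a common energy ladder so that Theorem \ref{thm:Decomposable} applies verbatim — but this is automatic, since the proof of that theorem constructs the requisite systems and isometry from any such independent decomposition.
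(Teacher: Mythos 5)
Your proposal is correct and follows essentially the same route as the paper's own proof: both directions reduce to the identity $\Lambda(\beta,\psi_0) = \Lambda(\beta,\psi_1) \mp \Lambda(\beta,\omega)$ via Theorem \ref{thm:Fluctuation}, with the forward implication obtained from additivity of the effective potential under energy-conserving dynamics and the converse obtained from the moment-uniqueness theorem together with the constructive half of Theorem \ref{thm:Decomposable}. Your term-by-term cumulant matching is just a mild rephrasing of the paper's argument, which instead equates the joint energy distributions of $|\psi_0\>\otimes|\omega\>$ and $|\psi_1\>\otimes|0\>$ directly from equality of their cumulant generating functions.
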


\begin{proof}
First assume that $|\psi_0\>_S$ and $|\psi_1\>_S$ are coherently connected states. Suppose there is a coherent work process transforming $|\psi_0 \>_S \stackrel{\omega} \longrightarrow |\psi_1 \>_S$. Without loss of generality we can choose the initial state $|0\>_A$ of the auxiliary system $A$ to have zero energy.

By unitary invariance of the effective potential we have that
\begin{align}
\Lambda(\beta, |\psi_0\>_S \otimes |0\>_A ) &=\Lambda(\beta, V(|\psi_0\>_A \otimes |0\>_A)) \nonumber \\
&=\Lambda (\beta, |\psi_1\>_A \otimes |\omega\>_A),
\end{align}
where $V$ is a unitary that realises the process. This conserves energy and so commutes with $e^{-\beta (H_S \otimes \mathds{1}_A + \mathds{1}_S \otimes H_A)}$. Since the effective potential is additive for independent Hamiltonians and product states, we have that
\begin{equation}\label{coherent work effective potential}
\Lambda(\beta, \psi_0) = \Lambda(\beta,\psi_1) + \Lambda ( \beta, \omega).
\end{equation}
Therefore the effective potential change in the fluctuation theorem can be expressed in terms of the coherent work output $|\omega\>$ as
\begin{equation}
e^{\Lambda(\beta, \psi_0) - \Lambda(\beta, \psi_1)} = e^{\Lambda(\beta, \omega)}.
\end{equation}
and so the fluctuation theorem can be expressed as
\begin{equation}
\frac{P[\psi_1|\tilde{\psi}_0]}{P[\psi_0^*|\tilde{\psi}_1^*]}   = e^{\Lambda(\beta, \omega) - \beta \Delta F} = \frac{e^{-\beta \Delta F}}{\< \omega| e^{-\beta H_A} |\omega\>}.
\end{equation}
If instead $|\psi_1 \>_S \stackrel{\omega} \longrightarrow |\psi_0 \>_S$, then the same argument applies with the only difference being the sign in front of $\Lambda(\beta, \omega)$.

Conversely, suppose the above statement 2 holds for all inverse temperatures $\beta \ge 0$, for some pure state $|\omega\>_A$ on a system $A$, and choice of sign in the exponent. The exponent in Theorem \ref{thm:Fluctuation} implies that we have 
\begin{equation}
\Lambda(\beta, \psi_0) = \Lambda(\beta,\psi_1) \mp \Lambda ( \beta, \omega),
\end{equation}
 for all inverse temperatures $\beta \in [0,\infty)$. In the event of the minus sign, we write this as
 \begin{align}
\Lambda(\beta, \psi_0) + \Lambda ( \beta, \omega) &= \Lambda(\beta, |\psi_0\>_A \otimes |\omega\>_A) \nonumber \\
&= \Lambda(\beta,|\psi_1\>_S \otimes |0\>_A),
\end{align}
where we can assume that $A$ has a zero energy eigenstate $|0\>_A$. However, since all the above cumulant generating functions have infinite radius of convergence and are finite, the energy distributions of measuring $|\psi_0\>_S$, $|\psi_1\>_S$ and $|\omega\>_A$ are uniquely determined. Moreover the distributions of a joint energy measurement of composite system $SA$ in the state  $|\psi_0\>_S \otimes |\omega\>_A$ and $|\psi_1\>_S \otimes |0\>_A$ are identical. This implies that
\begin{equation}
\hat{S}_0 + \hat{A} = \hat{S}_1,
\end{equation}
where $\hat{S}_0, \hat{S}_1, \hat{A}$ are the associated random variables for measuring $|\psi_0\>_S, |\psi_1\>_S, |\omega\>_A$ respectively in their energy eigenbases. By Theorem \ref{thm:Decomposable} this implies there exists a non-trivial coherent work process 
\begin{equation}
|\psi_1\> \stackrel{\omega}{\longrightarrow} |\psi_0\>,
\end{equation}
and so the states are coherently connected. In the event of the plus sign in the exponent we instead write
\begin{equation}
\Lambda(\beta, \psi_0) = \Lambda(\beta,\psi_1) + \Lambda ( \beta, \omega),
\end{equation}
and then express the argument of the left-hand side using $|\psi_0\>_S \otimes |0\>_A$ as before, using the additivity of the effective potential for the left-hand side. The same argument as before then implies the existence of a coherent work process
\begin{equation}
|\psi_0\> \stackrel{\omega}{\longrightarrow} |\psi_1\>,
\end{equation}
and again the states are coherently connected, which completes the proof.

\end{proof}

\subsection{Proof of semi-classical fluctuation theorem}

\begin{theorem}[\textbf{Semi-classical relation }\cite{Zoe}]
Let the assumptions of Theorem \ref{thm:Fluctuation} hold, and let $S$ be a harmonic oscillator with Hamiltonian $H_S = h\nu (a^\dagger a + \frac{1}{2})$. Then for two coherent states of the system $|\alpha_0\>$ and $|\alpha_1\>$, the following holds
\begin{align}
\frac{P[\alpha_1|\tilde{\alpha}_0]}{P[\alpha_0^*| \tilde{\alpha}_1^*]}  &=  \exp \left [ -\frac{\Delta F }{k T}+ \frac{\bar{W}_B}{h \nu_{\rm \tiny{th}}}  \right],
\end{align}
where  $h \nu_{th} = \< H_S \>_\gamma$ is the average energy of a Gibbs state $\gamma$ of a quantum harmonic oscillator, related to the thermal de Broglie wavelength $\lambda_{\rm \tiny{dB}} (T)$ via
\begin{equation}
h \nu_{\rm \tiny{th}} = \frac{h^2}{m \lambda_{\rm \tiny{dB}}(T)^2} + \frac{1}{2} h\nu,
\end{equation}
and $\bar{W}_B := - \frac{1}{2}(W_S + \tilde{W}_S)$, $W_S = \<H_S\>_{\alpha_1} - \<H_S\>_{\alpha_0}$, $\tilde{W}_S = \<H_S\>_{\tilde{\alpha}_1} - \<H_S\>_{\tilde{\alpha}_0}$.
\end{theorem}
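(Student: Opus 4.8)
The plan is to obtain the statement as a direct specialisation of the effective-potential fluctuation theorem (Theorem~\ref{thm:Fluctuation}) to coherent states, using the explicit form of $\Lambda(\beta,\alpha)$ quoted in the main text. Theorem~\ref{thm:Fluctuation} already gives
\begin{equation}
\frac{P[\alpha_1|\tilde{\alpha}_0]}{P[\alpha_0^*|\tilde{\alpha}_1^*]} = e^{-\beta \Delta F - [\Lambda(\beta,\alpha_1) - \Lambda(\beta,\alpha_0)]},
\end{equation}
so the whole task reduces to rewriting the effective-potential difference $\Lambda(\beta,\alpha_1)-\Lambda(\beta,\alpha_0)$ in the symmetric form $-\bar{W}_B/(h \nu_{\rm \tiny{th}})$, since $-\beta\Delta F = -\Delta F/(kT)$ already matches the first term.

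First I would insert the coherent-state effective potential $\Lambda(\beta,\alpha) = \tfrac{1}{2}\beta h\nu + \frac{1}{h\nu}(\<H_S\>_\alpha - \tfrac{1}{2}h\nu)(1-e^{-\beta h\nu})$ and observe that on taking the difference the vacuum term $\tfrac{1}{2}\beta h\nu$ and the constant $-\tfrac{1}{2}h\nu$ both cancel, leaving $\Lambda(\beta,\alpha_1)-\Lambda(\beta,\alpha_0) = \frac{W_S}{h\nu}(1-e^{-\beta h\nu})$ with $W_S = \<H_S\>_{\alpha_1}-\<H_S\>_{\alpha_0} = h\nu(|\alpha_1|^2-|\alpha_0|^2)$. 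The key intermediate step is to identify the Gibbs-rescaled state: applying $e^{-\beta H_S/2}$ to a coherent state and renormalising shows that $|\tilde{\alpha}\> = \Gamma(|\alpha\>\<\alpha|)$ is again a coherent state, with rescaled amplitude $\tilde{\alpha} = \alpha\,e^{-\beta h\nu/2}$ and hence $|\tilde{\alpha}|^2 = |\alpha|^2 e^{-\beta h\nu}$. This immediately yields $\tilde{W}_S = h\nu(|\alpha_1|^2-|\alpha_0|^2)e^{-\beta h\nu} = W_S\,e^{-\beta h\nu}$, so that $\bar{W}_B = -\tfrac{1}{2}(W_S+\tilde{W}_S) = -\tfrac{1}{2}W_S(1+e^{-\beta h\nu})$.

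The final step is purely algebraic. Using the standard oscillator identity
\begin{equation}
h \nu_{\rm \tiny{th}} = \<H_S\>_\gamma = \tfrac{h\nu}{2}\coth\!\left(\tfrac{\beta h\nu}{2}\right) = \frac{\tfrac{1}{2}h\nu(1+e^{-\beta h\nu})}{1-e^{-\beta h\nu}},
\end{equation}
one checks that $\bar{W}_B/(h \nu_{\rm \tiny{th}}) = -\frac{W_S}{h\nu}(1-e^{-\beta h\nu}) = -[\Lambda(\beta,\alpha_1)-\Lambda(\beta,\alpha_0)]$, which on substitution into the exponent gives exactly the claimed relation. The de Broglie form of $h \nu_{\rm \tiny{th}}$ then follows by writing $h \nu_{\rm \tiny{th}} - \tfrac{1}{2}h\nu = h\nu/(e^{\beta h\nu}-1)$ and identifying this thermal excitation energy with $h^2/(m\lambda_{\rm \tiny{dB}}(T)^2)$, which defines $\lambda_{\rm \tiny{dB}}(T)$ in this setting.

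I expect no serious obstacle here: the result is essentially a substitution once the coherent-state structure is exploited. The only points requiring care are the identification of $\Gamma$ acting on a coherent state as another coherent state with amplitude $\alpha\,e^{-\beta h\nu/2}$, and the bookkeeping of the $\coth$ identity, which is what converts the asymmetric factor $(1-e^{-\beta h\nu})$ coming from $\Delta\Lambda$ into the temperature-symmetric combination $\bar{W}_B/(h \nu_{\rm \tiny{th}})$ that makes the classical Crooks structure manifest.
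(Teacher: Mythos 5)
Your proposal is correct and takes essentially the same route as the paper's own proof: both specialise Theorem \ref{thm:Fluctuation} using the coherent-state effective potential $\Lambda(\beta,\alpha)=\tfrac{1}{2}\beta h\nu+|\alpha|^2\left(1-e^{-\beta h\nu}\right)$, the fact that Gibbs rescaling maps $|\alpha\>$ to the coherent state $|\alpha e^{-\beta h\nu/2}\>$, and the identity $h\nu_{\rm th}=\tfrac{h\nu}{2}\left(1+e^{-\beta h\nu}\right)/\left(1-e^{-\beta h\nu}\right)$ to convert $\Delta\Lambda$ into $-\bar{W}_B/(h\nu_{\rm th})$. The only differences are directional bookkeeping (you verify $\bar{W}_B/(h\nu_{\rm th})=-\Delta\Lambda$ rather than massaging $\Delta\Lambda$ into $(W_S+\tilde{W}_S)/(2h\nu_{\rm th})$) and the de Broglie identification, which you treat as a definition while the paper defers it to \cite{Zoe}; neither affects correctness.
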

\begin{proof}

Consider a system with Hamiltonian $H_S = h \nu (a^\dagger a + \frac{1}{2})$. For simplicity, we take an arbitrary state $|\alpha , k \> \in \mathcal{C}$ and assume $k = 0$, as the extension is simple but provides little benefit. The effective potential for a coherent state $|\alpha\>$:
\begin{align}
\Lambda ( \beta, \alpha ) &= \frac{\beta h \nu }{2} + |\alpha|^2 (1- e^{- \beta h \nu } ) .
\end{align}
Now consider the average energy of a thermally distributed quantum harmonic oscillator
\begin{align}
\< H_S \>_\gamma &= \frac{\sum_{n \geq 0} h\nu  (n + \frac{1}{2}) e^{-\beta h \nu (n +\frac{1}{2})}}{\sum_{m \geq 0} e^{-\beta h \nu (m +\frac{1}{2})}}  \\
&= \frac{h \nu}{2} \left( \frac{1 + e^{-\beta h \nu}}{1 - e^{-\beta h \nu} } \right).
\end{align}
Let us define a thermal frequency $h \nu_{th} := \< H_S \>_\gamma$. Then the effective potential change can be expressed as
\begin{align}
\Lambda(\beta , \alpha_1) - \Lambda(\beta ,\alpha_0) &= (|\alpha_1|^2 - |\alpha_0|^2) (1 - e^{-\beta h \nu}) \\ 
    &= \frac{ \nu}{2 \nu_{th}} (|\alpha_1|^2 - |\alpha_0|^2)(1+ e^{-\beta h \nu}).
\end{align}
Since the Gibbs rescaled coherent state $|\tilde{\alpha} \> = |\alpha e^{-\beta h \nu/2} \>$, the change in effective potential can be expressed in terms of the unscaled and rescaled coherent state energy transfers,
\begin{align}
 W_S &= \< H_S \>_{\alpha_1} - \< H_S \>_{\alpha_0}  \\
\tilde{W}_S &= \< H_S \>_{\tilde{\alpha}_1} - \< H_S \>_{\tilde{\alpha}_0},
\end{align}  
as
\begin{equation}
\Lambda(\beta , \alpha_1) - \Lambda(\beta ,\alpha_0) = \frac{1}{h \nu_{th}}\left(\frac{W_S + \tilde{W}_S}{2} \right).
\end{equation}   
which gives the claimed result upon insertion into Theorem \ref{thm:Fluctuation}. For the breakdown of $h \nu_{th}$ in terms of $\lambda_{dB}(T)$, we refer to \cite{Zoe}.
\end{proof}

For completeness, we can also show the form of the fluctuation theorem using the coherent work representation. Consider an auxillary weight system $A$ with Hamiltonian $H_A = h \nu a^\dagger a$. Once again, this system exists as a hypothetical process in which $|\psi_0\> \stackrel{\omega}\longrightarrow |\psi_1\>$ under a coherent work process. We note that
\begin{equation}
e^{\Lambda(\beta,\omega)} = \frac{\< \alpha_1 |e^{-\beta H_S} |\alpha_1 \>}{\< \alpha_0 | e^{-\beta H_S} |\alpha_0 \>}.
\end{equation}
From the semi-classical result Theorem \ref{thm:Semi Classical}, we know that $|\alpha_0 \>_S \stackrel{\omega} \longrightarrow |\alpha_1 \>_S$ under a coherent work process if and only if $|\alpha_1| \leq |\alpha_0|$. Assuming this condition to be true, then we can simply write down:
\begin{equation}
\frac{P[\alpha_1|\tilde{\alpha}_0]}{P[\alpha_0^*|\tilde{\alpha}_1^*]}   = e^{|\alpha|^2 (1 - e^{-\beta h \nu}) - \beta \Delta F}
\end{equation}
where $|\alpha|^2 = |\alpha_0|^2 - |\alpha_1|^2$ and $||\alpha|\>$ is the coherent work output of the hypothetical process. Owing to the simple form of the effective potential for coherent states, this is fairly trivial to write down. However generally, the effective potential fails to  obtain such a simplified form, and simplifying the change in effective potential $\Lambda(\beta,\psi_1) - \Lambda(\beta,\psi_0)$ may be harder than finding the analytic form for $\Lambda (\beta , \omega)$.

\subsection{Macrosopic limit of the quantum fluctuation relation}

For simplicity, we choose to make the random variable dimensionless. We have that $\<\psi_i |H_i |\psi_i\> = \mu$, which is independent of $i$. We define the operator $X_i :=\frac{1}{\mu}H_i - \I$, which has mean zero in the state $ |\psi_i\>$. Note that the variance of $X_i$ is also dimensionless and given by
\begin{align}
\<X_i^2\> - \<X_i\>^2 &= \<X_i^2\> \\
&= \< \frac{1}{\mu^2} H_i^2 - \frac{2}{\mu} H_i + \I \>\\
&= \frac{1}{\mu^2}\<H_i^2\> - 2 + 1 \\
&=\frac{1}{\mu^2}( \<H_i^2\> - \mu^2) \\
\text{Var}(X_i)&= \frac{\sigma^2}{\mu^2} \equiv \tilde{\sigma}^2.
\end{align}

The cumulant generating function of the total state is
\begin{align}
e^{-\Lambda(\beta, \psi_n)} &= \tr [ e^{-\beta \sum_i H_i} \psi^{\otimes n}] \\
&= \tr [ e^{-\beta \mu \sum_i (X_i + \I)} \psi^{\otimes n}] \\
&= e^{-\beta n \mu} \tr [ e^{-\beta \mu \sum_i X_i } \psi^{\otimes n}] \\
&= e^{-\beta n \mu} \tr [ e^{-\beta \sqrt{n} \mu \sum_i \frac{1}{\sqrt{n}}X_i } \psi^{\otimes n}] \\
&= e^{-\beta n \mu} e^{-\tilde{\Lambda}(\beta \sqrt{n} \mu, \psi_n)}.
\end{align}
Here $\tilde{\Lambda}$ is the CGF corresponding to the dimensionless mean zero observable $\frac{1}{\sqrt{n}} (X_1 + \dots + X_n)$. Thus we have that
\begin{equation}
\Lambda(\beta, \psi_n) = \beta n \mu + \tilde{\Lambda}(\beta \sqrt{n} \mu, \psi_n),
\end{equation}
for all $\beta, n$.
\\
\indent
A subtlety comes in here -- if one wants to take a limit  and obtain a Central Limit Theorem (CLT) statement we need to scale $\beta$ also in order to get a non-trivial statement. This is a mathematical re-formulation so as to maintain the correct scaling to obtain a non-trivial statement the mean value of the sequence (in $n$) of random variables.
\\
\indent
Since $\beta$ has units of reciprocal energy we can set $\beta = \frac{\beta_0}{\sqrt{n} \lambda}$, for some arbitrary energy scale $\lambda$, where $\beta_0$ is dimensionless and we count in energy units of $\sqrt{n}\lambda$ when we have $n$ systems. It might look like the inverse temperature is decreasing as we increase the number of systems, but this is not what is intended by this -- in the description of the sequence of systems one scales the units of energy as we increase $n$ when we specify the temperature. Note also that the first and second cumulants scale differently with energy, and it is this that makes the scaling trick needed to obtain a statement containing both in the large $n$ limit.
\\
\indent
Expressed in these units we have
\begin{equation}
\Lambda(\frac{\beta_0}{\sqrt{n}\lambda}, \psi_n) = \beta_0 \mu_0 \sqrt{n}  + \tilde{\Lambda}(\beta _0\mu_0, \psi_n),
\end{equation}
where $\mu_0:=\frac{\mu}{\lambda}$ is dimensionless, and $\beta_0$ does not vary in $n$.

This holds for all $n$, and in the $n \rightarrow \infty$ limit we have that
\begin{equation}
\lim_{n\rightarrow \infty} \left (\Lambda(\frac{\beta_0}{\sqrt{n}\lambda}, \psi_n) - \beta_0\mu_0(\psi)\sqrt{n} \right)  = - \frac{1}{2} \tilde{\sigma}_\psi^2 \beta_0^2\mu_0(\psi)^2,
\end{equation}
and also
\begin{equation}
\lim_{n\rightarrow \infty} \left (\Lambda(\frac{\beta_0}{\sqrt{n}\lambda}, \phi_n) - \beta_0\mu_0(\phi)\sqrt{n} \right)  = - \frac{1}{2} \tilde{\sigma}_\phi^2 \beta_0^2\mu_0(\phi)^2,
\end{equation}
where we have included the state variables to distinguish the energies of the two states.

This implies that for any $\epsilon >0$ there is an integer $M$ such that for all $n >M$ we have the following two conditions hold
\begin{align}
\left | \Lambda(\frac{\beta_0}{\sqrt{n}\lambda}, \psi_n) - [\beta_0\mu_0(\psi)\sqrt{n}   - \frac{1}{2} \sigma_\psi^2 \beta_0^2 ]\right | &\le \epsilon \\
\left | \Lambda(\frac{\beta_0}{\sqrt{n}\lambda}, \phi_n) - [\beta_0\mu_0(\phi)\sqrt{n}   - \frac{1}{2} \sigma_\phi^2 \beta_0^2 ]\right | &\le \epsilon,
\end{align}
where we also used the definition of $\tilde{\sigma}^2$.
We therefore have that for any $\epsilon >0$ there is an $M>0$ such that for all $n > M$ we have
\begin{equation}
|\Lambda(\frac{\beta_0}{\sqrt{n}\lambda} ,\psi_n) - \Lambda(\frac{\beta_0}{\sqrt{n}\lambda}, \phi_n)  - [\beta_0 \sqrt{n} \Delta \mu_0 - \frac{1}{2}\beta_0^2\Delta \sigma_0^2] | \le 2\epsilon.
\end{equation}

Thus, for any $\epsilon >0$ there is an $M$ such that for all $n >M$ we have
\begin{align}
\left |  \frac{P[\psi_n | \phi_n]}{P[\phi^*_n | \psi^*_n]}  - e^{-\sqrt{n} (\beta_0 \Delta f_0 +\beta_0 \Delta \mu_0) + \frac{1}{2}\beta_0^2\Delta \sigma_0^2} \right | \le \epsilon
\end{align}
where $\Delta f_0 := \frac{\Delta F}{n\lambda}$ is a dimensionless number. We could put back in $\beta$, but only with the proviso that it is assumed to scale with $n$ in the way stated. This gives
\begin{align}
\left |  \frac{P[\psi_n | \tilde{\phi}_n]}{P[\phi^*_n | \tilde{\psi}^*_n]}  - e^{-n (\beta \Delta f +\beta \Delta \mu - \frac{1}{2}\beta^2\Delta \sigma^2)} \right | \le \epsilon
\end{align}
with the proviso that the scaling $\beta = \frac{\beta_0}{\sqrt{n} \lambda}$ is adopted. This captures the informal statement that for the IID case the states ``become more like Gaussians'' and thus the fluctuation theorem behaves as
\begin{equation}
 \frac{P[\psi_n | \tilde{\phi}_n]}{P[\phi^*_n | \tilde{\psi}^*_n]}  \stackrel{n \rightarrow \infty}\sim e^{-n (\beta \Delta f +\beta \Delta \mu - \frac{1}{2}\beta^2\Delta \sigma^2)}.
\end{equation}

\end{document}